\newcommand{\R}{\mathbb{R}}
\newcommand{\C}{\mathbb{C}}
\newcommand{\ket}[1]{| #1 \rangle}
\newcommand{\bra}[1]{\langle #1|}
\newcommand{\proj}[1]{| #1 \rangle \langle #1 |}
\newcommand{\bracket}[3]{\langle #1|#2|#3 \rangle}
\newcommand{\sm}[1]{\left( \begin{smallmatrix} #1 \end{smallmatrix} \right)}
\DeclareMathOperator{\poly}{poly}
\DeclareMathOperator{\sgn}{sgn}
\DeclareMathOperator{\tr}{tr}
\DeclareMathOperator{\spann}{span}
\DeclareMathOperator{\spec}{spec}
\DeclareMathOperator{\diag}{diag}
\newcommand{\heff}{H_{\operatorname{eff}}}
\newcommand{\helse}{H_{\operatorname{else}}}
\renewcommand{\Re}{\operatorname{Re}}
\newcommand{\ptime}{\mathsf{P}}
\newcommand{\nptime}{\mathsf{NP}}
\newcommand{\bqptime}{\mathsf{BQP}}
\newcommand{\arthurmerlin}{\mathsf{AM}}
\newcommand{\merlinarthur}{\mathsf{MA}}
\newcommand{\qma}{\mathsf{QMA}}
\newcommand{\stoqma}{\mathsf{StoqMA}}
\newcommand{\tim}{\mathsf{TIM}}
\newcommand{\sham}{{\sc $\mathcal{S}$-Hamiltonian}}
\newcommand{\shamlf}{{\sc $\mathcal{S}$-Hamiltonian with local terms}}
\newcommand{\be}{\begin{equation}}
\newcommand{\ee}{\end{equation}}
\newcommand{\bea}{\begin{eqnarray}}
\newcommand{\eea}{\end{eqnarray}}
\newcommand{\bes}{\begin{equation*}}
\newcommand{\ees}{\end{equation*}}
\newtheorem*{rep@theorem}{\rep@title}
\newcommand{\newreptheorem}[2]{%
\newenvironment{rep#1}[1]{%
 \def\rep@title{#2 \ref{##1} (restated)}%
 \begin{rep@theorem}}%
 {\end{rep@theorem}}}
\newtheorem{thm}{Theorem}
\newtheorem*{thm*}{Theorem}
\newtheorem{cor}[thm]{Corollary}
\newtheorem{lem}[thm]{Lemma}
\newtheorem*{lem*}{Lemma}
\newtheorem{prop}[thm]{Proposition}
\newtheorem{dfn}[thm]{Definition}
\newtheorem{obs}[thm]{Observation}
\newtheorem{prob}[thm]{Problem}
\begin{document}

\title{Complexity classification of local Hamiltonian problems\footnote{An extended abstract of this work appeared in the proceedings of FOCS 2014.}}
\author{Toby Cubitt\thanks{Department of Computer Science, University College London, UK; {\tt t.cubitt@ucl.ac.uk}.}
\and
Ashley Montanaro\thanks{School of Mathematics, University of Bristol, UK; {\tt ashley.montanaro@bristol.ac.uk}. Some of this work was done while the authors were at the University of Cambridge.}}
\maketitle

\begin{abstract}
The calculation of ground-state energies of physical systems can be formalised as the {\sc $k$\nobreakdash-local Hamiltonian} problem, which is a natural quantum analogue of classical constraint satisfaction problems. One way of making the problem more physically meaningful is to restrict the Hamiltonian in question by picking its terms from a fixed set $\mathcal{S}$, and scaling them by arbitrary weights. Examples of such special cases are the Heisenberg and Ising models from condensed-matter physics.

In this work we characterise the complexity of this problem for all 2\nobreakdash-local qubit Hamiltonians. Depending on the subset $\mathcal{S}$, the problem falls into one of the following categories: in $\ptime$; $\nptime$-complete; polynomial-time equivalent to the Ising model with transverse magnetic fields; or $\qma$-complete. The third of these classes has been shown to be $\stoqma$-complete by Bravyi and Hastings. The characterisation holds even if $\mathcal{S}$ does not contain any 1\nobreakdash-local terms; for example, we prove for the first time $\qma$-completeness of the Heisenberg and XY interactions in this setting. If $\mathcal{S}$ is assumed to contain all 1\nobreakdash-local terms, which is the setting considered by previous work, we have a characterisation that goes beyond 2\nobreakdash-local interactions: for any constant $k$, all $k$\nobreakdash-local qubit Hamiltonians whose terms are picked from a fixed set $\mathcal{S}$ correspond to problems either in $\ptime$; polynomial-time equivalent to the Ising model with transverse magnetic fields; or $\qma$-complete.

These results are a quantum analogue of the maximisation variant of Schaefer's dichotomy theorem for boolean constraint satisfaction problems.
\end{abstract}



\section{Introduction}


Constraint satisfaction problems (CSPs) are ubiquitous in computer science and have been intensively studied since the early days of complexity theory.
A beautiful and surprising result in this area is the dichotomy theorem of Schaefer~\cite{schaefer78}, which {\em completely} classifies the complexity of boolean constraint satisfaction problems of a certain form. These problems can all be considered special cases of a general problem $\mathcal{S}$-CSP, where $\mathcal{S}$ is a set of constraints, each of which is a boolean function on a fixed number of bits. An instance of the problem is described by a sequence of these constraints, applied to different subsets of input bits. The task is to determine whether all the constraints can be simultaneously satisfied. For example, the 3-SAT problem fits into this class: here the constraints are disjunctions of up to 3 input bits, or their negations. Schaefer's result states that if $\mathcal{S}$ is one of a particular family of types of constraints, $\mathcal{S}$-CSP is in $\ptime$; otherwise, $\mathcal{S}$-CSP is $\nptime$-complete. This result is particularly remarkable given Ladner's theorem~\cite{ladner75} that, assuming $\ptime \neq \nptime$, there must be an infinite hierarchy of complexity classes between $\ptime$ and $\nptime$.

Schaefer's dichotomy theorem has subsequently been generalised and sharpened in a number of directions. In particular, Creignou~\cite{creignou95} and Khanna, Sudan and Williamson~\cite{khanna97} have completely characterised the complexity of the maximisation problem {\sc $k$-Max-CSP} for boolean constraints. Here we are again given a system of constraints, but the goal is to maximise the number of constraints we can satisfy. An example problem of this kind is MAX-CUT. A recent monograph of Creignou, Khanna and Sudan~\cite{creignou01} has much more on this subject.

A natural quantum generalisation of constraint satisfaction problems is provided by the {\sc $k$\nobreakdash-local Hamiltonian} problem~\cite{kitaev02}.
A $k$\nobreakdash-local Hamiltonian is a Hermitian matrix $H$ on the space of $n$ qubits which can be written as
\[ H = \sum_i H^{(i)}, \]
where each $H^{(i)}$ acts non-trivially on at most $k$ qubits, i.e.\ it is of the form $H^{(i)} = h\otimes I$ where $h$ is a Hamiltonian on at most $k$ qubits.

\begin{dfn}[{\sc $k$\nobreakdash-local Hamiltonian}]
\label{dfn:klham}
The (promise) problem {\sc $k$\nobreakdash-local Hamiltonian} is defined as follows. We are given a $k$\nobreakdash-local Hamiltonian $H = \sum_{i=1}^m H^{(i)}$ on $n$ qubits with $m = \poly(n)$. Each $H^{(i)}$ satisfies $\|H^{(i)}\| = \poly(n)$ and its entries are specified by $\poly(n)$ bits. We are also given two rational numbers $a<b$ of $\poly(n)$ digits such that $b-a \ge 1/\poly(n)$, and promised that the smallest eigenvalue of $H$ is either at most $a$, or at least $b$. Our task is to determine which of these two possibilities is the case.
\end{dfn}


{\sc $k$\nobreakdash-local Hamiltonian} is a direct generalisation of {\sc $k$-Max-CSP}; the classical problem is the special case where each matrix $H^{(i)}$ is diagonal in the computational basis and only contains 0's and 1's.  Just as {\sc $k$-Max-CSP} is $\nptime$-complete for $k \ge 2$, {\sc $k$\nobreakdash-local Hamiltonian} is $\qma$-complete for $k \ge 2$~\cite{kempe06}, where $\qma$ (quantum Merlin-Arthur) is the quantum analogue of $\nptime$~\cite{kitaev02}. (We give a formal definition of this class in Appendix~\ref{app:classes}.) If a problem is $\qma$-complete, this is good evidence that there is unlikely to be a polynomial-time algorithm (whether classical or quantum) to solve it.


As well as the intrinsic mathematical interest of this noncommutative generalisation of constraint satisfaction problems, a major motivation for this area is applications to physics. Indeed, the classical connection between constraint satisfaction and physics goes back at least as far as Barahona's work proving $\nptime$-hardness of cases of the Ising model~\cite{barahona82}. One of the most important themes in condensed-matter physics is calculating the ground-state energies of physical systems\footnote{In practice, one might often actually like to determine some more complicated property of the ground state; however, calculating the energy is a reasonable starting point.}; this is essentially an instance of {\sc $k$\nobreakdash-local Hamiltonian}. This connection to physics motivates the study of the $\qma$-hardness (or otherwise) of {\sc $k$\nobreakdash-local Hamiltonian} with restricted types of interactions, with the aim being to prove $\qma$-hardness of problems of more direct physical interest, rather than the somewhat unnatural interactions that may occur in the general {\sc $k$\nobreakdash-local Hamiltonian} problem. This is the quantum analogue of the classical programme of proving $\nptime$-hardness of constraint satisfaction problems where the constraints are picked from a restricted set $\mathcal{S}$. One can also consider {\sc $k$\nobreakdash-local Hamiltonian} with restrictions on the interaction geometry (for example, taking all interactions to be 2\nobreakdash-local on a planar lattice).


In particular, it is known that {\sc 2\nobreakdash-local Hamiltonian} remains $\qma$-complete if:
\begin{itemize}
\item the Hamiltonian $H$ is of the Heisenberg form with arbitrary local magnetic fields,
\[ H = \sum_{(i,j) \in E} X_i X_j + Y_i Y_j + Z_i Z_j + \sum_k \alpha_k X_k + \beta_k Y_k + \gamma_k Z_k, \]
where $\alpha_k$, $\beta_k$, $\gamma_k$ are arbitrary coefficients and $E$ is the set of edges of a 2-dimensional square lattice~\cite{oliveira08,schuch09};

\item the Hamiltonian $H$ is of the form~\cite{biamonte08}
\[ H = \sum_{i<j} J_{ij} X_i X_j + K_{ij} Z_i Z_j + \sum_k \alpha_k X_k + \beta_k Z_k, \]
or
\[ H= \sum_{i<j} J_{ij} X_i Z_j + K_{ij} Z_i X_j + \sum_k \alpha_k X_k + \beta_k Z_k, \]
\end{itemize}
where $J_{ij}$, $K_{ij}$, $\alpha_k$, $\beta_k$ are arbitrary coefficients. These results determine the complexity of various special cases of the following general problem, which we call \sham.

\begin{dfn}[\sham]
Let $\mathcal{S}$ be a fixed finite set of Hermitian matrices such that each matrix in $\mathcal{S}$ acts on at most $k=O(1)$ qubits, and acts nontrivially on all of these qubits. The {\sc $\mathcal{S}$-Hamiltonian} problem is the special case of {\sc $k$\nobreakdash-local Hamiltonian} where, for each $i$, there exists $\alpha_i \in \R$ such that $\alpha_i H^{(i)} \in \mathcal{S}$. That is, the overall Hamiltonian $H$ is specified by a sum of matrices $H^{(i)}$, each of which acts non-trivially on at most $k$ qubits, and whose non-trivial part is proportional to a matrix picked from $\mathcal{S}$.
\end{dfn}

We then have the following general question:

\begin{prob}
\label{prob:main}
Given $\mathcal{S}$, characterise the computational complexity of {\sc $\mathcal{S}$-Hamiltonian}.
\end{prob}

We will essentially completely resolve this question in the case where every matrix in $\mathcal{S}$ acts on at most 2 qubits. Before we state our results, we observe the following important points about this problem:

\begin{itemize}
\item In general, we assume that, given a set of interactions $\mathcal{S}$, we are allowed to produce an overall Hamiltonian by applying each interaction $M \in \mathcal{S}$ scaled by an arbitrary real weight, which can be either positive or negative. This contrasts with classical constraint satisfaction problems, where usually weights are restricted to be positive.

\item We assume that we are allowed to apply the interactions in $\mathcal{S}$ across any choice of subsets of the qubits. That is, the interaction pattern is not constrained by any spatial locality, planarity or symmetry considerations. (Note that classically, it is common to allow one constraint in a CSP to take as input multiple copies of the same variable. But this seems less meaningful from a physical perspective, and for quantum CSPs it's not clear if it makes sense at all. So we do not allow it here. Even classically, this distinctness requirement can make it more difficult to prove hardness for families of CSPs~\cite{khanna97}.)

\item Some of the interactions in $\mathcal{S}$ could be non-symmetric under permutation of the qubits on which they act; for example, it could make a difference whether we apply $M \in \mathcal{S}$ across qubits $(1,2)$ or qubits $(2,1)$. We assume that we are allowed to apply such interactions to any permutation of the qubits.

\item We can always assume without loss of generality that the identity matrix $I \in \mathcal{S}$, as adding an arbitrarily weighted identity term (energy shift) does not change the hardness of the problem.
\end{itemize}

Making these assumptions will allow us to give a precise classification of the complexity of \sham; the price paid is that the problem instances considered are potentially less physically meaningful (for example, containing terms with polynomially large weights, with both positive and negative signs, and with interactions across large distances). Finding a full characterisation of \sham\ with additional restrictions on the form of the Hamiltonians considered seems to be a very challenging task. However, sometimes (see Section~\ref{sec:results} below) we are nevertheless able to classify the complexity of \sham\ even when restricted to more physically realistic Hamiltonians.

A number of interesting special cases of {\sc $k$\nobreakdash-local Hamiltonian} which do not exactly fit into the {\sc $\mathcal{S}$-Hamiltonian} framework have also been studied. The problem remains $\qma$-complete for bosonic~\cite{wei10} and fermionic~\cite{liu07a} systems. In another direction, it has been shown by Bravyi et al.~\cite{bravyi06} that {\sc $k$\nobreakdash-local Hamiltonian} is in the complexity class $\arthurmerlin$ if the Hamiltonian is restricted to be {\em stoquastic}. A stoquastic Hamiltonian has all off-diagonal entries real and non-positive in the computational basis. Such Hamiltonians are of particular interest as they occur in a wide variety of physical systems, and also in the quantum adiabatic algorithm for SAT~\cite{farhi00} and certain claimed implementations of quantum computation~\cite{johnson11}. Monte-Carlo calculations for stoquastic Hamiltonians do not suffer from the sign problem~\cite{bravyi14}, so the intuition from physics is that such Hamiltonians are easier to simulate classically than the general case. Indeed, as $\arthurmerlin$ is in the polynomial hierarchy, it is considered unlikely that {\sc $k$\nobreakdash-local Hamiltonian} with stoquastic Hamiltonians is $\qma$-complete. This result was subsequently sharpened by Bravyi, Bessen and Terhal~\cite{bravyi06a}, who showed that this problem is $\stoqma$-complete, where $\stoqma$ is a complexity class which sits between $\merlinarthur$ and $\arthurmerlin$; we include a formal definition in Appendix~\ref{app:classes}. On the other hand, approximating the {\em highest} eigenvalue of a stoquastic Hamiltonian is $\qma$-complete~\cite{jordan10}. Bravyi and Vyalyi~\cite{bravyi05} proved that {\sc $k$\nobreakdash-local Hamiltonian} is in $\nptime$ for 2\nobreakdash-local Hamiltonians with commuting terms, and this has been extended recently by Hastings~\cite{hastings11} to further classes of commuting Hamiltonians.


\subsection{Statement of results}
\label{sec:results}

We begin by considering a special case of the problem, which we call \shamlf\ and is defined as follows.

\begin{dfn}
\shamlf\ is the special case of \sham\ where $\mathcal{S}$ is assumed to contain the Pauli matrices $\{X,Y,Z\}$.
\end{dfn}

As the Pauli matrices span the space of 1-qubit Hermitian matrices, this is equivalent to assuming that $\mathcal{S}$ contains all 1-qubit Hermitian matrices. That is, in the \shamlf\ problem we are given access to all 1\nobreakdash-local terms for free: the overall Hamiltonian is formed by taking a sum of terms from $\mathcal{S}$, each with an arbitrary positive or negative weight, then adding arbitrary 1\nobreakdash-local terms. For any $\mathcal{S}$, \shamlf\ is at least as difficult as \sham, because it is a generalisation. It is therefore easier to prove $\qma$-hardness of cases of \shamlf. All previous proofs of $\qma$-hardness of special cases of {\sc $k$\nobreakdash-local Hamiltonian} which we are aware of~\cite{kempe06,biamonte08,oliveira08,schuch09} actually prove $\qma$-hardness of \shamlf\ for various sets $\mathcal{S}$. Here we are able to characterise the complexity of this problem when $\mathcal{S}$ contains arbitrary matrices on up to $k$ qubits, for arbitrary $k=O(1)$.

We first need to define a notion of local diagonalisation. Let $M$ be a $k$-qubit Hermitian matrix. We say that $U \in SU(2)$ {\em locally diagonalises} $M$ if $U^{\otimes k} M (U^{\dag})^{\otimes k}$ is diagonal. We say that $U$ locally diagonalises $\mathcal{S}$ if $U$ locally diagonalises $M$ for all $M \in \mathcal{S}$. Note that matrices in $\mathcal{S}$ may act on different numbers of qubits, so can be of different sizes. To gain some intuition for why this notion should be relevant, observe that if we have an instance $H$ of \sham\ which is a sum of many terms acting on overlapping sets of qubits, and if there is a $U$ that locally diagonalises $\mathcal{S}$, then all the terms in $H$ can be diagonalised simultaneously by $U^{\otimes n}$. On the other hand, if all the matrices in $\mathcal{S}$ are simultaneously diagonalisable but are not locally diagonalisable, it is not necessarily obvious how to diagonalise $H$.

We also need a notion of what it means to remove the 1\nobreakdash-local part of a $k$-local Hamiltonian. We will only need this for qubits, so we restrict the definition to this case for simplicity. Since the Pauli matrices together with the identity matrix form a complete basis for $2\times 2$ matrices, any $k$-qubit matrix $M$ can be decomposed as a weighted sum of tensor products of Pauli and identity matrices. ``Removing the 1\nobreakdash-local part'' of $M$ means the matrix produced by deleting all 1\nobreakdash-local terms from its Pauli decomposition.

We are now ready to state our first main result.

\begin{thm}
\label{thm:shamlf}
Let $\mathcal{S}$ be an arbitrary fixed subset of Hermitian matrices on at most $k$ qubits, where $k=O(1)$. Let  $\mathcal{S}'$ be the subset formed by removing the 1\nobreakdash-local part from each element of $\mathcal{S}$, and then deleting all 0\nobreakdash-local matrices from the resulting set. Then:
\begin{itemize}
\item If $\mathcal{S}'$ is empty, \shamlf\ is in $\ptime$;
\item Otherwise, if there exists $U \in SU(2)$ such that $U$ locally diagonalises $\mathcal{S}'$, then \shamlf\ is $\stoqma$-complete.
\item Otherwise, \shamlf\ is $\qma$-complete. If every matrix in $\mathcal{S}'$ acts on 2 qubits, this holds even if we insist that the 2-qubit interactions in the final Hamiltonian are restricted to the edges of a 2d square lattice and all have equal weight.
\end{itemize}
\end{thm}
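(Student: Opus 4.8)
The plan is to treat the three cases separately, using throughout the observation that conjugating every qubit by a common $U\in SU(2)$ preserves the spectrum and maps \shamlf\ with interaction set $\mathcal{S}$ to the equivalent problem with set $U^{\otimes k}\mathcal{S}(U^{\dagger})^{\otimes k}$; this lets me work in whichever single-qubit basis is most convenient, for free. If $\mathcal{S}'$ is empty then every term of $H$ is a sum of $1$-local operators, so $H=\sum_i h_i$ with each $h_i$ supported on one qubit and $\lambda_{\min}(H)=\sum_i \lambda_{\min}(h_i)$, computable in polynomial time; hence membership in $\ptime$. In the remaining two cases membership in $\qma$ is immediate from Definition~\ref{dfn:klham}, so the work is to establish the two hardness claims and the matching upper bound for the middle case.

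For the middle case, fix a $U$ locally diagonalising $\mathcal{S}'$ and apply it globally, so that every element of $\mathcal{S}'$ becomes a diagonal operator, i.e.\ a real combination of products $Z_{i_1}\cdots Z_{i_j}$ with $j\ge 2$. Since $\mathcal{S}'$ is nonempty it contains such a term; pinning all but two of its qubits to computational-basis states with strong $1$-local fields leaves an effective $\lambda\,Z_iZ_j$ coupling, and the free $1$-local terms supply arbitrary transverse fields, so I can simulate the transverse-field Ising model and obtain $\tim$-hardness. Conversely, the full Hamiltonian has the form (diagonal, $\ge2$-local) $+$ (arbitrary $1$-local); reducing the higher-order diagonal terms to quadratic $Z_iZ_j$ couplings by standard (essentially classical) locality-reduction gadgets keeps the instance within the transverse-Ising form and places the problem in $\tim$. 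The sandwich $\nptime\subseteq\tim\subseteq\stoqma$ then follows because classical Ising, the zero-transverse-field special case, is $\nptime$-complete (\cite{barahona82}), while after rotating each qubit about its $Z$-axis to align every transverse field along $-X$ the Hamiltonian becomes stoquastic, hence lies in $\stoqma$ (\cite{bravyi06a}).

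The heart of the theorem is the third case. For $\qma$-hardness I would reduce to a known hard $2$-local interaction with local fields. Writing a genuinely $2$-local traceless term as $M=\sum_{a,b\in\{X,Y,Z\}}\beta_{ab}\,\sigma_a\otimes\sigma_b$, the free global rotation acts by $\beta\mapsto R\beta R^{T}$ with $R\in SO(3)$, and one checks that $M$ is locally diagonalisable precisely when $\beta$ is symmetric of rank $1$; every other normal form (such as $XX+YY$, $XX+YY+ZZ$, $XX-YY$, antisymmetric Dzyaloshinskii--Moriya pieces, and the non-symmetric cases) must be shown, by a gadget simulation using weighted copies of $M$ and free $1$-local terms, to generate either the Heisenberg interaction $XX+YY+ZZ$ or the XY interaction $XX+YY$, both of which are $\qma$-complete with local fields (\cite{oliveira08,schuch09} and \cite{biamonte08}, the latter via the local equivalence of $XX+YY$ with $XX+ZZ$). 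The preliminary reduction of the general $k$-local case to this $2$-local classification is itself delicate: naive pinning can destroy non-diagonalisability (for $M=X_1X_2+Y_2Y_3$ every $2$-local pinning is diagonal), so one must instead pin and then \emph{recombine} the $2$-local pieces obtained from different qubit assignments of the same interaction onto a common pair, thereby rebuilding a non-diagonalisable $2$-local term (in that example, $X_1X_2$ and $Y_1Y_2$ on one pair, giving XY). A separate argument is also needed for sets that fail to be diagonalisable only jointly rather than term by term.

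I expect the main obstacle to be exactly these gadget simulations: proving that each non-diagonalisable normal form, using only arbitrarily weighted copies of itself together with free single-qubit terms, perturbatively reproduces one of the two canonical $\qma$-hard interactions, while controlling the error terms in the effective-Hamiltonian analysis and correctly disposing of sign and mediator-qubit issues (for instance realising the equivalence $XX-YY\cong XX+YY$ by restricting couplings to a bipartite graph and conjugating one part by $X$). The final strengthening---that when every element of $\mathcal{S}'$ is $2$-local the hardness survives restriction to equal-weight couplings on the edges of a $2$d square lattice---requires re-engineering these gadgets to respect the lattice geometry and to use a single interaction strength, following the perturbative-gadget philosophy of \cite{oliveira08} but removing the topology and weight freedoms exploited in the general construction.
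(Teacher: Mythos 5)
Your architecture tracks the paper's closely at the 2-local level and in the two easy cases: the $\ptime$ case is identical; the $\tim$ case uses the same ingredients (pinning to extract a $Z^{\otimes 2}$ coupling for hardness, the rank-one-symmetric characterisation of local diagonalisability, Biamonte's reduction of $\ell$-body diagonal terms to 2-body ones for membership --- the paper cites precisely this for your ``standard locality-reduction gadgets'', though it additionally needs the ancilla/convexity trick of Lemma~\ref{lem:zzhamlf} to turn residual 1-local $Z$ fields into $ZZ$ couplings, a step you elide); and your $\qma$-hardness plan of mediator gadgets driving each non-diagonalisable normal form to a known hard family is the paper's plan too. One mild route difference: you target Heisenberg or XY with local fields, whereas the paper reduces each normal form ($XX+\gamma ZZ$, $XX+\beta YY+\gamma ZZ$, $(X+\gamma Z)^{\otimes 2}$, $XZ-ZX$) directly to $\{XX,ZZ\}$-with-local-terms (Lemma~\ref{lem:xxandzzlf}, essentially Biamonte--Love/Schuch--Verstraete combined with Oliveira--Terhal's lattice construction, Theorem~\ref{thm:lattice}); your detour via $XX+YY\cong XX+ZZ$ is workable but less direct, and the equal-weight 2d-lattice strengthening you defer is exactly where the paper's Appendix~\ref{sec:qmalf} does careful bookkeeping.

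The genuine gap is the exhaustiveness argument for $k>2$ in the third case. You correctly spot that naive pinning can return only locally diagonalisable 2-local pieces, you verify the fix on the example $X_1X_2+Y_2Y_3$, and you flag that jointly-failing sets need a separate argument --- but you never establish that ``pin and recombine'' succeeds in general, and that is the one structural lemma of the $k$-local proof that does not follow from the 2-local machinery. What must be shown is the contrapositive: if a single $U\in SU(2)$ diagonalises the 2-local part of \emph{every} 2-qubit matrix extractable (via Lemma~\ref{lem:restrict}) from every element of $\mathcal{S}$, then $U$ locally diagonalises every element of $\mathcal{S}$ itself, so that the set falls into the second case; only then does failure of the second case guarantee a non-diagonalisable extracted 2-local set to which Proposition~\ref{prop:2shamlf} applies. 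The paper proves this by a tensor-factorisation argument: writing $H_S=\sum_w A_{(T)}^{(w)}\otimes\sigma_w$, the hypothesis forces the 2-local part of each $A_{(T)}^{(w)}$ to be proportional to $M^{\otimes 2}$ with $M=U^\dag ZU$, whence $H_S^{(d)}=M^{\otimes 2}\otimes N_T$ for \emph{every} 2-subset $T\subseteq S$; this product structure across all splits forces $H_S^{(d)}\propto M^{\otimes d}$, and summing over $S$ and $d$ shows $H$ is diagonalised by $U^{\otimes k}$. Your recombination heuristic gives no mechanism for ruling out a $k$-local term (or set of terms) all of whose extractions are diagonal in a common basis while the term itself is not locally diagonalisable; the lemma above is exactly the statement that no such term exists, and without it your classification is not known to be a trichotomy.
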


As the existence of a unitary $U$ required for the second condition to hold can be checked by solving a system of polynomial equations, $\mathcal{S}$ is finite, and every matrix in $\mathcal{S}$ is of finite size, it is clear that this classification can be performed in constant time independent of the problem instance. We give two alternative characterisations of diagonalisability by local unitaries in Appendix~\ref{app:dlu}, which imply more efficient algorithms. The alert reader may wonder why there is no $\nptime$-complete class in the above characterisation; this is because of the free 1\nobreakdash-local terms allowed.

The even more alert reader may also wonder at the presence of $\stoqma$ in the classification, given that we allow terms in $\mathcal{S}$ to be used with arbitrary weights with both signs. This implies that one can always produce a non-stoquastic Hamiltonian from any set $\mathcal{S}$ containing a non-diagonal matrix, even if all the elements of $\mathcal{S}$ have real non-positive off-diagonal entries, thus suggesting that stoquasticity is not a meaningful constraint in our setting. The $\stoqma$-complete class of interactions can in fact be understood as those special cases of \shamlf\ which are polynomial-time equivalent to the problem of approximating the lowest eigenvalue of a Hamiltonian in the general Ising model with transverse magnetic fields. This model (whose name we will shorten to ``transverse Ising model'') describes Hamiltonians of the form\footnote{This model is sometimes defined to additionally include single-qubit $Z$ terms~\cite{bravyi14b}. However, these can be generated by $ZZ$ interactions with an ancilla, as shown at the end of Section~\ref{sec:zzlf}.}
\be \label{eq:tim} H = \sum_{i < j} \alpha_{ij} Z_i Z_j + \sum_k \beta_k X_k. \ee
Such Hamiltonians have been much studied in mathematical physics and in particular occur in the quantum adiabatic algorithm for solving optimisation problems~\cite{farhi00}. In our terminology, the problem of determining the ground-state energy of Hamiltonians in the transverse Ising model up to inverse-polynomial precision is {\sc $\{ZZ,X\}$-Hamiltonian}.


Initial versions of this work characterised the second class in Theorem \ref{thm:shamlf} as problems interreducible with {\sc $\{ZZ,X\}$-Hamiltonian}, and introduced a corresponding new complexity class $\tim$. Bravyi and Hastings~\cite{bravyi14b} subsequently proved the following result:

\begin{thm}[Bravyi and Hastings~\cite{bravyi14b}]
\label{thm:bh}
{\sc $\{ZZ,X\}$-Hamiltonian} is $\stoqma$-complete.
\end{thm}

In other words, $\tim = \stoqma$. To gain some intuition for this class of Hamiltonians, observe that fixing $\beta_k = 0$ suffices to show that {\sc $\{ZZ,X\}$-Hamiltonian} is $\nptime$-hard, by the $\nptime$-hardness of the general Ising model. On the other hand, {\sc $\{ZZ,X\}$-Hamiltonian} can be seen to be in $\stoqma$ by the following argument. By conjugating any Hamiltonian $H$ of the form (\ref{eq:tim}) by local $Z$ operations on each qubit $k$ such that $\beta_k > 0$, which maps $X \mapsto -X$ and does not change the eigenvalues of $H$, $\beta_k$ can be assumed to be non-positive for all $k$. The resulting Hamiltonian has all off-diagonal entries non-positive, or in other words is stoquastic~\cite{bravyi06}, so $\text{\sc $\{ZZ,X\}$-Hamiltonian} \in \stoqma$. As $\stoqma$ is contained within the polynomial hierarchy, and in particular within the class $\arthurmerlin$~\cite{bravyi06}, it is unlikely that $\stoqma = \qma$. Thus, for sets $\mathcal{S}$ which fall into this second class, \shamlf\ is unlikely to be $\qma$-complete.




We can go further than Theorem~\ref{thm:shamlf}, and consider a setting where we do not necessarily have access to all (or any) 1-qubit matrices. In this case, we can still completely characterise the complexity of \sham\ for all sets $\mathcal{S}$ of 2-qubit Hermitian matrices, with a slightly more complicated classification.

\begin{thm}
\label{thm:sham}
Let $\mathcal{S}$ be an arbitrary fixed subset of Hermitian matrices on at most 2 qubits. Then:
\begin{itemize}
\item If every matrix in $\mathcal{S}$ is 1\nobreakdash-local, \sham\ is in $\ptime$;
\item Otherwise, if there exists $U \in SU(2)$ such that $U$ locally diagonalises $\mathcal{S}$, then \sham\ is $\nptime$-complete;
\item Otherwise, if there exists $U \in SU(2)$ such that, for each 2-qubit matrix $H_i \in \mathcal{S}$, $U^{\otimes 2} H_i (U^{\dag})^{\otimes 2} = \alpha_i Z^{\otimes 2} + A_i\otimes I + I\otimes B_i$, where $\alpha_i \in \R$ and $A_i$, $B_i$ are arbitrary single-qubit Hermitian matrices, then \sham\ is $\stoqma$-complete;
\item Otherwise, \sham\ is $\qma$-complete.
\end{itemize}
\end{thm}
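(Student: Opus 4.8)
The plan is to prove Theorem~\ref{thm:sham} by reducing it to the already-available machinery of Theorem~\ref{thm:shamlf}, while carefully accounting for the fact that we no longer have free 1-local terms. The overall strategy is a case analysis driven by the orbit of $\mathcal{S}$ under the local unitary action $H \mapsto U^{\otimes 2} H (U^\dagger)^{\otimes 2}$, since conjugating the whole Hamiltonian by a fixed $U^{\otimes n}$ does not change its spectrum and hence preserves the complexity of the problem.

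\textbf{The easy cases.} The first bullet (every matrix 1-local) is immediate: a sum of 1-local terms is block-diagonal across the qubits, so its ground energy is just the sum of the smallest eigenvalues of the single-qubit blocks, computable in polynomial time; this places the problem in $\ptime$. For the second bullet, I would first fix $U$ locally diagonalising $\mathcal{S}$ and conjugate, reducing to the case where every $M \in \mathcal{S}$ is diagonal in the computational basis. A diagonal Hamiltonian is classical, so its ground-state problem is a weighted CSP and lies in $\nptime$; the containment $\mathsf{MAX\text{-}CUT} \le \sham$ in this regime, which I would establish via a gadget simulating $Z_iZ_j$ interactions from a non-1-local diagonal element of $\mathcal{S}'$, gives $\nptime$-hardness. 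The subtlety here, and the reason the characterisation is more delicate than Theorem~\ref{thm:shamlf}, is that without free 1-local terms one must build the needed diagonal interactions purely from the 2-local diagonal matrices available, using weighting and permutation of qubits.

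\textbf{The transverse-Ising case.} For the third bullet I would again conjugate by the $U$ that brings every 2-qubit $H_i$ into the form $\alpha_i Z^{\otimes 2} + A_i I + I B_i$. Membership in $\tim$ follows from the same stoquasticity argument given in the excerpt for Theorem~\ref{thm:shamlf}: after removing the single-qubit pieces the genuinely 2-local content is diagonal in $Z$, so the whole problem reduces to $\{ZZ,X\}$-Hamiltonian; the residual single-qubit terms $A_i, B_i$ accumulate into arbitrary 1-local fields, and I would argue these can be absorbed/simulated within the transverse Ising model, placing the problem in $\tim \subseteq \stoqma$. For $\tim$-hardness I would exhibit a reduction from $\{ZZ,X\}$-Hamiltonian, constructing the $ZZ$ couplings from the $\alpha_i Z^{\otimes 2}$ part and, crucially, manufacturing effective transverse $X$ fields out of the single-qubit components $A_i,B_i$ that come attached to the 2-local terms — since we lack free 1-local terms, the $X$ field must be synthesised from whatever off-diagonal single-qubit pieces the matrices in $\mathcal{S}$ carry.

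\textbf{The $\qma$-complete case and the main obstacle.} The final bullet is where the real work lies. Here no single $U$ diagonalises $\mathcal{S}$ and no $U$ brings $\mathcal{S}$ into the transverse-Ising normal form, and I must show $\qma$-completeness directly from $\sham$ without the crutch of free 1-local terms. The natural route is to invoke the $\qma$-hardness of \shamlf\ from Theorem~\ref{thm:shamlf} and then show how to \emph{simulate} the missing 1-local terms using only 2-local interactions drawn from $\mathcal{S}$, for instance by applying a 2-local $M \in \mathcal{S}$ across an ancilla qubit fixed into a known state, or by a mediator/perturbation-gadget construction that effectively induces a 1-local field on the target qubit. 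The hard part will be proving that whenever $\mathcal{S}$ fails \emph{both} the diagonalisability and the transverse-Ising normal-form conditions, the available 2-local interactions are genuinely rich enough to generate (to within the needed inverse-polynomial accuracy, via gadgets) an interaction set that is $\qma$-hard in the free-1-local-term sense — in other words, a structural dichotomy showing that the only ways to \emph{avoid} $\qma$-hardness are precisely the first three bullets. I expect this to require a careful classification of the possible algebras generated by the off-diagonal structure of pairs of matrices in $\mathcal{S}$, together with explicit mediator-qubit gadgets establishing that one can effectively recover 1-local fields; this classification-plus-gadget step is the crux of the proof.
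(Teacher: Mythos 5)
Your handling of the first three bullets is compatible with the paper's proof in outline: case one is identical, case two is the paper's Lemma~\ref{lem:np} (your MAX-CUT gadget sketch matches its strategy, though the paper needs a careful full-rank analysis over the possible diagonal patterns), and case three is the paper's combination of Lemma~\ref{lem:zzhamlf} for containment with Lemmas~\ref{lem:extractlts2} and~\ref{lem:extractlts3} for hardness (the paper's version splits further into subcases according to whether the attached 1-local part commutes with $Z$ and whether $A_j = -B_j$, but your plan points in the right direction). The genuine gap is in the fourth bullet, and it is not a technicality. Your proposed mechanism --- synthesise the missing 1-local fields by applying some $M \in \mathcal{S}$ across an ancilla ``fixed into a known state'' (or via a mediator gadget), then invoke the $\qma$-hardness of \shamlf\ from Theorem~\ref{thm:shamlf} --- provably cannot work for the cases that are the heart of the theorem. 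Take $\mathcal{S} = \{XX+YY+ZZ\}$, which lies in the fourth bullet: every Hamiltonian built from this interaction commutes with $U^{\otimes n}$ for all $U \in SU(2)$, so every ground space, and every effective Hamiltonian generated by any perturbative gadget, inherits this global symmetry. In particular no ancilla can be fixed into a symmetry-breaking state, and no effective 1-local field can ever be produced: for instance, projecting a mediator pair onto the (nondegenerate but $SU(2)$-invariant) singlet and coupling the Heisenberg term across it yields an effective single-qubit operator proportional to $\sum_{i=1}^3 \tr(\sigma^i)\,\sigma^i = 0$. The same obstruction kills your approach for $\{XX+YY\}$ and $\{XZ-ZX\}$, so no ``classification plus recover-1-local-fields'' argument can close this case; the paper states this impossibility explicitly as the motivation for its construction.

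What the paper actually does is qualitatively different: it never recovers physical 1-local terms in the symmetric cases. Instead it encodes one logical qubit into a block of three physical qubits via Schur--Weyl duality (the logical space lives in $\mathcal{P}_{(2,1)} \otimes \mathcal{Q}_{(2,1)}$), observes that exchange interactions within a block give logical $X$ and $Z$ while interactions across blocks give logical terms of the form $(XX \text{ or } ZZ)_{13} \otimes (2F-I)_{24}$, and then eliminates the unwanted $(2F-I)$ tensor factor by strongly coupling the primed qubits with a Hamiltonian whose ground state is nondegenerate and whose two-point functions $\bracket{\psi}{(2F-I)_{i'j'}}{\psi}$ are non-zero and efficiently computable --- which forces the search for an exactly solvable model with these properties, resolved by the Lieb--Mattis model (Lemma~\ref{lem:liebmattis}). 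This yields Lemmas~\ref{lem:heisenberg}, \ref{lem:xy} and~\ref{lem:xzskew} (Sections~\ref{sec:heisenberg}--\ref{sec:skew}), from which the general rank-$\ge 2$ and rank-$1$ subcases of the fourth bullet follow via Lemmas~\ref{lem:xyz}, \ref{lem:extractlts} and~\ref{lem:extractltsskew}. Note that Lemma~\ref{lem:extractlts} does use a maximally-entangled-ground-state gadget, but in the direction opposite to yours: it strips \emph{off} an unwanted 1-local part $AI+IA$ already attached to a 2-local term, exploiting that the traceless 2-local Pauli components vanish against the maximally mixed reduced state --- it cannot create 1-local fields from a purely 2-local symmetric interaction. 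Your proposal is therefore missing the central idea of the proof: the logical-encoding construction and the exactly solvable model needed to make it work.
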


For 2\nobreakdash-local Hamiltonians, Theorem~\ref{thm:sham} is essentially a generalisation of Theorem~\ref{thm:shamlf}, except that in Theorem \ref{thm:shamlf} the $\qma$-complete cases are $\qma$-complete even on a square lattice. However, we state and prove them as two separate results as the proof techniques used are quite different.


In a sense, our result completely solves Kitaev's original qubit local Hamiltonian problem~\cite{kitaev02} for the case of two-body interactions (the most physically relevant case of the original qubit local-Hamiltonian problem). We highlight some interesting special cases, which are important models in mathematical physics.

\begin{itemize}
\item The {\bf general Heisenberg model} describes Hamiltonians of the following form:
\[ \sum_{i < j} \alpha_{ij} (X_i X_j + Y_i Y_j + Z_i Z_j). \]
In our terminology, this corresponds to {\sc $\{XX+YY+ZZ\}$-Hamiltonian}. By Theorem~\ref{thm:sham}, finding the ground-state energy of Hamiltonians in this model is $\qma$-complete. Prior to this work, this problem was not even known to be $\nptime$-hard. We stress that the $\alpha_{ij}$ coefficients are allowed to be independently positive or negative, while in physical systems one often restricts them to be either all positive (the {\em antiferromagnetic} case) or all negative (the {\em ferromagnetic} case); see Section~\ref{sec:realistic} for a further discussion of this point.

Schuch and Verstraete~\cite{schuch09} previously proved $\qma$-hardness of the Heisenberg model where arbitrary 1\nobreakdash-local terms are also allowed, or in other words $\qma$-hardness of {\sc $\{XX+YY+ZZ\}$-Hamiltonian with local terms}. The case where no local terms are allowed is particularly interesting because it displays a large amount of symmetry; indeed, the ground space of such a Hamiltonian on $n$ qubits must be invariant under conjugation by $U^{\otimes n}$ for arbitrary single-qubit unitaries $U$. Since the Heisenberg interaction is equivalent to projecting onto the two-qubit antisymmetric state (singlet), it can be viewed as a natural quantum generalisation of the MAX-CUT problem.

Theorem~\ref{thm:sham} also implies $\qma$-completeness of the XYZ model in condensed-matter physics, which corresponds to Hamiltonians of the form
\[ \sum_{i < j} \alpha_{ij} X_i X_j + \beta_{ij} Y_i Y_j + \gamma_{ij} Z_i Z_j, \]
and its special case, the XXZ model, with the constraint $\alpha_{ij} = \beta_{ij}$.

\item The {\bf general XY model} describes Hamiltonians of the following form:
\[ \sum_{i < j} \alpha_{ij} (X_i X_j + Y_i Y_j). \]
In the condensed-matter literature, this is often known as the XX model, and ``XY model'' is used for the more general case where the $X_i X_j$ and $Y_i Y_j$ terms can have different weights. When we refer to the XY model, we always mean the above special case. By Theorem~\ref{thm:sham}, calculating ground state energies in these models is $\qma$-complete. Biamonte and Love~\cite{biamonte08} previously proved $\qma$-completeness if arbitrary local $X$, $Y$ terms are allowed and the $X_i X_j$ and $Y_i Y_j$ terms can have different weights ($Y$ is relabelled to $Z$ in their work). It is worth remarking that, if the signs $\alpha_{ij}$ are restricted to be negative, the resulting Hamiltonian is stoquastic. Therefore, we see that $\qma$-completeness can be obtained from a simple special case of the stoquastic local Hamiltonian problem by allowing weights with varying signs.

%
%
\end{itemize}

Although the theory of $\qma$-completeness is now over a decade old~\cite{kitaev02}, the list of problems proven $\qma$-complete is still relatively short (see~\cite{bookatz14} for a recent review). One of the original motivations for Schaefer's dichotomy theorem~\cite{schaefer78} was to make $\nptime$-hardness proofs easier, by increasing the repertoire of $\nptime$-hard problems for use in reductions. We hope that our resolution of the complexity of \sham\ will be similarly useful to those wishing to prove $\qma$-hardness.


\subsubsection{Independent and subsequent work}

In independent work, Childs, Gosset and Webb~\cite{childs13} showed that the Bose-Hubbard model is $\qma$-complete. In proving this result, they showed that for Hamiltonians of the form
\[ \sum_{i \neq j, A_{ij}=1} X_i X_j + Y_i Y_j - \sum_{k,A_{kk}=1} Z_k, \]
where $A$ is the adjacency matrix of a graph, approximating the lowest eigenvalue restricted to a subspace with fixed expectation value of $Z^{\otimes n}$ (``magnetisation'') is $\qma$-complete. Their work thus showed that a variant of the {\sc $\{XX+YY,Z\}$-Hamiltonian} problem with an additional restriction to a subspace is $\qma$-complete, even if the non-zero coefficients of the terms are fixed to 1 (for $XX+YY$ terms) or $-1$ (for $Z$ terms). The same authors have subsequently shown that this even holds when there are no $Z$ terms in the Hamiltonian~\cite{childs15}. The problems studied by these authors contrast with \sham\ in that a problem instance is just a description of which qubits interact with which others. Here we have the additional freedom of choosing the weights of the interactions, but we do not allow any additional restriction to a specific subspace, so the two results are not directly comparable.

Recent work by Piddock and one of us~\cite{piddock15} has resolved some of the open questions from an earlier version of this paper, by showing that the antiferromagnetic XY and Heisenberg interactions are $\qma$-complete, and that many of the interactions proven $\qma$-complete here remain $\qma$-complete if restricted to a square lattice. For some interactions, these sign and geometry restrictions can be combined; for example, the antiferromagnetic XY interaction on a triangular lattice is $\qma$-complete.

Bravyi~\cite{bravyi14} has given a polynomial-time algorithm for approximating the ground-state energy of transverse Ising model Hamiltonians
\[ H = \sum_{i \neq j} \alpha_{ij} Z_i Z_j + \sum_k \beta_k X_k \]
in the ferromagnetic case where $\alpha_{ij} \le 0$ for all $i \neq j$. Finally, as mentioned above, Bravyi and Hastings~\cite{bravyi14b} sharpened the complexity classification in initial versions of this paper by proving that the {\sc 2\nobreakdash-local Hamiltonian} problem, restricted to transverse Ising model Hamiltonians, is actually $\stoqma$-complete.


\subsection{Proof techniques}

As is typical for ``dichotomy-type'' results, our classification theorems proceed by identifying some special cases which are easy, and then proving hardness of all other cases. All of our hardness results are based on reductions using gadgets (as used in e.g.~\cite{kempe06,oliveira08,schuch09,biamonte08}), rather than proving $\qma$-hardness directly using clock constructions or similar (as used in e.g.~\cite{kitaev02,kempe06}).

The basic idea is to approximately simulate some set of interactions $\mathcal{A}$, where {\sc $\mathcal{A}$-Hamiltonian} is $\qma$-hard, using some other set of interactions $\mathcal{B}$, thus proving $\qma$-hardness of {\sc $\mathcal{B}$-Hamiltonian}. We use two kinds of gadgets, both analysed using perturbation theory~\cite{kempe06}. This theory allows us to characterise the low-energy part of operators of the form $V + \Delta H$, where $V$ and $H$ are Hamiltonians and $\Delta = \poly(\|V\|)$ is a large coefficient. The simpler type of gadget consists of choosing a large enough constant $\Delta$ such that $V$ is effectively projected onto the ground space of $H$. This is the quantum analogue of the natural classical technique of forcing some input bits to be in a certain state by applying a heavily weighted constraint to them. A more complicated type of gadget does not have a classical analogue. Here we choose $H$ to be 1\nobreakdash-local, and by picking somewhat smaller $\Delta$, implement an effective 2\nobreakdash-local interaction which we did not have access to previously.

The \sham\ problem contains a daunting number of cases, so the first step of our proof is to reduce the Hamiltonians we consider to a normal form by conjugating by local unitaries, which does not change the eigenvalues. In addition, we can assume that every 2-qubit interaction $H \in \mathcal{S}$ is either symmetric or antisymmetric under interchange of the qubits on which it acts, via a symmetrisation argument. It turns out that, by conjugating by local unitaries, the 2\nobreakdash-local part of any given 2-qubit Hermitian matrix $H$ which is symmetric (resp.\ antisymmetric)  can be reduced to a matrix of the form $\alpha XX + \beta YY + \gamma ZZ$ (resp.\ $\alpha(XZ-ZX)$), which drastically reduces the number of cases we need to cover. If our set $\mathcal{S}$ contains more than one interaction, we need to be careful to apply the same local unitaries to all $H \in \mathcal{S}$.

In the case of \shamlf, the techniques used to prove $\qma$-hardness are then fairly standard (following previous work~\cite{kempe06,oliveira08,schuch09,biamonte08}). We use our access to arbitrary 1\nobreakdash-local terms to create perturbative gadgets which allow us to produce arbitrary interactions from interactions of the form $\alpha XX + \beta YY + \gamma ZZ$. In the case where $\mathcal{S}$ only contains interactions on 1 or 2 qubits, following this approach allows us to prove $\qma$-hardness even when all the 2-qubit interactions are equally weighted and are restricted to the edges of a 2d square lattice. We can also prove $\qma$-hardness for $k$-qubit interactions for $k>2$, which is based on using 1\nobreakdash-local interactions to ``cut out'' components of the $k$\nobreakdash-local interactions and produce 2-qubit interactions\footnote{It is not obvious how to achieve this without having access to 1\nobreakdash-local interactions, which is one reason why we were unable to achieve a full classification result for \sham\ for $k>2$.}. The $\stoqma$-complete class of interactions is characterised by showing that {\sc $\{ZZ\}$-Hamiltonian with local terms} reduces to {\sc $\{ZZ,X\}$-Hamiltonian} and using Theorem \ref{thm:bh}.

In the more general case of \sham\ it is more difficult to prove $\qma$-hardness, as the lack of access to 1\nobreakdash-local terms does not allow us to use the perturbative techniques of~\cite{oliveira08,schuch09}. In some cases, we are also hampered by the presence of symmetry. This is exemplified by the Heisenberg model $\mathcal{S} = \{XX+YY+ZZ\}$. As $H = XX + YY + ZZ$ is invariant under conjugation by $U^{\otimes 2}$ for any single-qubit unitary $U$, the same holds for the ground space of any Hamiltonian built only from $H$ terms, implying that it is hopeless to attempt to directly encode the ground state of a general Hamiltonian into a Heisenberg Hamiltonian. We therefore proceed using an encoding method where we associate a block of 3 physical qubits with a single logical qubit. This is inspired by related ideas in work on universality of the exchange interaction for quantum computation~\cite{kempe00}, but does not appear to follow from it directly. In order to make the encoding work, we use perturbation theory to effectively project onto a subspace which we can control within the 3-qubit space. An interesting aspect of the proof is that, in order to produce the correct interactions, we need as part of the construction to find a special case of the Heisenberg model with certain characteristics. The ground state $\ket{\psi}$ should be unique and globally entangled, and the correlation functions (quantities of the form $\bracket{\psi}{X_i X_j}{\psi}$) should be efficiently computable. Very few such cases exist, but luckily the previously studied Lieb-Mattis model~\cite{lieb62} has the properties we need. We believe that the ideas used in the proof of hardness of the general Heisenberg model (Section~\ref{sec:heisenberg}) are the most significant individual technical contributions in this work.

The other important special case with significant symmetry is the XY model $\mathcal{S} = \{XX + YY\}$. This can be dealt with using similar ideas, but the exactly solvable case used is somewhat simpler. These two cases are an integral part of proving hardness for more general interactions. It turns out that using a number of different encodings we can produce virtual interactions of either Heisenberg or XY type using almost any 2-qubit interaction with no 1\nobreakdash-local part, sufficing to prove $\qma$-hardness for these cases too. Finally, $\qma$-hardness of cases with 1\nobreakdash-local parts is proven by yet another gadget construction, this time one which removes the unwanted 1\nobreakdash-local terms.

In many of these cases, we needed to carry out fairly complicated eigenvalue-eigenvector calculations in order to prove that our gadgets work. These calculations were performed using a computer algebra package. However, once they are found, verifying that the eigenvectors and eigenvalues are correct can easily be done by hand. As with many constructions based upon gadgets, checking correctness of the gadgets is straightforward compared with finding them in the first place, especially when it is not a priori obvious that they exist; this also necessitated significant computer assistance.


\subsection{Organisation}

We begin by proving some preliminary lemmas relating to the normal form of Hermitian matrices and perturbation theory; these will be used throughout the rest of the paper. Then, in Section~\ref{sec:shamlf}, we prove our results about the \shamlf\ problem. First we give the outline of the proof for 2\nobreakdash-local Hamiltonians; the proofs of $\qma$-hardness of various special cases required for this part are deferred to Appendix~\ref{sec:qmalf}. We then proceed to {\sc $\{ZZ\}$\nobreakdash-local Hamiltonian} in Section~\ref{sec:zzlf}, followed by the generalisation to $k$\nobreakdash-local Hamiltonians in Section~\ref{sec:klocallf}. Section~\ref{sec:sham} contains our results on the more general \sham\ problem. Again, we start by giving an outline of the proof, then go on to prove the lemmas required to deal with various special cases in Section~\ref{sec:special}, starting with the Heisenberg model. The proof proceeds to other cases of increasing generality in Sections~\ref{sec:general} and~\ref{sec:local}. We finish the proof in Section~\ref{sec:diagonal} with the case of diagonal Hamiltonians, and conclude with some open questions in Section~\ref{sec:outlook}.


\section{Preliminaries}


\subsection{Notation}

We use $X := \sm{0&1\\1&0}$, $Y := \sm{0&-i\\i&0}$, $Z := \sm{1&0\\0&-1}$ to denote the Pauli matrices, and also define $\sigma^0 := I$, $\sigma^1 := X$, $\sigma^2:=Y$, $\sigma^3:=Z$. Since the Pauli matrices together with the identity matrix form a complete basis for $2\times 2$ matrices, any $k$-qubit matrix $M$ can be decomposed as a weighted sum of tensor products of Pauli matrices; we call the coefficients occurring in this decomposition the Pauli coefficients of $M$. For each $\ell$, $0 \le \ell \le k$, we call the part of $M$ corresponding to Pauli matrices which act non-trivially on exactly $\ell$ qubits the $\ell$\nobreakdash-local part of $M$. In a tensor product of Pauli matrices, we call the number of non-identity matrices the Pauli weight of the overall matrix. We often consider traceless matrices $M$, which have zero 0\nobreakdash-local part (i.e.\ they contain no identity-only term in their Pauli decomposition).

For any $k$-qubit matrix $M$, we let $M_{i_1\dots i_k}$ denote the matrix formed by applying $M$ on qubits $i_1,\dots,i_k$, tensored with the identity elsewhere. For conciseness, we usually follow the condensed-matter convention of writing $AB$ for the two-qubit matrix $A \otimes B$ (so, for example, $XX+YY+ZZ = X \otimes X + Y \otimes Y + Z \otimes Z$). We usually let $n$ denote the number of qubits in the overall Hamiltonian, and $[n]$ denote the set $\{1,\dots,n\}$. $B(\C^d)$ denotes the unit sphere in $\C^d$.

We often use diagrammatic notation of the following form to present our gadgets:
\begin{center}
\begin{tikzpicture}[inner sep=0.5mm,xscale=3]
\node (a) at (0,0) [circle,fill=black,label=south:$a$] {};
\node (b) at (1,0) [circle,fill=black,label=south:$b$,label=north:$B$] {};
\node (c) at (2,0) [circle,fill=black,label=south:$c$] {};
\draw (a) to node[midway,above] {$A$} (b); \draw (b) to node[midway,above] {$C$} (c);
\end{tikzpicture}
\end{center}
Each vertex corresponds to a qubit, with labels given below the vertices. Operators above vertices and edges correspond to interactions; 1\nobreakdash-local interactions sit above vertices and 2\nobreakdash-local interactions sit above edges. So the above diagram illustrates a system of three qubits with an overall Hamiltonian $A_{ab} + B_b + C_{bc}$.


\subsection{Normal form for Hamiltonians}
\label{sec:normal}

Our task will be greatly simplified by the ability to transform any two-qubit Hermitian matrix $H$ into a standard normal form using conjugation by single-qubit unitaries. The normal form we use is essentially the same as one described by Horodecki and Horodecki~\cite{horodecki96}, D\"ur et al.~\cite{dur01} and Bennett et al.~\cite{bennett02}, except that we insist that the unitaries applied are the same on each qubit. Any traceless two-qubit Hermitian matrix $H$ can be written as
\[ H = \sum_{i,j=1}^3 M_{ij} \sigma^i \otimes \sigma^j + \sum_{k=1}^3 v_k \sigma^k \otimes I + w_k I \otimes \sigma^k \]
for some coefficients $M_{ij}$, $v_k$, $w_k$. Write $M(H)$ for the $3 \times 3$ matrix $M$ occurring in this decomposition. In the entanglement theory literature, $M$ is sometimes known as the correlation matrix. Also define the {\em Pauli rank} of $H$ to be the rank of $M(H)$. We observe that if $H$ is symmetric (resp.\ antisymmetric) under exchange of the two qubits on which it acts, $M(H)$ is a symmetric (resp.\ skew-symmetric) matrix. Further, in the symmetric case $v_k = w_k$; in the antisymmetric case $v_k = -w_k$. Using this notation, we can state two lemmas about conjugation by single-qubit unitaries; we defer the proofs to Appendix~\ref{app:normal}.

\begin{lem}
\label{lem:rotate}
Let $H$ be a traceless 2-qubit Hermitian matrix and write
\[ H = \sum_{i,j=1}^3 M_{ij} \sigma^i \otimes \sigma^j + \sum_{k=1}^3 v_k \sigma^k \otimes I + w_k I \otimes \sigma^k. \]
Then, for any orthogonal matrix $R \in SO(3)$, there exists $U \in SU(2)$ such that
\[ U^{\otimes 2} H (U^\dag)^{\otimes 2} = \sum_{i,j=1}^3 (RMR^T)_{ij} \sigma^i \otimes \sigma^j + \sum_{k=1}^3 (Rv)_k \sigma^k \otimes I + (Rw)_k I \otimes \sigma^k. \]
\end{lem}

Note that as compared with the normal form of~\cite{horodecki96,dur01,bennett02}, here the unitaries acting on each qubit of $H$ are the same. This will be important later because mapping $H \mapsto U^{\otimes 2} H (U^\dag)^{\otimes 2}$ does not change the eigenvalues of any Hamiltonian produced only from terms of the type $H$, as
\[ \sum_{i\neq j} \alpha_{ij} (U^{\otimes 2} H (U^\dag)^{\otimes 2})_{ij} = U^{\otimes n} \left(\sum_{i\neq j} \alpha_{ij} H_{ij}\right) (U^\dag)^{\otimes n}. \]
We use Lemma~\ref{lem:rotate} to give a normal form for two special cases. These will be crucial for the proofs of Theorems \ref{thm:shamlf} and \ref{thm:sham}.

\begin{lem}
\label{lem:normalform}
Let $H$ be a traceless 2-qubit Hermitian matrix. If $H$ is symmetric under exchanging the two qubits on which it acts, there exists $U \in SU(2)$ such that
\[ U^{\otimes 2} H (U^\dag)^{\otimes 2} = \sum_{i=1}^3 \alpha_i \sigma^i \otimes \sigma^i + \sum_{j=1}^3 \beta_j (\sigma^j \otimes I + I \otimes \sigma^j), \]
for some real coefficients $\alpha_i$, $\beta_j$. If $H$ is antisymmetric under this exchange, there exists $U \in SU(2)$ and $i \neq j$ such that
\[ U^{\otimes 2} H (U^\dag)^{\otimes 2} = \alpha (\sigma^i \otimes \sigma^j - \sigma^j \otimes \sigma^i) + \sum_{k=1}^3 \beta_k (\sigma^k \otimes I - I \otimes \sigma^k), \]
for some real coefficients $\alpha$, $\beta_k$.
\end{lem}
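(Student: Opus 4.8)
The plan is to reduce both cases to a single application of Lemma~\ref{lem:rotate}, choosing the rotation $R \in SO(3)$ so as to put the correlation matrix $M(H)$ into a canonical form. Recall from the discussion preceding Lemma~\ref{lem:rotate} that if $H$ is symmetric under exchange of its two qubits then $M(H)$ is a real symmetric matrix and $v = w$, whereas if $H$ is antisymmetric then $M(H)$ is real skew-symmetric and $v = -w$. In either case Lemma~\ref{lem:rotate} lets us simultaneously replace $M \mapsto RMR^T$, $v \mapsto Rv$, $w \mapsto Rw$ by conjugating with a single $U \in SU(2)$, so it suffices to find one $R$ that simplifies $M$; the 1-local part then takes care of itself, because the relation $v = \pm w$ is preserved under $v \mapsto Rv$, $w \mapsto Rw$.

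For the symmetric case, $M$ is real symmetric, so by the spectral theorem there is an orthogonal $O$ with $O M O^T = \diag(\alpha_1, \alpha_2, \alpha_3)$. If $\det O = -1$ I would replace $O$ by $\diag(-1,1,1)\,O$, which still diagonalises $M$ and leaves the diagonal unchanged (conjugating a diagonal matrix by a diagonal sign matrix does nothing), but now has determinant $+1$; thus we may take $R = O \in SO(3)$. After the rotation the 2-local part becomes $\sum_i \alpha_i \sigma^i \otimes \sigma^i$, and since $v = w$ the 1-local part becomes $\sum_j (Rv)_j (\sigma^j \otimes I + I \otimes \sigma^j)$, giving the claimed form with $\beta_j = (Rv)_j$.

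For the antisymmetric case, I would use the standard identification of a $3 \times 3$ skew-symmetric matrix $M$ with its axial vector $a$, defined by $M_{ik} = \sum_j \epsilon_{ijk} a_j$ (equivalently $Mx = a \times x$). Under conjugation by $R \in SO(3)$ this axial vector transforms as an ordinary vector, $a \mapsto Ra$. Choosing $R$ to rotate $a$ onto a coordinate axis makes all but one conjugate pair of off-diagonal entries of $RMR^T$ vanish, so that the 2-local part reduces to $\alpha(\sigma^i \otimes \sigma^j - \sigma^j \otimes \sigma^i)$ for some $i \neq j$ (with $\alpha = 0$ if $M = 0$). Using $v = -w$, the 1-local part becomes $\sum_k (Rv)_k (\sigma^k \otimes I - I \otimes \sigma^k)$, giving the claimed form with $\beta_k = (Rv)_k$.

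Neither case presents a genuine difficulty; the only point requiring care is the restriction to $SO(3)$ rather than $O(3)$, since Lemma~\ref{lem:rotate} only supplies a corresponding $U \in SU(2)$ for proper rotations. In the symmetric case this is handled by the determinant-fixing trick above, and in the antisymmetric case properness is exactly what makes the axial vector transform as a vector rather than picking up a factor of $\det R$. Everything else is a direct application of the spectral theorem and the elementary linear algebra of skew-symmetric matrices.
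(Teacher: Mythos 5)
Your proof is correct and follows essentially the same route as the paper's: both cases reduce to a single application of Lemma~\ref{lem:rotate} with a suitable $R \in SO(3)$, using the spectral theorem for the symmetric case and the canonical form of a $3\times 3$ real skew-symmetric matrix for the antisymmetric case, with the relation $v = \pm w$ carrying the 1-local part along for free. The only differences are cosmetic: the paper repairs $\det O = -1$ by permuting columns (resp.\ rows and columns) and cites the skew canonical form from the literature, whereas you fix the sign with $\diag(-1,1,1)$ (valid, since conjugating a diagonal matrix by a diagonal sign matrix does nothing) and derive the skew case self-containedly via the axial vector $Mx = a \times x$, $a \mapsto Ra$, which has the minor advantage of producing a proper rotation directly instead of correcting an orthogonal one after the fact.
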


\subsection{Perturbation theory}

In order to prove our main results, we will need to use several reductions from one class of Hamiltonian to another. In particular, we will use the concept of {\em perturbative gadgets} introduced by Kempe, Kitaev and Regev~\cite{kempe06}. These gadgets allow one to effectively simulate an interaction to which one does not immediately have access. Intuitively, this is similar to producing an effective constraint of a certain type in a CSP by applying constraints of a different type. The main result we will use is a powerful theorem of Oliveira and Terhal~\cite{oliveira08}, which we will encapsulate below as a pair of corollaries tailored to our needs.

Following the terminology of these authors, we distinguish two ways of using multiple gadgets: in series and in parallel. In the first case, we add a sequence of strong interactions, one after the other, to perform a number of encoding operations. Each such encoding increases the norm of the Hamiltonian by a polynomial factor, so we can only perform a constant number of these operations. In the second case, we apply multiple strong interactions at once to different ancilla qubits to simulate a number of interactions simultaneously. In this case, one parallel use of an arbitrary number of gadgets only increases the norm by a polynomial factor; however, one has to take care that the gadgets do not interfere with each other. In our case, as with previous work, this will not be a problem as long as we ensure that all strong interactions occurring in parallel take place across different subsets of qubits. This point is discussed further in Appendix \ref{sec:qmalf}; for detailed justifications, see~\cite{oliveira08,bravyi08,cao14,piddock15}.

Fix Hamiltonians $H$ and $V$ on some Hilbert space $\mathcal{L}$ and set $\widetilde{H} = H + V$. We think of $V$ as a small perturbation of $H$ (i.e.\ $\|V\| \ll \|H\|$). For any matrix $M$, let $M_{<\delta}$ be the restriction of $M$ to the subspace spanned by eigenvectors of $M$ of eigenvalue less than $\delta$. Let $\mathcal{L}_{-}$ (resp.\ $\mathcal{L}_{+}$) be the subspace spanned by eigenvectors of $H$ whose corresponding eigenvalues are less than (resp.\ greater than) $\lambda_\ast$, for some cutoff $\lambda_\ast$. Further define $\Pi_-$, $\Pi_+$ to be the projectors onto $\mathcal{L}_{-}$, $\mathcal{L}_{+}$, and for any matrix $M$ on $\mathcal{L}$, set
\[ M_- = \Pi_- M \Pi_-,\; M_{-+} = \Pi_- M \Pi_+,\; M_{+-} = \Pi_+ M \Pi_-,\; M_+ = \Pi_+ M \Pi_+. \]
%
Define the ``self-energy'' operator $\Sigma_-(z)$, $z \in \C$, as
\[ \Sigma_-(z) := H_- + V_- + V_{-+} G_+(I_+ - V_+ G_+)^{-1} V_{+-}, \]
where $G$ is the resolvent, defined by $G := (zI-H)^{-1}$ (whose dependence on $z$ is implicit). $\Sigma_-(z)$ has the series expansion
\be \label{eq:series} \Sigma_-(z) = H_- + V_- + V_{-+} G_+ V_{+-} + V_{-+} G_+ V_+ G_+ V_{+-} + V_{-+} G_+ V_+ G_+ V_+ G_+ V_{+-} + \dots \ee
If $H$ is proportional to a projector (i.e.\ $H = \Delta \Pi_+$ for some $\Delta \in \R$, where we assume $0 < \lambda_\ast < \Delta$), then $H_-=0$, $G_+ = (z-\Delta)^{-1} I_{\mathcal{L}_+}$, and we have
\be \label{eq:series2} \Sigma_-(z) = V_- +  \frac{ V_{-+} V_{+-}} {z-\Delta} + \frac{V_{-+} V_+ V_{+-}}{(z-\Delta)^2} + \frac{V_{-+} V_+^2 V_{+-}}{(z-\Delta)^3} + \dots \ee

\begin{thm}[Oliveira and Terhal~\cite{oliveira08}, Theorem A.1]
\label{thm:opnorm}
Let $H$ be a Hamiltonian such that no eigenvalues of $H$ lie between $\lambda_{-} := \lambda_{\ast} - \Delta/2$ and $\lambda_+ := \lambda_\ast + \Delta/2$, for some $\Delta$. Let $\widetilde{H} = H + V$ where $\|V\| \le \Delta/2$.

Assume there exists a Hamiltonian $\heff$ such that $\spec(\heff) \subseteq [a,b]$ for some $a<b$, and $\heff = \Pi_{-} \heff \Pi_{-}$. 
Let $D_r$ be a disc of radius $r$ in the complex plane centred at $c = (b+a)/2$, and let $r$ and $\epsilon$ be such that $b + \epsilon < c + r < \lambda_\ast$, and for all $z \in D_r$ we have $\|\Sigma_-(z) - \heff\| \le \epsilon$. Let $w = (b-a)/2$. Then
\[ \|\widetilde{H}_{<\lambda_\ast} - \heff\| \le \frac{3(\|\heff\|+\epsilon)\|V\|}{\lambda_+ - \|\heff\| - \epsilon} + \frac{r(r+c)\epsilon}{(r-w)(r-w-\epsilon)}. \]
\end{thm}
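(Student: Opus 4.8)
The plan is to bound $\|\widetilde{H}_{<\lambda_\ast} - \heff\|$ by inserting the intermediate operator $\Pi_-\widetilde{H}_{<\lambda_\ast}\Pi_-$, the compression of the low-energy part to $\mathcal{L}_-$, and applying the triangle inequality; the two resulting pieces will match the two summands claimed. Everything rests on the Schur-complement identity $\Pi_-(zI-\widetilde{H})^{-1}\Pi_- = (zI_- - \Sigma_-(z))^{-1}$ on $\mathcal{L}_-$, obtained by block-inverting $\widetilde{H}=H+V$ with respect to $\mathcal{L}_-\oplus\mathcal{L}_+$; this is precisely why $\Sigma_-(z)$ was defined as it was. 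Before using it I would record two analytic facts valid on $D_r$. First, since every $z\in D_r$ has $\Re z < c+r < \lambda_\ast \le \lambda_+$ while $\spec H_+\subseteq[\lambda_+,\infty)$ and $\|V\|\le\Delta/2$, the operators $G_+$ and $(I_+ - V_+G_+)^{-1}$ are well defined, so $\Sigma_-(z)$ is analytic on $D_r$. Second, the resolvent bounds $\|(zI_- - \heff)^{-1}\|\le 1/(r-w)$ (as $\heff$ is Hermitian with spectrum in $[c-w,c+w]$) and, via a Neumann series using $\|\Sigma_-(z)-\heff\|\le\epsilon<r-w$, $\|(zI_- - \Sigma_-(z))^{-1}\|\le 1/(r-w-\epsilon)$ hold for $z\in\partial D_r$.

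The central preliminary, and the step I expect to be the main obstacle, is eigenvalue localisation and counting: I must show that $\widetilde{H}$ has exactly $\dim\mathcal{L}_-$ eigenvalues below $\lambda_\ast$ and that each such $\lambda_j$ satisfies $\mathrm{dist}(\lambda_j,\spec\heff)\le\epsilon$, so that the $\lambda_j$ lie strictly inside $D_r$ and $\|\widetilde{H}_{<\lambda_\ast}\|\le\|\heff\|+\epsilon$. Counting follows from Weyl's inequality: since $\|V\|\le\Delta/2$, no eigenvalue of $\widetilde{H}$ crosses $\lambda_\ast$ relative to $H$. Localisation follows from self-consistency: the block-determinant factorisation $\det(zI-\widetilde{H})=\det(zI_+ - H_+ - V_+)\det(zI_- - \Sigma_-(z))$ shows that for $\lambda_j<\lambda_\ast$ (where $\lambda_j\notin\spec(H_+ + V_+)$, as that spectrum lies above $\lambda_+ - \|V\|\ge\lambda_\ast$) we have $\det(\lambda_j I_- - \Sigma_-(\lambda_j))=0$, i.e. $\lambda_j\in\spec\Sigma_-(\lambda_j)$; since $\Sigma_-(\lambda_j)$ is Hermitian for real $\lambda_j$ and within $\epsilon$ of $\heff$, Weyl gives $\mathrm{dist}(\lambda_j,\spec\heff)\le\epsilon$. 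Making all the domain inequalities $b+\epsilon<c+r<\lambda_\ast$ and $\|V\|\le\Delta/2$ line up so that the low spectrum sits strictly inside $D_r$ while the high spectrum stays outside is the fiddly heart of the argument.

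With localisation in hand the second term is a contour computation. The Schur identity and the spectral theorem give $\Pi_-\widetilde{H}_{<\lambda_\ast}\Pi_- = \frac{1}{2\pi i}\oint_{\partial D_r} z\,(zI_- - \Sigma_-(z))^{-1}\,dz$ and $\heff = \frac{1}{2\pi i}\oint_{\partial D_r} z\,(zI_- - \heff)^{-1}\,dz$, both contours enclosing exactly the relevant spectra. Subtracting, applying the resolvent identity $(zI_- - \Sigma_-)^{-1} - (zI_- - \heff)^{-1} = (zI_- - \Sigma_-)^{-1}(\Sigma_- - \heff)(zI_- - \heff)^{-1}$, bounding the integrand by $(c+r)\cdot\frac{1}{r-w-\epsilon}\cdot\epsilon\cdot\frac{1}{r-w}$ and the contour length by $2\pi r$, yields exactly $\frac{r(r+c)\epsilon}{(r-w)(r-w-\epsilon)}$.

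For the first term I would bound the leakage $T:=\Pi_+\widetilde{\Pi}_-$, where $\widetilde{\Pi}_-$ is the spectral projector of $\widetilde{H}$ onto eigenvalues below $\lambda_\ast$. Writing $K:=\widetilde{H}_{<\lambda_\ast}$ and projecting the commuting relation $\widetilde{H}\widetilde{\Pi}_- = \widetilde{\Pi}_- K$ onto $\mathcal{L}_+$ produces the Sylvester equation $H_+ T - T K = -\Pi_+ V \widetilde{\Pi}_-$. Since $\spec H_+\subseteq[\lambda_+,\infty)$ and $\spec K\subseteq[-(\|\heff\|+\epsilon),\,\|\heff\|+\epsilon]$ are separated by at least $\lambda_+ - \|\heff\| - \epsilon$, the standard bound for Sylvester equations with separated Hermitian spectra gives $\|T\|\le\|\Pi_+ V\widetilde{\Pi}_-\|/(\lambda_+ - \|\heff\| - \epsilon)\le\|V\|/(\lambda_+ - \|\heff\| - \epsilon)$. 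Finally, decomposing $K - \Pi_- K\Pi_- = \Pi_- K\Pi_+ + \Pi_+ K\Pi_- + \Pi_+ K\Pi_+$ and bounding each piece by $\|T\|\,\|K\|\le(\|\heff\|+\epsilon)\|V\|/(\lambda_+ - \|\heff\| - \epsilon)$, using $\Pi_+ K = TK$, produces the factor of $3$ and the first summand. The triangle inequality then combines the two terms.
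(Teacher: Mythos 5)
Your architecture is exactly the right one — and, for what it's worth, the paper itself never proves this statement (it imports it wholesale from Oliveira and Terhal's appendix), so the relevant comparison is with their argument, which your reconstruction matches step for step. The Feshbach/Schur identity $\Pi_-(zI-\widetilde{H})^{-1}\Pi_- = (zI_- - \Sigma_-(z))^{-1}$ is correct (indeed $G_+(I_+-V_+G_+)^{-1} = (zI_+-H_+-V_+)^{-1}$, so $zI_--\Sigma_-(z)$ is precisely the Schur complement of the $++$ block); the resolvent bounds $\|(zI_--\heff)^{-1}\|\le 1/(r-w)$ and, via Neumann series, $\|(zI_--\Sigma_-(z))^{-1}\|\le 1/(r-w-\epsilon)$ on $\partial D_r$ are right; the contour computation reproduces the second summand exactly; and the Sylvester/leakage bound $\|T\|\le\|V\|/(\lambda_+-\|\heff\|-\epsilon)$ together with the three corner terms of $K-\Pi_-K\Pi_-$ yields the first summand with the correct factor of $3$, using $\|K\|\le\|\heff\|+\epsilon$ from localisation.

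The genuine gap is in the localisation step, precisely where you predicted trouble. You conclude $\mathrm{dist}(\lambda_j,\spec\heff)\le\epsilon$ from $\lambda_j\in\spec\Sigma_-(\lambda_j)$ plus ``$\Sigma_-(\lambda_j)$ is within $\epsilon$ of $\heff$'' — but the hypothesis bounds $\|\Sigma_-(z)-\heff\|$ only for $z\in D_r$, and you have not shown $\lambda_j\in D_r$. A priori an eigenvalue of $\widetilde{H}$ could sit in the real window $(c+r,\lambda_\ast)$, or below $c-r$, where nothing whatsoever is assumed about $\Sigma_-$; Weyl counting controls \emph{how many} eigenvalues lie below $\lambda_\ast$ but does not confine them to the disc, so as written the argument is circular at exactly this point (and the theorem's conclusion would be false if such stray eigenvalues existed, so the proof must exclude them). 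The standard repair — used in Lemma 11 of Kempe, Kitaev and Regev~\cite{kempe06}, on which Oliveira and Terhal build — is monotonicity: for real $z$ below $\lambda_{\min}(H_++V_+)\ge\lambda_\ast$ one has $\Sigma_-(z)=H_-+V_-+V_{-+}(zI_+-H_+-V_+)^{-1}V_{+-}$, whence $\frac{d}{dz}\Sigma_-(z)=-V_{-+}(zI_+-H_+-V_+)^{-2}V_{+-}\preceq 0$, so every eigenvalue branch of $zI_--\Sigma_-(z)$ is strictly increasing in $z$ (slope at least $1$). Evaluating at two real points of $D_r$ — one just above $c-r$, where all branches are negative since $r>w+\epsilon$ forces $z<a-\epsilon\le\lambda_{\min}(\Sigma_-(z))$, and one in $(b+\epsilon,c+r)$, where all branches are positive — each of the $\dim\mathcal{L}_-$ branches crosses zero exactly once, inside $(c-r,c+r)$, and nowhere else below $\lambda_\ast$. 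Combined with your determinant factorisation this simultaneously localises and counts the low spectrum of $\widetilde{H}$ (rendering the Weyl step superfluous), after which the rest of your proposal goes through verbatim.
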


Theorem~\ref{thm:opnorm} has the following easy corollary.

\begin{cor}
\label{cor:perturb}
Let $H$ be a Hamiltonian such that no eigenvalues of $H$ lie between $\lambda_{-} := \lambda_{\ast} - \Delta/2$ and $\lambda_+ := \lambda_\ast + \Delta/2$, for some $\Delta<\lambda_\ast$. Let $\widetilde{H} = H + V$ where $\|V\| \le \Delta/2$. Let $\heff$ be a Hamiltonian obtained by truncating the series expansion (\ref{eq:series}) of $\Sigma_-(0)$ after some finite number of terms. Fix $\epsilon < \|\heff\|/2$ and assume that:
\begin{itemize}
\item $\|\heff - \Sigma_-(z)\| \le \epsilon$, for all $z \in \C$ such that $|z| \le 2\|\heff\|$;
\item $\|\heff\| < \lambda_+/3$.
\end{itemize}
Then
\[ \|\widetilde{H}_{<\lambda_\ast} - \heff\| \le \frac{9\|V\|\|\heff\|}{\lambda_+} + 8 \epsilon. \]
\end{cor}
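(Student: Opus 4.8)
The plan is to apply Theorem~\ref{thm:opnorm} after making a concrete choice of its parameters $a$, $b$, $r$, chosen so that the abstract disc hypothesis there is implied by the cleaner hypothesis of the corollary (control of $\Sigma_-(z)$ on the region $\{z:|z|\le 2\|\heff\|\}$). The two hypotheses of the corollary that do not appear verbatim in the theorem --- namely $\|\heff\|<\lambda_+/3$ and $\Delta<\lambda_\ast$ --- are exactly what is needed to verify the theorem's chain of inequalities and then to collapse its somewhat baroque error bound into the clean form claimed.

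Concretely, I would first record two structural facts about $\heff$. Since every term of the series~(\ref{eq:series}) has the form $\Pi_-(\cdots)\Pi_-$, any truncation satisfies $\heff=\Pi_-\heff\Pi_-$; and since at the real point $z=0$ the resolvent (hence $G_+$ and $V_+$) is Hermitian on $\mathcal{L}_+$ and $V_{-+}^\dag = V_{+-}$, each truncation $\heff$ of $\Sigma_-(0)$ is Hermitian, so $\spec(\heff)\subseteq[-\|\heff\|,\|\heff\|]$. I therefore take $a=-\|\heff\|$, $b=\|\heff\|$, so that $c=(a+b)/2=0$ and $w=(b-a)/2=\|\heff\|$, and choose the disc radius $r=2\|\heff\|$. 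With this choice $D_r=\{z:|z|\le 2\|\heff\|\}$, so the first hypothesis of the corollary directly supplies $\|\Sigma_-(z)-\heff\|\le\epsilon$ for all $z\in D_r$. It then remains to verify $b+\epsilon<c+r<\lambda_\ast$: the left inequality is $\|\heff\|+\epsilon<2\|\heff\|$, which follows from $\epsilon<\|\heff\|/2$; and the right inequality is $2\|\heff\|<\lambda_\ast$, which I obtain by combining $\|\heff\|<\lambda_+/3$ with $\lambda_+=\lambda_\ast+\Delta/2<\tfrac{3}{2}\lambda_\ast$ (using $\Delta<\lambda_\ast$) to get $\|\heff\|<\lambda_\ast/2$. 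The theorem's remaining hypotheses (the spectral gap of $H$ and $\|V\|\le\Delta/2$) are assumed outright.

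With the hypotheses verified, the conclusion reduces to estimating the two summands of Theorem~\ref{thm:opnorm} under these parameters. For the second term, $r+c=2\|\heff\|$, $r-w=\|\heff\|$, and $r-w-\epsilon>\|\heff\|/2$, so $\tfrac{r(r+c)\epsilon}{(r-w)(r-w-\epsilon)}<\tfrac{4\|\heff\|^2\epsilon}{\|\heff\|\cdot(\|\heff\|/2)}=8\epsilon$, exactly the second summand. For the first term, $\epsilon<\|\heff\|/2$ bounds the numerator by $3(\|\heff\|+\epsilon)\|V\|<\tfrac{9}{2}\|\heff\|\,\|V\|$, while $\|\heff\|<\lambda_+/3$ bounds the denominator below by $\lambda_+-\|\heff\|-\epsilon>\lambda_+/2$, giving at most $\tfrac{9\|\heff\|\,\|V\|}{\lambda_+}$ and hence the stated estimate. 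I expect no real obstacle in the arithmetic; the only genuinely substantive step is the parameter identification in the previous paragraph --- in particular recognising that $r=2\|\heff\|$ is forced by the region on which $\Sigma_-$ is controlled, and that verifying $c+r<\lambda_\ast$ is precisely where both extra hypotheses $\|\heff\|<\lambda_+/3$ and $\Delta<\lambda_\ast$ must be spent. Everything after that is routine.
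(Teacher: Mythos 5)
Your proposal follows the paper's proof exactly: the same parameter choices $a=-\|\heff\|$, $b=\|\heff\|$, $r=2\|\heff\|$ (so $c=0$, $w=\|\heff\|$) in Theorem~\ref{thm:opnorm}, and the same arithmetic on the two resulting summands; moreover you supply details the paper elides, since the paper merely asserts ``$b+\epsilon<c+r<\lambda_\ast$ as required'', whereas you correctly identify that this is precisely where both extra hypotheses $\epsilon<\|\heff\|/2$, $\|\heff\|<\lambda_+/3$ and $\Delta<\lambda_\ast$ are spent.

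One caveat, which is a defect of the corollary as printed rather than of your argument: your first summand honestly comes out as $9\|\heff\|\,\|V\|/\lambda_+$, and your closing phrase ``and hence the stated estimate'' is a non sequitur, since $9\|\heff\|\,\|V\|/\lambda_+ \le 9\|V\|/\lambda_+$ only when $\|\heff\|\le 1$, which the corollary does not assume. But the paper's own proof has the identical jump: it reaches the intermediate bound $3(\|\heff\|+\epsilon)\|V\|/(\lambda_+-\|\heff\|-\epsilon)$, from which the hypotheses yield exactly your $9\|\heff\|\,\|V\|/\lambda_+$, and then asserts the claim with the factor $\|\heff\|$ silently dropped. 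So you have faithfully reproduced (and, usefully, made visible) a slip in the stated constant. It is harmless downstream: in the only application, Corollary~\ref{cor:zerothorder}, one has $\heff=V_-$ with $\|\heff\|\le\|V\|$, so the corrected bound gives a first term of $9/\delta$ in place of $9/(\delta\|V\|)$, and $9/\delta+8/(\delta-3)\le 41/\delta$ still holds for all $\delta\ge 4$, so the final $41/\delta$ estimate survives unchanged.
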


\begin{proof}
First observe that each term in (\ref{eq:series}) acts only on $\mathcal{L}_-$, so $\heff = \Pi_{-} \heff \Pi_{-}$. Take $b = \|\heff\|$, $a = -\|\heff\|$, $r = 2\|\heff\|$ in Theorem~\ref{thm:opnorm}. Then $c=0$, $w=\|\heff\|$. Further, $b + \epsilon < c + r < \lambda_\ast$ as required. Therefore,
\[ \|\widetilde{H}_{<\lambda_\ast} - \heff\| \le \frac{3(\|\heff\|+\epsilon)\|V\|}{\lambda_+ - (\|\heff\| + \epsilon)} + \frac{4\|\heff\|^2\epsilon}{\|\heff\|(\|\heff\|- \epsilon)} \]
and the claim follows by using $\epsilon < \|\heff\|/2$ and $\|\heff\| < \lambda_+/3$.
\end{proof}

We highlight the following first-order special case of Corollary~\ref{cor:perturb}.

\begin{cor}
\label{cor:zerothorder}
Let $H$ be a Hamiltonian such that $\lambda_{\min}(H)=0$ and the next smallest non-zero eigenvalue of $H$ is 1, and let $V$ be an arbitrary Hamiltonian such that $\|V\| \ge 1$. Write $V_- = \Pi_- V \Pi_-$, where $\Pi_-$ is the projector onto the nullspace of $H$. Further take $\Delta = \delta \|V\|^2$ for some $\delta \ge 4$, and let $\widetilde{H} = \Delta H + V$.
Then
\[ \|\widetilde{H}_{<\Delta/2} - V_-\| \le 41/\delta. \]
\end{cor}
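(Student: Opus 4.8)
The plan is to apply the perturbative machinery of Corollary~\ref{cor:perturb}, taking the heavily-weighted part to be $\Delta H$ and the perturbation to be $V$, so that $\widetilde{H} = \Delta H + V$. Since $H \ge 0$ with $\lambda_{\min}(H)=0$ and a spectral gap of $1$ above its ground energy, the operator $\Delta H$ has spectrum contained in $\{0\} \cup [\Delta,\infty)$: its ground space is exactly the nullspace of $H$, which we take to be $\mathcal{L}_-$, and there is an eigenvalue-free window $(0,\Delta)$. I would place the forbidden window at the top of this gap so that $\lambda_+ = \Delta$, with the cutoff $\lambda_\ast$ inside $(0,\Delta)$; because the eigenvalues of $\widetilde{H}$ differ from those of $\Delta H$ by at most $\|V\|$, the operator $\widetilde{H}_{<\lambda_\ast}$ does not depend on the precise choice of $\lambda_\ast$ within the window, so $\widetilde{H}_{<\Delta/2} = \widetilde{H}_{<\lambda_\ast}$. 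The effective Hamiltonian is the leading truncation of the self-energy series (\ref{eq:series}): since $\Pi_-$ projects onto the nullspace of $H$ we have $H_- = \Pi_-(\Delta H)\Pi_- = 0$, so keeping only the first surviving term gives $\heff = V_-$, which automatically satisfies $\heff = \Pi_-\heff\Pi_-$ and $\|\heff\| \le \|V\|$.

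The quantitative heart of the argument is the bound on $\epsilon := \sup_z\|\Sigma_-(z) - \heff\|$, the error of this truncation. Subtracting the leading term from (\ref{eq:series}) and using $H_-=0$ leaves the tail
\[ \Sigma_-(z) - V_- = \sum_{n\ge 1} V_{-+} G_+ (V_+ G_+)^{n-1} V_{+-}. \]
On $\mathcal{L}_+$ the eigenvalues of $\Delta H$ are at least $\Delta$, so for the relevant range $|z| \le 2\|\heff\| \le 2\|V\|$ the resolvent obeys $\|G_+\| \le (\Delta - 2\|V\|)^{-1} \le 2/\Delta$, where the last step uses $\Delta = \delta\|V\|^2 \ge 4\|V\|$ (valid since $\delta\ge 4$, $\|V\|\ge 1$). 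Consequently $\|V\|\,\|G_+\| \le 2/(\delta\|V\|) \le 1/2$, the geometric series converges, and bounding each factor by $\|V\|$ gives
\[ \|\Sigma_-(z) - V_-\| \le \frac{\|V\|^2 \|G_+\|}{1 - \|V\|\,\|G_+\|} \le \frac{4\|V\|^2}{\Delta} = \frac{4}{\delta}, \]
so we may take $\epsilon = 4/\delta$.

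It then remains to verify the remaining hypotheses of Corollary~\ref{cor:perturb} — the gap being wide enough (the $\|V\| \le \Delta/2$-type inequalities), together with $\|\heff\| < \lambda_+/3$ and $\epsilon < \|\heff\|/2$ — all of which follow from $\delta \ge 4$ and $\|V\| \ge 1$ with the window placed so that $\lambda_+ = \Delta$. Feeding $\heff = V_-$, $\lambda_+ = \Delta = \delta\|V\|^2$, and $\epsilon = 4/\delta$ into the conclusion $\|\widetilde{H}_{<\lambda_\ast} - \heff\| \le 9\|V\|/\lambda_+ + 8\epsilon$ bounds the first term by $9\|V\|/\Delta = 9/(\delta\|V\|) \le 9/\delta$ and the second by $32/\delta$, for a total of $41/\delta$; since $\widetilde{H}_{<\lambda_\ast} = \widetilde{H}_{<\Delta/2}$, this is exactly the claim.

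The main obstacle is the resolvent estimate: making the tail of the self-energy series small relies entirely on the spectral gap of $\Delta H$ dominating $\|V\|$, which is precisely what the scaling $\Delta = \delta\|V\|^2$ (rather than, say, $\Delta = \delta\|V\|$) buys us, via the convergence condition $\|V\|\,\|G_+\| < 1$. A secondary technical nuisance is the bookkeeping needed to satisfy all of Corollary~\ref{cor:perturb}'s hypotheses at once: in particular, when $\|V_-\|$ is very small the hypothesis $\epsilon < \|\heff\|/2$ degenerates, and in that regime it is cleanest to invoke the underlying Theorem~\ref{thm:opnorm} directly with a disc of radius $r = 2\|V\|$, for which the same resolvent bound applies without any lower bound on $\|\heff\|$. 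Since the constant $41$ is in any case loose, these edge cases do not affect the stated inequality.
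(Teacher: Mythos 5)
Your proposal is correct and follows essentially the same route as the paper: both apply Corollary~\ref{cor:perturb} with $\heff = V_-$ (using $H_-=0$), bound the resolvent on $\mathcal{L}_+$ by the gap $\Delta - 2\|V\|$ for $|z| \le 2\|V\|$, sum the resulting geometric series to get $\epsilon = O(1/\delta)$ (your $4/\delta$ versus the paper's $\|V\|/(\delta\|V\|-3)$, which are interchangeable given $\delta\|V\| \ge 4$), and plug into $9\|V\|/\lambda_+ + 8\epsilon \le 41/\delta$. If anything you are slightly more careful than the paper, which silently skips both the cutoff-placement bookkeeping and the degenerate regime where $\|V_-\|$ is too small for the hypothesis $\epsilon < \|\heff\|/2$, a gap your fallback to Theorem~\ref{thm:opnorm} with $r = 2\|V\|$ correctly patches.
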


\begin{proof}
We take $\heff = V_-$ (since $H_- = 0$) in Corollary~\ref{cor:perturb}, for which
\begin{align*}
\|\Sigma_-(z) - \heff\| &= \| V_{-+} G_+ V_{+-} + V_{-+} G_+ V_+ G_+ V_{+-} + V_{-+} G_+ V_+ G_+ V_+ G_+ V_{+-} + \dots \|\\
&\le \sum_{i=1}^{\infty} \|V\|^{i+1} \|G_+\|^i.
\end{align*}
Let $z \in \C$ satisfy $|z| \le 2\|V\|$ and let $\lambda_i(H_+)$ denote the $i$'th eigenvalue of $H_+$. Then
\[ \|G_+\| = (\min_i |z-\Delta \lambda_i(H_+)|)^{-1} \le \|V\|^{-1}(\delta\|V\|-2)^{-1}, \]
so we get that for all such $z$
\[ \|\Sigma_-(z) - \heff\| \le \sum_{i=1}^{\infty} \|V\|^{i+1} \|G_+\|^i \le \|V\| \sum_{i=1}^\infty (\delta\|V\|-2)^{-i} = \frac{\|V\|}{\delta \|V\|-3}. \]
Thus, taking $\epsilon = \|V\|/(\delta \|V\|-3)$ in Corollary~\ref{cor:perturb} and noting that $\|\heff\| = \|V_-\| \le \|V\|$, we obtain
\[ \|\widetilde{H}_{<\Delta/2} - V_-\| \le \frac{9\|V_-\|}{\delta \|V\|} + \frac{8\|V\|}{\delta \|V\|-3} \le \frac{41}{\delta}. \]
The last inequality follows from $\|V\|/(\delta\|V\|-3) \le 4/\delta$, which is equivalent to $\delta\|V\| \ge 4$ and holds because we assumed that $\delta \ge 4$, $\|V\| \ge 1$.
\end{proof}

Corollaries~\ref{cor:perturb} and~\ref{cor:zerothorder}, both of which are direct consequences of Theorem~\ref{thm:opnorm} of Oliveira and Terhal, will be the underlying technical tools which we use in the rest of the paper. Corollary~\ref{cor:perturb} will be used for our results on \shamlf, where we use second-order perturbation theory, while Corollary~\ref{cor:zerothorder} will be used for \sham, which needs only first-order perturbation theory. For readability, we often apply these corollaries without specifically verifying their preconditions. This is justified because of the freedom we have in reweighting terms in the Hamiltonian by arbitrary $\poly(n)$ scaling factors, and adding identity terms. For example, when applying Corollary \ref{cor:zerothorder} to produce an effective Hamiltonian $V_-$ from Hamiltonians $H$ and $V$, we can start by scaling $V$ such that $1 \le \|V\| \le \poly(n)$, and take large enough $\Delta \le \poly(n)$ to infer that $\|\widetilde{H}_{<\Delta/2} - V_-\| \le 1/\poly(n)$.

Corollary~\ref{cor:zerothorder} is similar to the ``Projection Lemma'' of Kempe, Kitaev and Regev~\cite{kempe06}, but improves it by showing that the low-energy subspace of $\widetilde{H}$ is actually close to $V_-$ in operator norm, rather than the two matrices just having a similar lowest eigenvalue. This will be important for us as we will need to encode data in this subspace. It is immediate that Corollary~\ref{cor:zerothorder} can be applied a constant number of times in series (which does not seem obvious from the result of~\cite{kempe06}).


\section{\texorpdfstring{\shamlf}{S-Hamiltonian with local terms}}
\label{sec:shamlf}

We begin by studying the \shamlf\ problem. Recall that in this problem, we assume that in addition to $\mathcal{S}$ we have access to interactions $\{X,Y,Z\}$, or in other words arbitrary single-qubit matrices. We make the following observations about this problem.

\begin{itemize}
\item We can perform an arbitrary global relabelling of the Pauli matrices occurring in $\mathcal{S}$ without changing the complexity of \shamlf.
\item We can assume that each matrix in $\mathcal{S}$ does not contain any 1\nobreakdash-local terms. Any such terms can be removed by adding or subtracting arbitrary 1\nobreakdash-local interactions. Thus, in the case where every matrix in $\mathcal{S}$ acts on at most 2 qubits, we can assume that every matrix acts on exactly 2 qubits.
\item We can assume that every 2-qubit matrix $H \in \mathcal{S}$ is either symmetric or antisymmetric under interchange of the two qubits on which it acts. That is, for all $H \in \mathcal{S}$, $M(H)$ is either symmetric or skew-symmetric. This holds because, given access to $H$, we can implement the two matrices $H^+ = (H + FHF)/2$ and $H^- = (H-FHF)/2$, where $F$ is the swap operator, simply by applying $H$ in both the normal direction and in reverse. $H^+$ is symmetric, and $H^-$ is antisymmetric. Further, we have lost nothing by doing this, as $H^+ + H^- = H$.
\end{itemize}

The following lemmas, which characterise the complexity of special cases of this problem, will in fact allow us to characterise {\em all} cases of the problem. We prove the lemmas in Appendix~\ref{sec:qmalf}.

\begin{lem}
\label{lem:xxzzhamlf}
For any fixed $\gamma \neq 0$, {\sc $\{XX + \gamma ZZ\}$-Hamiltonian with local terms} is $\qma$-complete. This holds even if all 2-qubit interactions have the same weight and are restricted to the edges of a 2d square lattice.
\end{lem}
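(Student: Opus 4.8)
The plan is to treat the two halves of $\qma$-completeness separately. Membership in $\qma$ is immediate, since {\sc $\{XX+\gamma ZZ\}$-Hamiltonian with local terms} is a special case of {\sc $2$-local Hamiltonian}, which lies in $\qma$. All the work is in $\qma$-hardness, and I would obtain it by reducing from a problem already known to be $\qma$-complete, namely the Biamonte--Love model {\sc $\{XX,ZZ\}$-Hamiltonian with local terms}~\cite{biamonte08}. Since in the present problem all $1$-local terms are free, it suffices to manufacture independent effective interactions $X_iX_j$ and $Z_iZ_j$, of either sign and of arbitrary polynomially-bounded magnitude, out of copies of the single fixed two-qubit interaction $XX+\gamma ZZ$.

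The engine of the reduction is a pair of second-order perturbative gadgets, each justified by Corollary~\ref{cor:perturb}. To simulate $X_iX_j$, introduce a fresh ancilla $a$ and pin it with the strong $1$-local field $H=\tfrac{\Delta}{2}(I-Z_a)=\Delta\,\ket{1}\bra{1}_a$, whose ground state is $\ket{0}_a$ with gap $\Delta$. Apply the perturbation $V=\mu\big[(X_iX_a+\gamma Z_iZ_a)\pm(X_jX_a+\gamma Z_jZ_a)\big]$, where $\mu$ is the common coupling weight. Because $Z_a$ is diagonal in the ancilla eigenbasis, the $ZZ$ terms contribute only the $1$-local residue $V_-=\mu\gamma(Z_i\pm Z_j)$, which I remove using free $1$-local terms; the $XX$ terms flip the ancilla, so $V_{+-}=\mu(X_i\pm X_j)\otimes\ket{1}\bra{0}$, and the second-order term of the truncated series (\ref{eq:series2}) at $z=0$ is $-\tfrac{1}{\Delta}V_{-+}V_{+-}=-\tfrac{\mu^2}{\Delta}(X_i\pm X_j)^2=-\tfrac{2\mu^2}{\Delta}(I\pm X_iX_j)$. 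Up to an additive constant this is an effective coupling $\mp\tfrac{2\mu^2}{\Delta}X_iX_j$ of controllable sign and magnitude. Placing the field instead in the $X$-direction, $H=\tfrac{\Delta}{2}(I-X_a)$ with ground state $\ket{+}_a$, exchanges the roles of the two Pauli types: now $X_a$ is diagonal and $Z_a$ flips the ancilla, so the same calculation yields an effective coupling $\mp\tfrac{2\gamma^2\mu^2}{\Delta}Z_iZ_j$. This second gadget is exactly where the hypothesis $\gamma\neq 0$ is used, as it guarantees a nonzero $Z_iZ_j$ term.

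With both gadgets available, the unrestricted statement assembles cleanly. I would apply one gadget per target interaction, in parallel on disjoint ancillas; as in~\cite{oliveira08} distinct gadgets do not interfere, because their flip channels act on distinct ancillas so the second-order cross terms vanish, and all spurious $1$-local residues are cancelled by free fields. Each desired coefficient $J_{ij}$ or $K_{ij}$ is matched by the $\pm$ choice of relative coupling together with suitable $\mu$ and $\Delta$, and Corollary~\ref{cor:perturb} bounds the total error by $\tfrac{9\|V\|}{\lambda_+}+8\epsilon$, which is driven below any prescribed $1/\poly(n)$ by taking $\Delta=\poly(n)$ large. As only a constant number of serial rounds is needed and the norm grows polynomially, the promise gap of the Biamonte--Love instance is preserved, completing the reduction in the unrestricted case.

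The main obstacle is the strengthened claim that hardness persists when every $2$-qubit term is forced to have the same weight and the interaction graph is a $2$d square lattice, since both freedoms exploited above — the $\pm$ sign choice and the free scaling of $\mu$ — are then unavailable. Here I would proceed as follows: (i) fix one common weight for all $XX+\gamma ZZ$ terms and recover magnitude control purely through the free ancilla fields $\Delta_a$, which may vary from gadget to gadget because the effective coupling scales as $1/\Delta_a$; (ii) enforce nearest-neighbour locality by laying the logical qubits and ancillas out on the lattice and inserting subdivision gadgets to decompose any non-adjacent interaction into a chain of adjacent ones; and (iii) recover sign control by combining the gauge freedom of the bipartite square lattice (conjugating by $Z$, respectively $X$, on one sublattice flips every $XX$, respectively $ZZ$, coupling simultaneously) with a small auxiliary construction that introduces a relative sign between the two couplings of a gadget without altering its weights. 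I expect the bulk of the technical effort to lie in verifying that these three requirements — uniform weight, planar nearest-neighbour layout, and correct individual signs — can be met simultaneously while keeping every perturbative error controlled, and in checking that the resulting family of equal-weight lattice instances still encodes arbitrary Biamonte--Love instances.
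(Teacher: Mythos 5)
Your unrestricted reduction is sound and essentially matches the paper's route: both use second-order mediator gadgets (the paper's Lemma~\ref{lem:gadget}, built on Corollary~\ref{cor:perturb}) to extract separate effective $XX$ and $ZZ$ couplings from $XX+\gamma ZZ$ and then invoke an $\{XX,ZZ\}$-with-local-terms completeness result; your layout, with the ancilla coupled to both targets and sign control via the relative $\pm$ coupling, is a legitimate cosmetic variant of the paper's mediator-in-the-middle gadget. The genuine gap is in the equal-weight lattice half of the statement, exactly where your plan stops at step~(iii). Per-edge sign control cannot be recovered the way you suggest: once all 2-qubit terms carry the same weight the $\pm$ option is gone, the effective couplings $-(2\mu^2/\Delta_a)X_iX_j$ and $-(2\gamma^2\mu^2/\Delta_a)Z_iZ_j$ are negative for every admissible $\Delta_a>0$ (tuning $\Delta_a$ controls magnitude, never sign), and conjugating by $Z$ (resp.\ $X$) on a subset $S$ of qubits flips only the $XX$ (resp.\ $ZZ$) couplings on edges crossing the cut $\partial S$. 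The sign patterns reachable this way are precisely those of the form $s_{ij}=\epsilon_i\epsilon_j$, i.e.\ gauge-equivalent to the all-negative configuration; frustrated assignments (say, a single positive coupling on a 4-cycle of the lattice), which generic hard instances of Theorem~\ref{thm:lattice} contain, are unreachable. So the ``small auxiliary construction'' you defer to is the missing key idea of the proof, not a routine verification.

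The paper's resolution is to move \emph{all} tunability into the free 1-local resource: pinning the mediator in the state $\ket{\psi}=\cos\theta\ket{0}+\sin\theta\ket{1}$ makes the second-order cross term proportional to $\sin(4\theta)$, so a gadget with fixed-weight couplings $\alpha\sqrt{\Delta}\,XX$ and $\beta\sqrt{\Delta}\,ZZ$ yields an effective interaction $\alpha\beta\sin(4\theta)\,X_aZ_c$ of arbitrary sign and magnitude while every 2-qubit term keeps the same weight; a second round of gadgets converts these into arbitrarily weighted $XX$ and $XZ$ terms, with the number of gadget rounds equalised across interaction types so that the final interaction graph embeds in the 2d square lattice. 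All of this is packaged as Lemma~\ref{lem:xxandzzlf}, which holds for \emph{arbitrary} fixed non-zero weights $\alpha,\beta$. Consequently the paper's proof of the present lemma is short: pinning with $\proj{1}$ gives $\heff=\helse-2X_aX_c$, pinning with $\proj{-}$ gives $\heff=\helse-2\gamma^2Z_aZ_c$, and since $\gamma\neq 0$ these are fixed non-zero weights to which Lemma~\ref{lem:xxandzzlf} applies. To repair your proposal, reduce to Lemma~\ref{lem:xxandzzlf} rather than to an arbitrary-weight Biamonte--Love instance, or equivalently replace your fixed pinnings by the $\theta$-dependent pinning state.
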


\begin{lem}
\label{lem:xxyyzzhamlf}
For any fixed $\beta,\gamma \neq 0$, {\sc $\{XX + \beta YY + \gamma ZZ\}$-Hamiltonian with local terms} is $\qma$-complete. This holds even if all 2-qubit interactions have the same weight and are restricted to the edges of a 2d square lattice.
\end{lem}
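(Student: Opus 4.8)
The plan is to reduce from the problem already proven $\qma$-complete in Lemma~\ref{lem:xxzzhamlf}, namely {\sc $\{XX + \gamma' ZZ\}$-Hamiltonian with local terms} for a suitable fixed $\gamma'$. Since in the \shamlf\ setting we are given all $1$-local terms for free, the essential task is to manufacture an effective $XX + \gamma' ZZ$ interaction out of the single available $2$-qubit interaction $H = XX + \beta YY + \gamma ZZ$ together with $1$-local terms. The obvious obstruction is the unwanted $\beta YY$ component, which we cannot simply subtract because $YY$ is not in our interaction set. The key idea is to use a perturbative gadget of the type encapsulated in Corollary~\ref{cor:perturb}: by introducing an ancilla qubit, applying $H$ between a logical qubit and the ancilla, and applying a strong $1$-local field to the ancilla, second-order perturbation theory produces an effective $2$-local interaction on the logical qubits whose Pauli coefficients are tunable, and in particular can be made to vanish on the $YY$ component while retaining nonzero $XX$ and $ZZ$ parts.

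First I would fix normal form. Note $H = XX + \beta YY + \gamma ZZ$ is symmetric under qubit exchange, so by Lemma~\ref{lem:normalform} and the observations opening Section~\ref{sec:shamlf} we may assume the correlation matrix $M(H) = \diag(1,\beta,\gamma)$ with $\beta,\gamma\neq 0$; global Pauli relabellings (also free in \shamlf) let us permute these three coefficients as convenient. The plan then splits by cases on whether $\{1,\beta,\gamma\}$ already contains two values whose ratio matches an interaction covered by Lemma~\ref{lem:xxzzhamlf}, but the uniform approach is the gadget. Concretely, I would set up a single gadget qubit for each edge: place $H$ across logical qubit and ancilla, add $1$-local terms $\Delta\,(I-Z)/2$ (or a rotated version) on the ancilla so that $H_{\mathrm{gadget}} = \Delta\,\Pi_+^{\mathrm{anc}}$ has a nondegenerate ancilla ground state, and take $V$ to be the $H$ interactions plus tunable $1$-local logical fields. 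Applying Corollary~\ref{cor:perturb} with $\heff$ the truncation of the self-energy series~\eqref{eq:series2} after second order, I would compute that $V_- $ plus the second-order term $V_{-+}V_{+-}/(z-\Delta)$ yields, on the logical qubits, an operator of the form $\alpha\,XX + \delta\,YY + \eta\,ZZ$ plus controllable $1$-local terms, where the coefficients $\alpha,\delta,\eta$ depend on the chosen ancilla field direction. The free $1$-local terms absorb the induced $1$-local corrections, so only the $2$-local part matters.

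The main obstacle, and the step requiring genuine computation, is verifying that the gadget's tunable parameters suffice to cancel the $YY$ coefficient $\delta$ while keeping the ratio of the surviving $XX$ and $ZZ$ coefficients equal to the fixed $\gamma'$ demanded by Lemma~\ref{lem:xxzzhamlf}. This is an algebraic feasibility question about the image of the second-order map as the ancilla field and weights vary, and it is exactly the kind of eigenvalue/eigenvector gadget calculation the authors describe carrying out with computer algebra; once a working choice of parameters is found, checking it by substituting into~\eqref{eq:series2} is routine. Having produced an effective $XX + \gamma' ZZ$ interaction on the logical qubits to within operator-norm error $O(\|V\|/\lambda_+) + O(\epsilon)$, I would use parallel gadgets (one fresh ancilla per edge, which by the discussion preceding Theorem~\ref{thm:opnorm} do not interfere and increase the norm only polynomially) to simulate an entire $\{XX+\gamma' ZZ\}$-Hamiltonian simultaneously. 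Choosing $\Delta = \poly(n)$ large enough drives the approximation error below the promise gap $1/\poly(n)$, so the ground-state energy of the gadget Hamiltonian tracks that of the target Hamiltonian up to this precision, completing the reduction and hence establishing $\qma$-completeness. Membership in $\qma$ is immediate since {\sc $2$-Local Hamiltonian} is in $\qma$. Finally, to obtain the stated geometric strengthening, I would note that Lemma~\ref{lem:xxzzhamlf} already supplies a $\qma$-complete instance with equally weighted $2$-qubit terms on a $2$d square lattice, and since the gadget is applied identically and with equal strength on each lattice edge (each with its own local ancilla), the resulting $\{XX+\beta YY + \gamma ZZ\}$ instance inherits equal weights and the square-lattice geometry.
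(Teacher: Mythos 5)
Your proposal is correct and takes essentially the same route as the paper: a mediator-qubit second-order gadget (the paper's Lemma~\ref{lem:gadget}, its packaging of Corollary~\ref{cor:perturb}) applied in parallel across each edge, reducing to Lemma~\ref{lem:xxzzhamlf}, with induced $1$-local corrections absorbed by the free local terms. The one step you defer to computer algebra the paper settles with a single explicit choice: taking the heavily weighted ancilla state $\ket{\psi} = \frac{1}{\sqrt{2}}(\ket{0} - i\ket{1})$, a $Y$ eigenstate, forces $\bracket{\psi^\perp}{Y}{\psi} = 0$ so the $YY$ cross terms vanish identically at second order, yielding $\heff = \helse - 2\left(XX + \gamma^2 ZZ\right)$ with no parameter tuning --- and since Lemma~\ref{lem:xxzzhamlf} holds for \emph{every} fixed nonzero coefficient, there is no specific target ratio $\gamma'$ that the gadget must hit.
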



\begin{lem}
\label{lem:xzskewlf}
{\sc $\{XZ - ZX\}$-Hamiltonian with local terms} is $\qma$-complete. This holds even if all 2-qubit interactions have the same weight and are restricted to the edges of a 2d square lattice.
\end{lem}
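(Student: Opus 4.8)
The plan is to prove QMA-hardness of {\sc $\{XZ-ZX\}$-Hamiltonian with local terms} by using the antisymmetric interaction $XZ-ZX$, together with the freely available $1$-local terms, to build a perturbative gadget that simulates one of the symmetric interactions already known to be QMA-hard (Lemma~\ref{lem:xxzzhamlf} or~\ref{lem:xxyyzzhamlf}). Since containment of {\sc $\mathcal{S}$-Hamiltonian} in QMA is immediate (it is a special case of {\sc $k$-local Hamiltonian}), the whole task is the hardness reduction. I would aim to produce an effective two-qubit interaction of the form $\alpha XX + \beta ZZ$ (or similar) on a pair of logical qubits, using $XZ-ZX$ interactions and strong $1$-local fields on ancilla qubits, and then invoke the known QMA-completeness of that target interaction.

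The key technical step is a second-order perturbative computation, controlled by Corollary~\ref{cor:perturb}. First I would fix an unperturbed Hamiltonian $H$ that is $1$-local on the ancilla qubits (a heavily weighted local field $\Delta\,\Pi_+$ whose ground space forces the ancillas into a definite state), so that $H_-=0$ and the self-energy series~(\ref{eq:series2}) applies. The perturbation $V$ consists of $XZ-ZX$ interactions coupling the logical qubits to the ancillas, plus compensating $1$-local terms. Because $XZ-ZX$ is traceless and antisymmetric, the first-order term $V_-$ will typically vanish on the relevant subspace, and the desired symmetric two-qubit interaction must be generated at second order through the term $V_{-+}G_+V_{+-}$. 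Concretely, two copies of an antisymmetric coupling, mediated through a common ancilla, should multiply to yield a symmetric effective interaction: schematically, a product like $(X\otimes Z - Z\otimes X)$ applied across two different bonds sharing an ancilla produces cross terms proportional to $XX$, $ZZ$, etc., on the logical pair, while the unwanted $1$-local and $2$-local by-products are cancelled by adding explicit compensating terms (which we are free to do, since all $1$-local matrices are available). I would choose the ancilla field direction and the coupling orientations so that the surviving second-order interaction matches $XX+\gamma ZZ$ for some usable $\gamma\neq 0$, then fix $\Delta=\poly(\|V\|)$ large enough that Corollary~\ref{cor:perturb} guarantees $\|\widetilde H_{<\lambda_\ast}-\heff\|$ is inverse-polynomially small, preserving the promise gap.

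The main obstacle I expect is the antisymmetry of $XZ-ZX$: unlike the symmetric interactions handled in the earlier lemmas, an antisymmetric coupling cannot directly encode a symmetric target, so everything must be arranged to appear at second order, and one must verify that the relevant second-order matrix element is genuinely nonzero and symmetric after summing over intermediate ancilla states. Getting the orientations and signs of the bonds right so that the $XX$ and $ZZ$ contributions survive with a nonzero coefficient while the parasitic terms can be removed by $1$-local compensation is the delicate part; as the authors note in the proof-techniques discussion, finding such a gadget is much harder than verifying it, and likely required computer-algebra search. A secondary point is confirming that the lattice and equal-weight restrictions claimed in the statement are inherited from the target lemma: since the gadget is applied locally and identically across each edge, a parallel application (in the sense of the perturbation-theory discussion) should let the logical qubits sit on a $2$d square lattice with equal-weight interactions, provided the ancilla qubits can be laid out without the gadgets interfering, which I would check as the final step.
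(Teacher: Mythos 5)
Your proposal is essentially the paper's proof: the paper implements exactly this second-order mediator-qubit construction (encapsulated as Lemma~\ref{lem:gadget}, a consequence of Corollary~\ref{cor:perturb}), using two gadget types in parallel in which the middle ancilla $b$ carries a strong field $\Delta\proj{1}$ (respectively $\Delta\proj{+}$) and is coupled to $a$ and $c$ by $\sqrt{\Delta}(XZ-ZX)$, so that the two antisymmetric couplings multiply at second order to give effective $2\,X_aX_c$ (respectively $2\,Z_aZ_c$) interactions up to efficiently computable $1$-local corrections, after which $\qma$-hardness, the equal-weight condition and the 2d-lattice restriction are all inherited from Lemma~\ref{lem:xxandzzlf}. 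One caveat on a detail of your sketch: unwanted $2$-local by-products could \emph{not} in general be cancelled by $1$-local compensating terms as you suggest, which is precisely why the paper picks ancilla states ($\ket{1}$ and $\ket{+}$) for which the parasitic cross-terms vanish identically, targeting the two-interaction set $\{XX,ZZ\}$ of Lemma~\ref{lem:xxandzzlf} rather than a single combined interaction $XX+\gamma ZZ$ as in Lemma~\ref{lem:xxzzhamlf}.
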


Special cases of Lemmas~\ref{lem:xxzzhamlf} and~\ref{lem:xxyyzzhamlf} were previously proven by Schuch and Verstraete~\cite{schuch09}. The following lemma, which was also essentially proven in~\cite{schuch09} (see also~\cite{biamonte08}), will be useful as well.

\begin{lem}
\label{lem:xxandzzlf}
Let $\alpha$ and $\beta$ be arbitrary fixed non-zero real numbers. {\sc $\{XX,ZZ\}$-Hamiltonian with local terms} is $\qma$-complete, even if all 2-qubit interactions are restricted to the edges of a 2d square lattice, the $XX$ terms all have weight $\alpha$, and the $ZZ$ terms all have weight $\beta$.
\end{lem}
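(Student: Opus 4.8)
The plan is to establish the two directions of $\qma$-completeness separately, with essentially all of the substantive work inherited from Lemma~\ref{lem:xxzzhamlf}. Containment in $\qma$ is immediate: {\sc $\{XX,ZZ\}$-Hamiltonian with local terms} is a special case of {\sc 2-local Hamiltonian}, which lies in $\qma$ by~\cite{kitaev02}. All terms have constant operator norm and poly-bounded weights, so every instance meets the requirements of Definition~\ref{dfn:klham}.

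For $\qma$-hardness I would reduce from {\sc $\{XX+\gamma ZZ\}$-Hamiltonian with local terms}. Given the fixed nonzero target weights $\alpha,\beta$, set $\gamma := \beta/\alpha$, which is again a fixed nonzero real. By Lemma~\ref{lem:xxzzhamlf}, {\sc $\{XX+\gamma ZZ\}$-Hamiltonian with local terms} is $\qma$-complete, and remains so when all of its 2-qubit interactions share a common weight and sit on the edges of a 2d square lattice. Such a hard instance is a Hamiltonian
\[ H = \sum_{(i,j)\in E} w\,(X_i X_j + \gamma Z_i Z_j) + L, \]
where $E$ is a set of lattice edges, $w\neq 0$ is the single common weight, and $L$ collects arbitrary freely available 1-local terms.

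The key observation is that the combined edge interaction decomposes exactly into the two interactions we are permitted, so the reduction is exact rather than perturbative. Rescaling $H$ together with the promise thresholds $a<b$ by the fixed constant $\alpha/w$ leaves the complexity of the associated promise problem unchanged and normalises the common 2-qubit weight to $\alpha$, giving
\[ \tfrac{\alpha}{w} H = \sum_{(i,j)\in E} \bigl(\alpha\, X_i X_j + \beta\, Z_i Z_j\bigr) + \tfrac{\alpha}{w} L. \]
On each edge $(i,j)\in E$ we now place one $XX$ term of weight $\alpha$ and one $ZZ$ term of weight $\beta$; placing two interactions across the same pair of qubits is allowed, since the distinctness requirement only forbids repeating a qubit within a single interaction. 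Realising $\tfrac{\alpha}{w}L$ with the free 1-local terms yields a valid instance of {\sc $\{XX,ZZ\}$-Hamiltonian with local terms} in which every $XX$ term has weight $\alpha$, every $ZZ$ term has weight $\beta$, and all 2-qubit interactions lie on lattice edges. The two Hamiltonians are literally equal, so their spectra coincide and the promise is preserved, establishing $\qma$-hardness.

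There is essentially no obstacle here, which is why no appeal to the perturbation-theory corollaries of Section~\ref{sec:shamlf} is required; the only points needing (minor) care are that the common-weight guarantee of Lemma~\ref{lem:xxzzhamlf} is precisely what permits a single global rescaling to normalise the $XX$ and $ZZ$ weights simultaneously, and that lattice edges not in $E$ simply carry no interaction. The same statement can alternatively be read off directly from the constructions of Schuch and Verstraete~\cite{schuch09} and Biamonte and Love~\cite{biamonte08}.
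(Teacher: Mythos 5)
Your reduction runs in the wrong direction and is circular within this paper's logical structure. You derive Lemma~\ref{lem:xxandzzlf} from Lemma~\ref{lem:xxzzhamlf}, but in the paper the dependency is exactly the reverse: the proof of Lemma~\ref{lem:xxzzhamlf} in Appendix~\ref{sec:qmalf} uses two perturbative gadgets to extract fixed-weight $XX$ and $ZZ$ terms from the coupled interaction $XX+\gamma ZZ$, and then concludes verbatim by invoking Lemma~\ref{lem:xxandzzlf} (``As $\gamma \neq 0$, and Lemma~\ref{lem:xxandzzlf} holds for \emph{arbitrary} non-zero $\alpha$ and $\beta$, the claim follows''). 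So Lemma~\ref{lem:xxandzzlf} must be established first and independently; it is the base case of this cluster of hardness results, not a consequence of them. The paper proves it directly from Theorem~\ref{thm:lattice} (the Biamonte--Love/Oliveira--Terhal combination): starting from $XX$ terms of fixed weight $\alpha$ and $ZZ$ terms of fixed weight $\beta$, a sequence of mediator gadgets (Lemma~\ref{lem:gadget}, with ancilla state $\cos\theta\ket{0}+\sin\theta\ket{1}$) produces effective $XZ$ terms of weight $\alpha\beta\sin(4\theta)$, then $XX$ and $ZZ$ terms of weights $-\alpha^2$ and $-\beta^2$, and finally arbitrarily weighted $XX$ and $XZ$ interactions --- with the reductive steps deliberately balanced so that every edge passes through the same number of gadget rounds, which is what preserves the equal-weight, 2d-lattice structure. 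Your closing remark that the statement ``can alternatively be read off'' from~\cite{schuch09} and~\cite{biamonte08} does not rescue the argument: those works allow arbitrary, independently chosen coefficients $J_{ij}, K_{ij}$ on the 2-qubit terms, whereas the fixed-weight strengthening is precisely the content this lemma adds and is why the paper gives the careful gadget proof.

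Even setting circularity aside, your rescaling step needs more care than you give it. Lemma~\ref{lem:xxzzhamlf} guarantees some common weight $w$ on the 2-qubit terms, but you do not get to choose its sign; if $\operatorname{sgn}(w) \neq \operatorname{sgn}(\alpha)$ then multiplying by $\alpha/w < 0$ inverts the spectrum and converts the ground-state promise problem into a highest-eigenvalue problem, which is not the same problem. This is repairable (e.g.\ rescale by $|\alpha/w|$ and flip the sign of the $XX$ or $ZZ$ terms by conjugating with $Z$, respectively $X$, on one sublattice of the bipartite square lattice, absorbing the effect on the free 1-local terms), but as written the step is incomplete --- and in any case the whole route collapses to the circularity above.
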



The final lemma we will need characterises an easier special case:

\begin{lem}
\label{lem:zzhamlf}
{\sc $\{ZZ\}$-Hamiltonian with local terms} is $\stoqma$-complete.
\end{lem}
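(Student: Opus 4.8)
The plan is to establish both directions of $\tim$-completeness. Hardness is immediate: {\sc $\{ZZ,X\}$-Hamiltonian} is literally a special case of {\sc $\{ZZ\}$-Hamiltonian with local terms} (restrict the free 1-local terms to scalar multiples of $X$), so the inclusion map is a reduction showing that {\sc $\{ZZ\}$-Hamiltonian with local terms} is $\tim$-hard. All the work is in the reverse direction, namely a polynomial-time (in fact many-one) reduction from {\sc $\{ZZ\}$-Hamiltonian with local terms} to {\sc $\{ZZ,X\}$-Hamiltonian}, which gives membership in $\tim$. A generic instance has the form $H = \sum_{i<j}\alpha_{ij} Z_i Z_j + \sum_k (a_k X_k + b_k Y_k + c_k Z_k)$, and the goal is to turn this into an equivalent transverse Ising instance, eliminating the $Y$ and 1-local $Z$ terms.

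The reduction has two steps. First I would remove the $Y$ terms by conjugating by a product of $Z$-rotations $\bigotimes_k R_z(\phi_k)$, where $R_z(\phi) = e^{-i\phi Z/2}$. Such a conjugation preserves every $Z_i Z_j$ and every $Z_k$ (since $R_z$ commutes with $Z$) while rotating the single-qubit $X$--$Y$ plane, sending $a_k X_k + b_k Y_k$ to $(a_k\cos\phi_k - b_k\sin\phi_k)X_k + (a_k\sin\phi_k + b_k\cos\phi_k)Y_k$. Choosing $\tan\phi_k = -b_k/a_k$ kills the $Y_k$ component and leaves $\sqrt{a_k^2+b_k^2}\,X_k$; this is an eigenvalue-preserving unitary conjugation, so it suffices to record the new coefficients. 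After this step only $ZZ$ terms, 1-local $X$ terms, and 1-local $Z$ terms survive. Second, I would absorb the remaining $c_k Z_k$ terms using a single reference ancilla $r$: replace each $c_k Z_k$ by $c_k Z_k Z_r$. The resulting operator $H''$ on $n+1$ qubits contains only $ZZ$ couplings and $X$ fields, hence is a valid transverse Ising instance.

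The correctness of the second step is where the only real subtlety lies, and it is also the point one might naively expect to be impossible: we would like to ``pin'' the ancilla $r$ in the computational basis so that $Z_r$ acts as $+1$ and $c_k Z_k Z_r$ reduces to $c_k Z_k$, but with access only to $\{ZZ,X\}$ we cannot pin in the $Z$ basis, precisely because of the global spin-flip symmetry $\prod_k X_k$ that underlies the stoquasticity of the transverse Ising model. The resolution is to exploit this symmetry rather than fight it. Since no $X_r$ or $Y_r$ term is present, $Z_r$ commutes with $H''$, so $H''$ block-diagonalises into the $Z_r = +1$ and $Z_r = -1$ sectors. The $+1$ sector is exactly $H'$, the post-rotation Hamiltonian, while the $-1$ sector is $\sum_{i<j}\alpha_{ij}Z_iZ_j + \sum_k a'_k X_k - \sum_k c_k Z_k$, which is $\prod_k X_k\, H'\, \prod_k X_k$ and therefore isospectral to $H'$. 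Hence $\spec(H'') = \spec(H')$ (with multiplicities doubled), and in particular the ground-state energies agree. Finally I would check the routine bookkeeping: $H''$ has $\poly(n)$ terms of polynomially bounded norm, the coefficients $\sqrt{a_k^2+b_k^2}$ need only be computed to enough bits that the induced perturbation of the ground energy is small compared with the promise gap $b-a \ge 1/\poly(n)$, and the thresholds $a,b$ are carried over (shrinking the gap by a negligible amount). This yields an equivalent transverse Ising instance and completes the membership proof.
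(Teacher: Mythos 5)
Your proof is correct and follows essentially the same route as the paper's: the same observation that the transverse Ising model is a special case (hardness), the same $Z$-axis conjugation to eliminate the $Y$ terms, and the same ancilla gadget replacing $c_k Z_k$ by $c_k Z_k Z_r$. The only difference is cosmetic: where you block-diagonalise exactly using $[Z_r, H'']=0$ and the spin-flip conjugation $\bigl(\prod_k X_k\bigr) H' \bigl(\prod_k X_k\bigr)$ to get $\spec(H'')=\spec(H')$ with doubled multiplicities, the paper decomposes an arbitrary eigenvector over the ancilla and invokes convexity before applying the same $X^{\otimes n}$ equivalence — your phrasing is marginally cleaner but the argument is identical in substance.
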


Assuming Lemmas~\ref{lem:xxzzhamlf}--\ref{lem:zzhamlf}, we now prove a crucial proposition which is the 2\nobreakdash-local special case of Theorem~\ref{thm:shamlf}.

\begin{prop}
\label{prop:2shamlf}
Let $\mathcal{S}$ be an arbitrary fixed subset of 2-qubit Hermitian matrices, and let $\mathcal{S}'$ be the subset formed by removing the 1\nobreakdash-local part from each element of $\mathcal{S}$, and then deleting all 0\nobreakdash-local matrices from the resulting set. Then:
\begin{itemize}
\item If $\mathcal{S}'$ is empty, \shamlf\ is in $\ptime$;
\item Otherwise, if there exists $U \in SU(2)$ such that $U^{\otimes 2} H_i (U^{\dag})^{\otimes 2} = \alpha_i Z^{\otimes 2}$ for all $H_i \in \mathcal{S}'$, where $\alpha_i \in \R$, \shamlf\ is $\stoqma$-complete;
\item Otherwise, \shamlf\ is $\qma$-complete. This holds even if we insist that all 2-qubit interactions in the final Hamiltonian have equal weight and are restricted to the edges of a 2d square lattice.
\end{itemize}
\end{prop}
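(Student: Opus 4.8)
The plan is to put the elements of $\mathcal{S}'$ into the normal form of Lemma~\ref{lem:normalform} and then match each possibility against the hardness Lemmas~\ref{lem:xxzzhamlf}--\ref{lem:zzhamlf}. I first use the three observations opening this section: since arbitrary $1$-local terms are free, \shamlf\ for $\mathcal{S}$ is polynomial-time equivalent to \shamlf\ for $\mathcal{S}'$; and by applying any $H\in\mathcal{S}'$ both forwards and backwards I can realise its symmetric part $H^+=(H+FHF)/2$ and antisymmetric part $H^-=(H-FHF)/2$ on any ordered pair of qubits. Because the swap $F$ commutes with $U^{\otimes2}$, conjugation by $U^{\otimes2}$ preserves this splitting, so $M(H^+)$ is symmetric, $M(H^-)$ is skew-symmetric, and Lemma~\ref{lem:normalform} applies to each piece. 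Throughout, conjugating the whole Hamiltonian by a single $U^{\otimes n}$ leaves its spectrum unchanged and maps the free $1$-local terms onto themselves, so I may freely rotate one chosen axis to $Z$.

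The first two cases are short. If $\mathcal{S}'$ is empty the global Hamiltonian is $1$-local, and its ground energy is the sum over qubits of the least eigenvalue of each single-qubit block, computable in $\ptime$. In the second case the hypothesis says precisely that one $U$ sends every $H_i$ to $\alpha_i Z^{\otimes2}$; conjugating by $U^{\otimes n}$ then shows \shamlf\ is polynomial-time equivalent to {\sc $\{ZZ\}$-Hamiltonian with local terms}, which is $\tim$-complete by Lemma~\ref{lem:zzhamlf}. Containment in $\qma$ in every case is immediate, as \shamlf\ is a restriction of {\sc $2$-local Hamiltonian}.

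For the $\qma$ case I would exploit that, since conjugation preserves symmetry type and each $\alpha_iZ^{\otimes2}$ is symmetric of Pauli rank $1$, the $\tim$ hypothesis fails precisely when at least one of the following holds. (A) Some $H\in\mathcal{S}'$ has $H^-\neq0$: its skew correlation matrix has rank $2$, so by Lemma~\ref{lem:normalform} and a global Pauli relabelling $H^-$ is proportional to $XZ-ZX$, and Lemma~\ref{lem:xzskewlf} gives $\qma$-hardness. (B) Every $H$ is symmetric but some has Pauli rank $\ge2$: diagonalising $M(H)$ via Lemma~\ref{lem:normalform} and relabelling the Paulis writes it, up to scaling, as $XX+\beta YY+\gamma ZZ$ with either $\beta=0,\gamma\neq0$ (rank $2$) or $\beta,\gamma\neq0$ (rank $3$), handled by Lemma~\ref{lem:xxzzhamlf} or Lemma~\ref{lem:xxyyzzhamlf} respectively. (C) Every $H$ is symmetric of Pauli rank $1$, i.e.\ $H\propto(\hat n\cdot\vec\sigma)^{\otimes2}$, but the axes $\hat n$ are not all equal up to sign; I pick two elements with $\hat n_1\neq\pm\hat n_2$ and conjugate so that $\hat n_1\mapsto\hat z$, giving access to $ZZ$ and to $P:=(\hat m\cdot\vec\sigma)^{\otimes2}$ with $\hat m\neq\pm\hat z$.

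Case (C) is the main obstacle, since there no individual interaction is hard — each rank-$1$ term is on its own of the $\tim$ type — so hardness must be manufactured by combining two interactions on the same pair. The key point is that the correlation matrix of $a\,ZZ+b\,P$ equals $a\,\hat z\hat z^{T}+b\,\hat m\hat m^{T}$, a sum of two rank-$1$ matrices along the independent directions $\hat z$ and $\hat m$; hence for every $a,b\neq0$ it has rank exactly $2$, and after an $SO(3)$ diagonalisation (realised by a global single-qubit conjugation via Lemma~\ref{lem:rotate}) its vanishing direction can be placed on $Y$, exhibiting $a\,ZZ+b\,P$ as locally unitarily equivalent, up to scaling, to $XX+\gamma ZZ$ with $\gamma\neq0$, so that Lemma~\ref{lem:xxzzhamlf} applies. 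In each of (A)--(C) the effective interaction is obtained as a fixed combination of $\mathcal{S}'$-terms applied with the same pattern on every pair, from which the refinements to the edges of a $2$d square lattice and to uniform weights are inherited from the matching refinements of Lemmas~\ref{lem:xxzzhamlf}--\ref{lem:xxandzzlf}. The two points most needing verification are that the trichotomy (A)--(C) is exhaustive and that the rank-$2$ combination in (C) can never collapse back to Pauli rank $1$.
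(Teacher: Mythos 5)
Your proposal is correct, and for most of its length it coincides with the paper's own proof: the $\ptime$ and $\tim$ cases are handled identically, and your subcases (A) and (B) are exactly the paper's treatment of the skew-symmetric and symmetric Pauli-rank-$\ge 2$ elements via Lemmas~\ref{lem:xzskewlf}, \ref{lem:xxzzhamlf} and~\ref{lem:xxyyzzhamlf}; your exhaustiveness check is also sound, since a nonzero skew-symmetric $3\times 3$ matrix automatically has rank 2 (so rank-1 elements are symmetric), and simultaneous local diagonalisability of symmetric rank-1 terms is precisely alignment of their axes up to sign. Where you genuinely diverge is subcase (C). The paper rotates so that $H_i = ZZ$ and $H_j = (X+\gamma Z)^{\otimes 2}$ and then splits: for $\gamma=0$ it invokes Lemma~\ref{lem:xxandzzlf} on the pair $\{XX,ZZ\}$, while for $\gamma\neq 0$ it invokes the dedicated gadget Lemma~\ref{lem:xztphamlf}, so each lattice edge carries a single interaction type. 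You instead place the sum of the two rank-1 terms on each edge and re-diagonalise: since $M(aZZ+bP) = a\,\hat z\hat z^T + b\,\hat m\hat m^T$ with $\hat z,\hat m$ linearly independent, any $v$ in its kernel satisfies $a(\hat z\cdot v)\hat z + b(\hat m\cdot v)\hat m = 0$ and hence lies in $\spann\{\hat z,\hat m\}^\perp$, so for all $a,b\neq 0$ the rank is exactly 2 and can never collapse to 1 -- this settles your second verification point, even when $a$ and $b$ have opposite signs. Lemma~\ref{lem:rotate} then exhibits the combination as $XX+\gamma' ZZ$ with $\gamma'\neq 0$, and Lemma~\ref{lem:xxzzhamlf} applies uniformly, rendering Lemma~\ref{lem:xztphamlf} (and, within this proposition, Lemma~\ref{lem:xxandzzlf}) unnecessary. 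What each route buys: the paper's keeps one interaction type per edge, making the equal-weight refinement immediate; yours needs the small extra remark -- which your rank argument licenses -- that one may take $a=b$, so both constituent terms on every edge carry the same coefficient and the equal-weight, square-lattice refinement is still inherited from Lemma~\ref{lem:xxzzhamlf}.
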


\begin{proof}
We consider each case in turn. The first case is easy: any Hamiltonian formed only from 1\nobreakdash-local matrices $H_i$ is of the form $H = \sum_i H_i$, and the lowest eigenvalue of $H$ is just the sum of the lowest eigenvalues of the individual matrices $H_i$, which can be calculated efficiently.

For the second case, if there exists such a $U$, then it can be found in time that does not depend on the problem instance. By conjugating all $H_i \in \mathcal{S}'$ by $U^{\otimes 2}$ and rescaling, \shamlf\ is equivalent to {\sc $\{ZZ\}$-Hamiltonian with local terms}. The claim thus follows from Lemma~\ref{lem:zzhamlf} and the discussion in Section~\ref{sec:results}.

For the third case, we first observe that, as a special case of the {\sc Local Hamiltonian} problem, \shamlf\ is clearly in $\qma$. To prove $\qma$-hardness, we split into subcases. First consider the subcase where there exists $H \in \mathcal{S}'$ such that $H$ has Pauli rank at least 2. Then, if $M(H)$ is symmetric, up to rescaling and potentially relabelling Pauli matrices, by Lemma~\ref{lem:normalform} the normal form of $H$ is either of the form $XX + \gamma ZZ$ or $XX + \beta YY + \gamma ZZ$. So by Lemmas~\ref{lem:xxzzhamlf} and~\ref{lem:xxyyzzhamlf}, \shamlf\ is $\qma$-complete. If $M(H)$ is skew-symmetric, then after bringing $H$ to normal form, $H = XZ - ZX$ (up to rescaling and relabelling Pauli matrices), so \shamlf\ is $\qma$-complete by Lemma~\ref{lem:xzskewlf}. Every element in $\mathcal{S}$ that has Pauli rank 0 has been removed, as these are precisely those matrices which are 1\nobreakdash-local. So the only subcase we have left to consider is where each element in $\mathcal{S}'$ has Pauli rank 1, but there does not exist $U$ such that $U^{\otimes 2} H_i (U^{\dag})^{\otimes 2} = \alpha_i Z^{\otimes 2}$ for all $H_i \in \mathcal{S}'$ (as otherwise we would be in the second case). In this subcase there must exist a pair $i\neq j$ and a unitary $U$ such that $U^{\otimes 2} H_i (U^\dag)^{\otimes 2}$ is diagonal, but $U^{\otimes 2} H_j (U^\dag)^{\otimes 2}$ is not diagonal. As $H_i$ and $H_j$ have Pauli rank 1, $M(H_i)$ and $M(H_j)$ are symmetric. By applying $U$ and rescaling, we can assume that
\[ H_i = ZZ, \;\;\;\; H_j = (\alpha X + \beta Y + \gamma Z)(\alpha X + \beta Y + \gamma Z) \]
for some real $\alpha$, $\beta$, $\gamma$ where at least one of $\alpha$ or $\beta$ is non-zero. So, by rescaling $H_j$, we can assume that $\alpha^2 + \beta^2 = 1$. There exists an SO(3) rotation $R$ which maps $(\alpha,\beta,\gamma)$ to $(1,0,\gamma)$ while leaving $(0,0,1)$ unchanged. Therefore, by Lemma~\ref{lem:rotate}, there exists a unitary $V$ such that $V^{\otimes 2} H_j (V^\dag)^{\otimes 2} = (X + \gamma Z)^{\otimes 2}$ and also $V^{\otimes 2} H_i (V^\dag)^{\otimes 2} = ZZ$. If $\gamma=0$, we have the interactions $XX$ and $ZZ$, so by Lemma~\ref{lem:xxandzzlf} this case is also $\qma$-complete. If $\gamma \neq 0$, then by rescaling and subtracting $H_i$ from $H_j$, we can make the interaction $XX + \gamma(ZX+XZ)$. This is local-unitarily equivalent to $XX + \gamma' ZZ$ for some $\gamma' \neq 0$. So this case is $\qma$-complete by Lemma~\ref{lem:xxzzhamlf}, completing the proof.
\end{proof}

The $\qma$-hardness lemmas above are proven using perturbation theory, via similar techniques to previous work of Biamonte and Love~\cite{biamonte08}, Oliveira and Terhal~\cite{oliveira08}, and Schuch and Verstraete~\cite{schuch09}. We defer the proofs to Appendix~\ref{sec:qmalf}. Lemma~\ref{lem:zzhamlf}, by contrast, is proven by showing directly that {\sc $\{ZZ\}$-Hamiltonian with local terms} reduces to {\sc $\{ZZ,X\}$-Hamiltonian}.


\subsection{The case of \texorpdfstring{{\sc $\{ZZ\}$-Hamiltonian with local terms}}{ZZ-Hamiltonian with local terms}}
\label{sec:zzlf}

We now consider the case of 2\nobreakdash-local Hamiltonians whose 2\nobreakdash-local terms are all of the form $ZZ$.

\begin{replem}{lem:zzhamlf}
{\sc $\{ZZ\}$-Hamiltonian with local terms} is $\stoqma$-complete.
\end{replem}

\begin{proof}
The general Ising model with transverse fields, {\sc $\{ZZ,X\}$-Hamiltonian}, is a special case of {\sc $\{ZZ\}$-Hamiltonian with local terms} (cf.\ (\ref{eq:tim})), so by Theorem \ref{thm:bh} {\sc $\{ZZ\}$-Hamiltonian with local terms} is $\stoqma$-hard. We now show that {\sc $\{ZZ\}$-Hamiltonian with local terms} in fact reduces to {\sc $\{ZZ,X\}$-Hamiltonian}, implying that it is in $\stoqma$. Let $H$ be a Hamiltonian on $n$ qubits which is a weighted sum of $ZZ$ terms and arbitrary 1\nobreakdash-local terms. By collecting these 1\nobreakdash-local terms, we can write
\[ H = \sum_{i<j} \alpha_{ij} Z_i Z_j + \sum_k M_k, \]
where the $\alpha_{ij}$ coefficients are arbitrary and each $M_k$ is an arbitrary 1\nobreakdash-local Hermitian matrix acting non-trivially on the $k$'th qubit. By potentially subtracting an $I$ term and writing local $Z$ terms separately, we can rewrite
\[ H = \sum_{i<j} \alpha_{ij} Z_i Z_j + \sum_k \beta_k Z_k + \sum_\ell \gamma_\ell X_\ell + \delta_\ell Y_\ell. \]
For each $\ell$, there exists a unitary $U_\ell$ such that $U_\ell Z U_\ell^\dag = Z$, and
\[ U_\ell (\gamma_\ell X + \delta_\ell Y) U_\ell^\dag = -\sqrt{\gamma^2 + \delta^2} X. \]
By conjugating $H$ by $U_1\otimes \dots \otimes U_n$, we can therefore assume that it is of the form
\[ H = \sum_{i<j} \alpha_{ij} Z_i Z_j + \sum_k \beta_k Z_k + \sum_\ell \gamma_\ell X_\ell. \]
We now show that the 1\nobreakdash-local $Z_k$ interactions can be simulated using only 2\nobreakdash-local interactions. Add an ancilla qubit $a$ and consider the Hamiltonian
\[ H' = \sum_{i<j} \alpha_{ij} Z_i Z_j + \sum_k \beta_k Z_a Z_k + \sum_\ell \gamma_\ell X_\ell. \]
Any normalised eigenvector $\ket{\psi}$ of $H'$ can be decomposed in terms of $\ket{0}$ and $\ket{1}$ on the ancilla qubit, and some other normalised states $\ket{\psi_0}$, $\ket{\psi_1}$ on the rest, as $\ket{\psi} = \eta \ket{0}\ket{\psi_0} + \zeta \ket{1}\ket{\psi_1}$. Thus we have
\begin{eqnarray*}
\bracket{\psi}{H'}{\psi} &=& \left(\eta^*\bra{0}\bra{\psi_0} +\zeta^* \bra{1}\bra{\psi_1} \right)\left(\sum_{i<j} \alpha_{ij} Z_i Z_j + \sum_k \beta_k Z_a Z_k + \sum_\ell \gamma_\ell X_\ell \right) \left(\eta \ket{0}\ket{\psi_0} + \zeta \ket{1}\ket{\psi_1} \right)\\
&=& |\eta|^2 \bra{\psi_0}\left( \sum_{i<j} \alpha_{ij} Z_i Z_j + \sum_k \beta_k Z_k + \sum_\ell \gamma_\ell X_\ell \right) \ket{\psi_0}\\
\\ &+& |\zeta|^2 \bra{\psi_1}\left( \sum_{i<j} \alpha_{ij} Z_i Z_j - \sum_k \beta_k Z_k + \sum_\ell \gamma_\ell X_\ell \right) \ket{\psi_1}.
\end{eqnarray*}
By convexity, the minimum of this quantity over $\ket{\psi}$ is achieved by minimising each of the terms over $\ket{\psi_0}$ and $\ket{\psi_1}$ separately, then taking the minimum of the two. The minimum of the first term is just the lowest eigenvalue of $H$. Let $H^{(1)}$ be the Hamiltonian appearing in the second term, and observe that if we conjugate each qubit of $H^{(1)}$ by $X$, this leaves all terms invariant except the 1\nobreakdash-local $Z$ terms, which have their signs flipped. Therefore, $H^{(1)}$ has the same eigenvalues as $H$, so the lowest eigenvalue of $H'$ is the same as the lowest eigenvalue of $H$. This completes the proof.
\end{proof}


\subsection{Generalisation to \texorpdfstring{$k$}{k}\nobreakdash-local Hamiltonians}
\label{sec:klocallf}

We now complete the proof of Theorem~\ref{thm:shamlf} by generalising Proposition~\ref{prop:2shamlf} to arbitrary $k$\nobreakdash-local Hamiltonians. The proof is based on the following simple corollary of Corollary~\ref{cor:zerothorder}, which allows us to restrict $k$-qubit Hamiltonians to $(k-1)$-qubit Hamiltonians. We use local terms to ``fix'' one qubit in a $k$-qubit interaction, effectively generating a $(k-1)$-qubit interaction on the rest.

\begin{lem}
\label{lem:restrict}
Let $H$ be a Hamiltonian on $n$ qubits, let $a$ be the label of the first qubit, and write $V = I \otimes A + X \otimes B + Y \otimes C + Z \otimes D$ for some Hermitian matrices $A$, $B$, $C$, $D$ on $n-1$ qubits. Consider the Hamiltonian
%
%
\[ \widetilde{H} = V + \Delta \proj{\psi}_a, \]
for arbitrary $\ket{\psi} \in B(\C^2)$, where $\Delta = \delta \|V\|^2$, for arbitrary $\delta\ge4$. This is illustrated by the following diagram, where we generalise our usual notation to allow $k$\nobreakdash-local interactions for $k>2$:
\begin{center}
\begin{tikzpicture}[inner sep=0.5mm,xscale=10]
\node (a) at (0,0) [circle,fill=black,label=south:$a$,label=north:$\Delta \proj{\psi}$] {};
\node (b) at (1,0) [circle,fill=black,label=south:$\{a\}^c$] {};
\draw (a) to node[midway,above] {$I \otimes A + X \otimes B + Y \otimes C + Z \otimes D$} (b);
\end{tikzpicture}
\end{center}
%
Then
\[ \left\|\widetilde{H}_{<\Delta/2} - \proj{\psi^\perp}_a\left(A + \bracket{\psi^\perp}{X}{\psi^\perp} B + \bracket{\psi^\perp}{Y}{\psi^\perp} C + \bracket{\psi^\perp}{Z}{\psi^\perp} D\right) \right\| = O(\delta^{-1}). \]
\end{lem}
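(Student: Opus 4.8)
The statement is precisely a single application of Corollary~\ref{cor:zerothorder} with the penalty Hamiltonian being the rank-one projector $\proj{\psi}_a$ acting on qubit $a$. First I would set up the correspondence to the hypotheses of Corollary~\ref{cor:zerothorder}. The projector $\proj{\psi}_a$ has lowest eigenvalue $0$ (on the subspace where qubit $a$ is in state $\ket{\psi^\perp}$) and next eigenvalue $1$ (where qubit $a$ is in $\ket{\psi}$), with the degeneracies spread across the remaining $n-1$ qubits. So the nullspace of $\proj{\psi}_a$ is exactly $\proj{\psi^\perp}_a \otimes \C^{2^{n-1}}$, and its next smallest nonzero eigenvalue is $1$. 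Playing the role of $V$ in the corollary is the matrix $H$ itself (with $\|V\| = \|H\| \ge 1$, which I would note can be assumed WLOG, as $H$ can be rescaled or an identity shift added), and $\Delta = \delta\|H\|^2$ for $\delta \ge 4$. Thus $\widetilde{H} = \Delta\,\proj{\psi}_a + H$ fits the form required by Corollary~\ref{cor:zerothorder}, which immediately gives $\|\widetilde{H}_{<\Delta/2} - V_-\| \le 41/\delta = O(\delta^{-1})$, where $V_- = \Pi_- H \Pi_-$ and $\Pi_- = \proj{\psi^\perp}_a \otimes I$.

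\textbf{Key computation.}
The only real work is to identify $V_- = \Pi_- H \Pi_-$ with the claimed effective Hamiltonian. Writing $\Pi_- = \proj{\psi^\perp}_a \otimes I_{\{a\}^c}$ and using the decomposition $H = I \otimes A + X \otimes B + Y \otimes C + Z \otimes D$, I would compute
\[
V_- = \proj{\psi^\perp}_a \otimes \left(\ip{\psi^\perp}{\psi^\perp} A + \bracket{\psi^\perp}{X}{\psi^\perp} B + \bracket{\psi^\perp}{Y}{\psi^\perp} C + \bracket{\psi^\perp}{Z}{\psi^\perp} D\right),
\]
which, since $\ip{\psi^\perp}{\psi^\perp} = 1$, is exactly the operator appearing in the statement. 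This follows because conjugating each single-qubit Pauli factor by the projector $\proj{\psi^\perp}$ replaces it by its expectation value $\bracket{\psi^\perp}{\sigma}{\psi^\perp}$ times $\proj{\psi^\perp}$.

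\textbf{Main obstacle.}
There is no genuine obstacle here; the lemma is essentially a packaging of Corollary~\ref{cor:zerothorder} in the special case of a rank-one local penalty. The only point requiring a little care is the normalisation assumption $\|H\| \ge 1$ needed to invoke the corollary, which I would dispatch by noting that we may rescale $H$ (and correspondingly $\Delta$) without loss of generality, or simply observe that the interesting regime is large $\|H\|$; and that the eigenvalue gap of $\proj{\psi}_a$ is exactly $1$, matching the corollary's hypothesis that $\lambda_{\min}=0$ and the next eigenvalue is $1$. Everything else is the direct substitution above.
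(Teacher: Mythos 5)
Your proposal is correct and takes essentially the same approach as the paper: the paper's proof of Lemma~\ref{lem:restrict} is precisely a one-line invocation of Corollary~\ref{cor:zerothorder} with $\proj{\psi}_a$ in the role of the penalty Hamiltonian and $V = I \otimes A + X \otimes B + Y \otimes C + Z \otimes D$, with the identification of $V_-$ being the same direct computation you carry out. Your remark about the normalisation $\|V\| \ge 1$ is a minor point of care that the paper leaves implicit.
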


\begin{proof}
We use Corollary~\ref{cor:zerothorder}, in whose terminology we have $H = \proj{\psi}_a$ and $V = I \otimes A + X \otimes B + Y \otimes C + Z \otimes D$.
\end{proof}

We observe that, given a Hamiltonian $V$ on $k$ qubits, Lemma~\ref{lem:restrict} allows us to extract arbitrary submatrices given by the Pauli expansion of $V$. Indeed, if we expand $V = I \otimes A + X \otimes B + Y \otimes C + Z \otimes D$, by letting $\ket{\psi}$ be the eigenvector of $X$, $Y$ or $Z$ with eigenvalue $\pm 1$, we can produce the effective interactions $A \pm B$, $A \pm C$ and $A \pm D$ (up to an additive error $O(\delta^{-1})$). By adding/subtracting these matrices we can make each of $\{A,B,C,D\}$. We can then apply this inductively to extract submatrices of $\{A,B,C,D\}$. In particular, for each choice of Pauli matrices on $k-2$ qubits, we can extract the submatrix on the remaining 2 qubits.

We are finally ready to complete the proof of Theorem~\ref{thm:shamlf}, which we restate for convenience.

\begin{repthm}{thm:shamlf}
Let $\mathcal{S}$ be an arbitrary fixed subset of Hermitian matrices on at most $k$ qubits, where $k=O(1)$, and let $\mathcal{S}'$ be the subset formed by removing the 1\nobreakdash-local part from each element of $\mathcal{S}$, and then deleting all 0\nobreakdash-local matrices from the resulting set. Then:
\begin{itemize}
\item If $\mathcal{S}'$ is empty, \shamlf\ is in $\ptime$;
\item Otherwise, if there exists $U \in SU(2)$ such that $U$ locally diagonalises $\mathcal{S}'$, \shamlf\ is $\stoqma$-complete;
\item Otherwise, \shamlf\ is $\qma$-complete. If every matrix in $\mathcal{S}'$ acts on 2 qubits, this holds even if we insist that the 2-qubit interactions in the final Hamiltonian are restricted to the edges of a 2d square lattice and all have equal weight.
\end{itemize}
\end{repthm}

\begin{proof}
The first case is the same argument as in Proposition~\ref{prop:2shamlf}. In the second case, observe that after $U$ is applied to each qubit, each matrix $H_i\in \mathcal{S}$ must be of the form $\sum_{S \subseteq [k]} \alpha^{(i)}_S Z_S$, i.e.\ a weighted sum of tensor products of $Z$ matrices acting on subsets $S$. Also, there must be at least one pair $i$, $S$ with $|S| \ge 2$ such that $\alpha^{(i)}_S \neq 0$ (or all $H_i$ would be 1\nobreakdash-local). We can use Lemma~\ref{lem:restrict} to extract this term, giving us access to an interaction of the form $Z^{\otimes \ell}$ for some $\ell$ between 2 and $k$. If $\ell=2$, we can produce an arbitrary transverse Ising model Hamiltonian, so obtain $\stoqma$-hardness. If $\ell > 2$, we can use Lemma~\ref{lem:restrict} once more to obtain a $Z^{\otimes 2}$ interaction from $Z^{\otimes \ell}$, implying $\stoqma$-hardness again.

We still need to prove that this case of \shamlf\ is contained within $\stoqma$. In order to reduce each $\ell$-body interaction in $\mathcal{S}$ to 2-body interactions, for arbitary constant $\ell > 2$, we rely on work of Biamonte~\cite[Theorem IV.1]{biamonte08a}, who has shown precisely the result we need: namely that the low-energy eigenvalues of any $\ell$-body diagonal interaction can be reproduced exactly by a 2-body diagonal interaction using some additional qubits. Also, using a similar argument to the proof of Lemma~\ref{lem:zzhamlf}, we can assume that the 1\nobreakdash-local part of any Hamiltonian produced using terms in $\mathcal{S}$ and additional arbitrary 1\nobreakdash-local terms is a weighted sum of $Z$ and $X$ terms. Thus the diagonal part of the Hamiltonian can be simulated using 2-body interactions via the techniques of~\cite{biamonte08a}, and the $X$ terms added afterwards. (We remark that subsequent work of de las Cuevas and Cubitt~\cite{delascuevas14} gives a characterisation of all diagonal Hamiltonians for which this idea works.)


%
%
%
%

We finally move on to the third case. Consider an arbitrary set $\mathcal{S}$ of $k$\nobreakdash-local matrices that does not satisfy either of the other two cases and assume that \shamlf\ is {\em not} $\qma$-complete. By Proposition~\ref{prop:2shamlf} and Lemma~\ref{lem:restrict}, this implies that there exists a $U$ such that the 2\nobreakdash-local part of every 2-qubit matrix $H'$ that can be produced by extracting terms from any matrix in $\mathcal{S}$ is diagonalised by $U^{\otimes 2}$. As local unitaries preserve Pauli weights, this means that the 2\nobreakdash-local component of $H'$ is proportional to $(U^\dag Z U)^{\otimes 2} =: M^{\otimes 2}$.

Let $H$ be any element of $\mathcal{S}$, and assume without loss of generality that $H$ acts on exactly $k$ qubits. Fix $d \ge 2$ and let $H^{(d)}$ denote the $d$\nobreakdash-local part of $H$ (i.e.\ the part which is $d$\nobreakdash-local, but not $d'$\nobreakdash-local for any $d' < d$). 
For each $d$-subset $S$ of $[k]$, let $H_S$ be the matrix produced by summing all the terms in the Pauli expansion of $H$ which are the identity on the complement of $S$. Then $\sum_S H_S^{(d)} = H^{(d)}$. Consider each term $H_S$ that has non-zero $d$\nobreakdash-local part. Now, for each 2-subset $T \subseteq S$, and each $w \in \{0,\dots,3\}^{|S|-2}$, let $A_{(T)}^{(w)}$ be the 2-qubit matrix produced by extracting terms in the Pauli decomposition of $H_S$ which correspond to the string $w$ on the set $S\backslash T$. Then $H_S = \sum_w A_{(T)}^{(w)} \otimes \sigma_w$, where $\sigma_w = \bigotimes_{i=1}^{|S|-2} \sigma^{w_i}$. For all $w$, the 2\nobreakdash-local part of $A_{(T)}^{(w)}$ is equal to $\alpha_{T,w} M^{\otimes 2}$ for some coefficient $\alpha_{T,w}$, otherwise \shamlf\ would be $\qma$-complete. Further, if we sum the 2\nobreakdash-local part of $A_{(T)}^{(w)}$ over strings $w$ containing no zeroes (call this set $F$), we get precisely $H_S^{(d)}$, i.e.\
\[ H^{(d)}_S = \sum_{w\in F} A_{(T)}^{(w)} \otimes \sigma_w = \sum_{w\in F} \alpha_{T,w} M^{\otimes 2} \otimes \sigma_w = M^{\otimes 2} \otimes \left( \sum_{w\in F} \alpha_{T,w} \sigma_w \right) =: M^{\otimes 2} \otimes N_T. \]
This holds for each 2-subset $T$, so as $M^{\otimes 2}$ is product, $H^{(d)}_S$ is product across $|S|$-wise splits into qubits, and in particular is proportional to $M^{\otimes d}$. Summing over $S$, we get that $H^{(d)}$ is a linear combination of $M^{\otimes d}$'s and hence is diagonalised by $U^{\otimes k}$. This argument works for any $d \ge 2$; summing over $d$, we get that $H$ is diagonalised by $U^{\otimes k}$. The same argument works for every other element of $\mathcal{S}$, so we get that there exists a $U$ such that every element in $\mathcal{S}$ is locally diagonalised by $U$, so $\mathcal{S}$ fits into the second case stated in the Theorem. This completes the proof.
\end{proof}


\section{The case without local terms}
\label{sec:sham}

We now move on to the more general case of the \sham\ problem, where $\mathcal{S}$ now does not necessarily include all 1\nobreakdash-local terms. In this setting, we no longer have access to the same perturbative gadgets. The proof of this case once again uses the normal form for two-qubit Hamiltonians, and reductions from a few simpler special cases. This time, however, rather than using second-order perturbation theory, the reductions are based on encoding one logical qubit in multiple physical qubits and then using first-order perturbation theory (Corollary~\ref{cor:zerothorder}). The two most important special cases we need to consider are the Heisenberg and XY models, but we also deal with a skew-symmetric case:

\begin{lem}
\label{lem:heisenberg}
{\sc $\{XX+YY+ZZ\}$-Hamiltonian} is $\qma$-complete.
\end{lem}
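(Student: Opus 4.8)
Membership in $\qma$ is immediate, since {\sc $\{XX+YY+ZZ\}$-Hamiltonian} is a special case of {\sc Local Hamiltonian}. For hardness, the plan is to reduce from {\sc $\{XX+YY+ZZ\}$-Hamiltonian with local terms}, which is $\qma$-complete by Lemma~\ref{lem:xxyyzzhamlf} (take $\beta=\gamma=1$). The essential obstacle is symmetry: writing $h_{ij} := X_iX_j+Y_iY_j+Z_iZ_j = 2\,\mathrm{SWAP}_{ij}-I$, every $h_{ij}$ is invariant under $U^{\otimes 2}$ for all $U\in SU(2)$, so any Hamiltonian built from these terms commutes with the total spin operators and its ground space is invariant under global conjugation by $U^{\otimes n}$. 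One therefore cannot directly encode the ground state of a target Hamiltonian carrying local fields, which has no such symmetry. To circumvent this, I would associate to each logical qubit a block of three physical qubits and encode the logical information not in the magnetic ($S_z$) degree of freedom, on which the global $SU(2)$ acts, but in the rotation-invariant ``gauge'' index labelling the two copies of the total-spin-$\tfrac12$ representation in $\tfrac12\otimes\tfrac12\otimes\tfrac12 = \tfrac32\oplus\tfrac12\oplus\tfrac12$. This is the decoherence-free subspace used in work on exchange-only universal computation~\cite{kempe00}; the key point is that Heisenberg interactions act nontrivially on this gauge index while still commuting with the global symmetry.

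Concretely, within each block I would apply the strongly weighted, fully symmetric Heisenberg Hamiltonian $\Delta(\vec S_1\cdot\vec S_2+\vec S_2\cdot\vec S_3+\vec S_1\cdot\vec S_3)$, where $\vec S_i\cdot\vec S_j = \tfrac14 h_{ij}$ is a rescaled interaction from our set. This depends only on the total block spin and has ground space exactly the four-dimensional $S=\tfrac12$ sector (two gauge states times two magnetic spectator states), separated by an $\Omega(\Delta)$ gap from the $S=\tfrac32$ sector. By Corollary~\ref{cor:zerothorder} this projects the low-energy physics onto the tensor product of the per-block logical-plus-spectator subspaces, up to $O(1/\delta)$ error in operator norm. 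To realise effective logical fields I would add weak single-bond terms such as $\vec S_1\cdot\vec S_2$: being rotation scalars, by the Wigner--Eckart theorem they act within the $S=\tfrac12$ sector as $(\text{logical operator})\otimes(\text{identity on the magnetic spectator})$, cleanly decoupling from the spectator. The three bonds are linearly constrained (their sum is a block Casimir, hence constant on the sector), so they realise logical operators spanning a two-parameter traceless family; I would orient the encoding so that this plane contains the $\bar X$ and $\bar Z$ fields needed to reproduce a real ($Y$-free) $\qma$-hard instance of the Heisenberg model with local terms.

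The crux is producing the effective two-qubit logical interaction $\bar X_A\bar X_B+\bar Y_A\bar Y_B+\bar Z_A\bar Z_B$ between blocks using only inter-block Heisenberg bonds, and this is where I expect the main difficulty. A single bond $\vec S_{A,i}\cdot\vec S_{B,j}$ is \emph{not} a scalar under rotations of an individual block, so its first-order restriction $\Pi_-(\cdot)\Pi_-$ to the logical subspace mixes the gauge (logical) and magnetic (spectator) degrees of freedom and is not manifestly of the desired form; conversely, the symmetric sum of all $A$--$B$ bonds, which \emph{is} a global scalar, acts by Wigner--Eckart as a Heisenberg interaction on the \emph{spectators} rather than on the logical qubits. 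Resolving this tension requires an inter-block gadget whose induced logical coupling can be computed in closed form, and this is precisely what the exactly solvable Lieb--Mattis model~\cite{lieb62} supplies: its spectrum and eigenvectors are known explicitly, which lets me evaluate $\Pi_-(\cdot)\Pi_-$ and certify that, for the right choice of bonds and weights, the induced interaction is proportional to the logical Heisenberg term and acts trivially on the spectators. I expect these eigendata to be found with computer-algebra assistance, after which correctness reduces to a finite, hand-checkable eigenvalue/eigenvector verification.

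Finally, I would assemble the reduction by placing, for each logical qubit, one strong symmetric intra-block projector, and adding the weak single-bond terms (for the fields) and the Lieb--Mattis inter-block gadgets (for the two-qubit terms) with weights proportional to the corresponding coefficients of the target Hamiltonian. Applying Corollary~\ref{cor:zerothorder}, a constant number of times in series if necessary, bounds the total perturbative error so that the ground-state energy of the resulting Heisenberg-only Hamiltonian tracks that of the target {\sc $\{XX+YY+ZZ\}$-Hamiltonian with local terms} instance to within $1/\poly(n)$, which preserves the promise gap and completes the reduction.
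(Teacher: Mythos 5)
Your architecture is genuinely the same as the paper's: three-qubit blocks, a strong symmetric intra-block term $F_{12}+F_{13}+F_{23}$ projecting onto the sector $\mathcal{P}_{(2,1)}\otimes\mathcal{Q}_{(2,1)}$ (gauge qubit times magnetic spectator), intra-block single bonds yielding the logical fields ($-F_{12}=Z\otimes I$, $(F_{13}-F_{23})/\sqrt{3}=X\otimes I$, with $F_{12}+F_{13}+F_{23}=0$ on the sector, exactly your two-parameter traceless family), and an appeal to the Lieb--Mattis model. But at the crux --- the inter-block coupling --- your proposal contains a genuine gap. You ask for ``the right choice of bonds and weights'' between two blocks whose induced interaction is proportional to a logical coupling and \emph{acts trivially on the spectators}. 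No such choice exists at first order: explicit computation shows that \emph{every} single inter-block bond restricts to the product form $M^{(i,j)}_{13}(2F-I)_{24}$ plus identity, with the \emph{same} nontrivial spectator factor $(2F-I)$. Hence any linear combination restricts to $\bigl(\sum_{i,j} c_{ij} M^{(i,j)}\bigr)\otimes (2F-I)$, and the spectator factor can never be cancelled; your Wigner--Eckart observation about the fully symmetric sum (which gives logical $II$ times spectator Heisenberg) is just the constant-$c_{ij}$ instance of this, not a tension that a cleverer per-pair gadget can resolve. Moreover ``per-pair Lieb--Mattis gadgets with weights proportional to the target coefficients'' cannot work structurally: the spectators are shared among all pairs involving a given block, weak terms pin nothing, and strong per-pair terms on overlapping spectators do not compose within the perturbative framework (Corollary~\ref{cor:zerothorder} admits only $O(1)$ serial applications, and parallel strong gadgets must act on disjoint qubits or combine into one analysable Hamiltonian).

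The missing idea is that one does not eliminate the spectator factor but fixes the spectator \emph{state}. Since the all-bonds combination yields logical $II\otimes(2F-I)_{i'j'}$, one can synthesise effective swap interactions on arbitrary pairs of spectator qubits; the paper then adds a single strong global Hamiltonian $G=\Delta\sum_{i<j} w_{ij}F_{i'j'}$ on the entire spectator register, choosing the Lieb--Mattis complete-bipartite pattern precisely because (Lemma~\ref{lem:liebmattis}) its ground state $\ket{\phi_{LM}}$ is unique and the correlators $\bracket{\phi_{LM}}{F_{i'j'}}{\phi_{LM}}$ are known exactly ($1$ within a part, $-2/n$ across). A second serial application of Corollary~\ref{cor:zerothorder} then replaces each spectator factor $(2F-I)_{i'j'}$ by the nonzero, inverse-polynomially bounded, efficiently computable scalar $\bracket{\phi_{LM}}{(2F-I)_{i'j'}}{\phi_{LM}}$, which is absorbed into the coupling weights. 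So Lieb--Mattis enters as the global pinning Hamiltonian, not as an inter-block coupling gadget, and what it certifies is nondegeneracy plus nonvanishing two-point functions --- not a spectator-trivial induced interaction. A secondary flaw: you aim to produce the full logical Heisenberg coupling so as to reduce from Lemma~\ref{lem:xxyyzzhamlf}, but the recorded combinations give only logical $XX$ and $ZZ$ (times the scalar), and it is not established that a logical $YY$ lies in the span of the $M^{(i,j)}$; this is also unnecessary, since logical $XX$, $ZZ$ and $X,Z$ fields already suffice for $\qma$-hardness via Lemma~\ref{lem:xxandzzlf}, which is the reduction target the paper actually uses.
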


\begin{lem}
\label{lem:xy}
{\sc $\{XX+YY\}$-Hamiltonian} is $\qma$-complete.
\end{lem}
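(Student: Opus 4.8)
The containment $\{XX+YY\}\text{-Hamiltonian} \in \qma$ is immediate, being a special case of {\sc Local Hamiltonian}, so the whole content is $\qma$-hardness. I would prove this by a gadget reduction parallel to the Heisenberg case (Lemma~\ref{lem:heisenberg}): starting from a known $\qma$-hard target --- for instance {\sc $\{XX,ZZ\}$-Hamiltonian with local terms} (Lemma~\ref{lem:xxandzzlf}), or a Bose--Hubbard-type interaction restricted to a fixed magnetisation sector --- I would encode each logical qubit into a small block of physical qubits and simulate the target using only $XX+YY$ interactions, via first-order perturbation theory (Corollary~\ref{cor:zerothorder}).

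The governing difficulty is symmetry. Every Hamiltonian assembled from $XX+YY$ terms commutes both with the total magnetisation $\sum_i Z_i$ (a $U(1)$ symmetry) and with the global spin flip $\prod_i X_i$ (a $\mathbb{Z}_2$ symmetry, since $XYX=-Y$ makes each $Y_iY_j$ invariant); the latter forces the spectrum to be symmetric under $m \mapsto -m$, so no such Hamiltonian can ever split two opposite-magnetisation sectors. Hence the ground space of any block Hamiltonian is automatically invariant under this symmetry group, and a direct encoding of an arbitrary symmetry-free target is impossible. The resolution, mirroring the Heisenberg proof, is to work inside a single fixed symmetry sector: I would fix the total magnetisation to a value determined by the number of blocks and only ever realise logical interactions preserving it, which suffices because the target problems remain $\qma$-hard when so restricted.

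Concretely, I would take each block to be a $3$-qubit system carrying a strong, exactly solvable $XX+YY$ Hamiltonian $\Delta H_{\mathrm{block}}$ (the ``simpler exactly solvable case'' --- e.g.\ the $XY$ chain or triangle on three qubits), chosen so that within the relevant magnetisation sector the ground space is exactly two-dimensional and separated from the rest by a gap of order $\Delta$, thereby defining a logical qubit $\{\ket{0_L},\ket{1_L}\}$. Applying Corollary~\ref{cor:zerothorder} with $H = \Delta H_{\mathrm{block}}$ summed over all blocks and $V$ the sum of the weak inter-block $XX+YY$ couplings, the low-energy part of the full Hamiltonian is, up to $O(\delta^{-1})$ error, the projection $V_- = \Pi_- V \Pi_-$ onto the logical subspace, which I would identify as the desired logical Hamiltonian. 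Since a single projected coupling yields only logical hopping (an $X_LX_L+Y_LY_L$ term), which alone is free-fermion solvable and hence easy, the crucial point is to exploit the internal structure of the degenerate block ground space --- a multiplicity or gauge index not constrained by the symmetry --- so that suitable combinations of inter-block couplings, applied across chosen pairs of physical qubits with both signs of weight, also generate effective diagonal (density--density) logical interactions, producing a genuinely interacting, $\qma$-hard logical model. Composing these gadgets a constant number of times in series, and in parallel across all blocks and bonds (as permitted by Corollary~\ref{cor:zerothorder} and checked not to interfere), completes the reduction.

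The main obstacle is precisely this last step: exhibiting an exactly solvable $3$-qubit $XX+YY$ gadget whose two-dimensional ground space is clean and gapped and for which the first-order projections of the inter-block couplings span a rich enough set of logical operators --- in particular interactions beyond mere hopping, so as to escape the efficiently-solvable free-fermion regime. Verifying the existence of such a gadget and computing the induced logical operators requires explicit eigenvalue/eigenvector calculations, which I expect to perform with computer-algebra assistance (as in the Heisenberg case) and then check by hand; ensuring that the inter-block couplings leave the intra-block degeneracy intact, so that the parallel gadgets do not interfere, is a further point requiring care.
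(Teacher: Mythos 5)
Your outline correctly identifies the containment, the final target (Lemma~\ref{lem:xxandzzlf}), the role of Corollary~\ref{cor:zerothorder}, and the symmetry obstruction, but the mechanism you propose --- one logical qubit per 3-qubit block, encoded in a two-dimensional ground space, with diagonal logical interactions to be conjured from ``internal structure'' --- cannot be made to work, and the step you yourself flag as the main obstacle is where the paper's proof actually lives. The failure is concrete: every physical coupling $(XX+YY)_{ij}$ conserves each block's magnetisation modulo a transfer of exactly one excitation, i.e.\ it changes $\sum Z$ on each block by exactly $\pm 2$. Hence for \emph{any} two-dimensional block code, the first-order projection $\Pi_- (XX+YY)_{ij} \Pi_-$ is either purely off-diagonal in the logical basis (when the two code states differ in block magnetisation by $2$, e.g.\ the XY chain's ground space at weights 1 and 2) or identically zero (when they share a magnetisation, or differ by more). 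Since first-order perturbation theory is linear in $V$, no combination of such couplings can ever produce a diagonal $Z_L Z_L$-type logical term, so your hoped-for ``density--density'' interactions are unreachable within the one-qubit-per-block ansatz. Your fallback --- fixing the total magnetisation and reducing from a sector-restricted hard problem --- is also unsupported here: the paper contains no hardness result for fixed-magnetisation targets (the Bose--Hubbard/fixed-magnetisation hardness you mention is the \emph{subsequent} Childs--Gosset--Webb work), Lemma~\ref{lem:xxandzzlf} Hamiltonians do not preserve magnetisation and so cannot be simulated by sector-preserving logical dynamics, and the \sham{} instance you output is judged by its global ground energy, with no XY-built term available to split the $\pm m$ sectors that your own $\mathbb{Z}_2$ observation shows are exactly degenerate.

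The paper escapes precisely by enlarging the code: it applies the strong triangle Hamiltonian $H_{12}+H_{13}+H_{23}$, whose ground space is the full four-dimensional complement of the symmetric subspace, and treats each block as a \emph{pair} of logical qubits, one addressable and one not. Intra-block couplings restricted to this space give logical $X$ and $Z$ on the addressable qubit, while inter-block couplings give interactions of the form $M^{(i,j)}_{13}(XX+YY)_{24}$, with combinations yielding logical $XX$ and $ZZ$ --- but always tensored with the unwanted factor $(XX+YY)$ on the unaddressable qubits (this is exactly the sector-exchange index your ``gauge index'' gestures at). That factor is then neutralised by the second exactly solvable ingredient, which is what the paper means by the ``simpler exactly solvable case'': a strong complete-graph XY Hamiltonian $-H_C$ on the unaddressable qubits, whose relevant eigenvector (the uniform superposition at Hamming weight $n/2$, unique by a Johnson-graph regularity argument) has constant, non-zero, efficiently computable overlap $\bracket{\psi}{(XX+YY)_{i'j'}}{\psi}$, turning the unwanted factors into harmless rescalings and leaving arbitrary Hamiltonians of the form $\sum_k \alpha_k X_k + \beta_k Z_k + \sum_{i<j}\gamma_{ij}X_iX_j + \delta_{ij}Z_iZ_j$, whence $\qma$-hardness by Lemma~\ref{lem:xxandzzlf}. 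Your proposal names the right tools but, as written, commits to a code dimension for which the needed logical interactions provably do not arise at first order, and leaves the actual two-tier construction unexhibited.
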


\begin{lem}
\label{lem:xzskew}
{\sc $\{XZ-ZX\}$-Hamiltonian} is $\qma$-complete.
\end{lem}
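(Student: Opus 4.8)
The plan is to prove $\qma$-hardness by using $XZ-ZX$ interactions, together with a perturbative encoding, to simulate an effective XY interaction $XX+YY$ (or Heisenberg interaction $XX+YY+ZZ$) on encoded logical qubits, and then to invoke Lemma~\ref{lem:xy} (resp.\ Lemma~\ref{lem:heisenberg}). Containment in $\qma$ is immediate, since {\sc $\{XZ-ZX\}$-Hamiltonian} is a special case of {\sc Local Hamiltonian}. As in the Heisenberg and XY cases, we cannot hope to encode the target Hamiltonian into a single physical qubit, both because we have no access to $1$-local terms and because of the antisymmetry of the interaction, so an encoding of each logical qubit into a block of physical qubits will be needed.

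First I would record the structural facts about $M := XZ-ZX$ that make such an encoding possible. A short calculation using $XZ=-iY$, $ZX=iY$ gives $M^2 = 2(I - Y\otimes Y)$, so $M$ has spectrum $\{-2,0,0,+2\}$, with its two-dimensional kernel equal to the $+1$ eigenspace of $Y\otimes Y$. This degeneracy is what we exploit as a logical subspace. Since a single heavily weighted term $\Delta M$ has its kernel in the \emph{middle} of its spectrum rather than at the bottom, one interaction does not suffice to project onto a logical qubit; instead I would search for a small gadget built only from $XZ-ZX$ terms (on a constant number of physical qubits, in the spirit of the Lieb--Mattis gadget used for the Heisenberg model) whose ground space is two-dimensional and separated from the rest of the spectrum by a constant gap.

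With such a block gadget $H_{\mathrm{block}}$ in hand, I would apply Corollary~\ref{cor:zerothorder}: weighting the intra-block terms by a large $\Delta = \delta\|V\|^2$ projects the system, to first order, onto the tensor product of the logical subspaces, and the effective Hamiltonian is $V_- = \Pi_- V \Pi_-$, where $V$ is a weak sum of $XZ-ZX$ terms coupling physical qubits in different blocks. First order already suffices to produce a genuine two-logical-qubit interaction, because a physical term $(X\otimes Z - Z\otimes X)$ joining a qubit of block $A$ to a qubit of block $B$ projects to $(\Pi_A X \Pi_A)\otimes(\Pi_B Z \Pi_B) - (\Pi_A Z \Pi_A)\otimes(\Pi_B X \Pi_B)$, an operator on the logical pair. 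The design goal is to choose the gadget and the inter-block couplings (which physical qubits to connect, and in which orientation, since $M$ is antisymmetric and hence orientation-dependent) so that the projected physical Paulis realise the logical Paulis in a way that makes $V_-$ proportional to $\bar X\bar X+\bar Y\bar Y$ (or the full Heisenberg term), up to local unitaries that leave the complexity of the target model unchanged.

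The main obstacle is exactly this design step: manufacturing a \emph{symmetric} effective interaction out of the \emph{antisymmetric} $XZ-ZX$ coupling, which requires both an exactly solvable $XZ-ZX$ block whose ground space carries the right logical structure and careful bookkeeping of orientations and weights so that the first-order projection assembles the desired symmetric combination of logical Paulis, rather than stray $1$-local terms or an interaction we cannot use. As the paper notes for its other gadgets, locating such a block is likely to need computer-algebra assistance; but once the eigenvectors and eigenvalues of the gadget are written down, verifying that $V_-$ has the claimed form and bounding the perturbative error via Corollary~\ref{cor:zerothorder} is a routine by-hand check. Once the effective XY (or Heisenberg) interaction is produced with constant relative error, the $\qma$-hardness of that model (Lemma~\ref{lem:xy} or Lemma~\ref{lem:heisenberg}) transfers to {\sc $\{XZ-ZX\}$-Hamiltonian}, completing the reduction.
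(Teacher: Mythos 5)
Your strategy is exactly the one the paper follows: encode each logical qubit in a constant-size block of physical qubits using strongly weighted $XZ-ZX$ terms, project onto a two-dimensional ground space via Corollary~\ref{cor:zerothorder}, couple blocks with weak $XZ-ZX$ terms so that the first-order effective Hamiltonian is a symmetric two-qubit interaction, and then reduce from the XY model via Lemma~\ref{lem:xy}. Your structural observations are also correct: $M^2 = 2(I - Y\otimes Y)$ for $M = XZ-ZX$, so a single strong $M$ term has its two-dimensional kernel in the \emph{middle} of its spectrum and cannot by itself isolate a logical qubit, and the orientation of each application matters because $M$ is antisymmetric under swapping the qubits.

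However, as a proof the proposal has a genuine gap precisely where you flag ``the main obstacle'': you never exhibit the block gadget, nor verify that any choice of inter-block couplings yields a symmetric effective interaction; you only argue that such a gadget plausibly exists and could be found by computer search. That existence claim is the entire content of the lemma beyond the generic perturbative framework, which the paper has already set up elsewhere. The paper's proof supplies it concretely: take three physical qubits per logical qubit and apply the cyclically oriented interaction $H_{12}+H_{23}+H_{31}$ (the orientation is essential, since $H_{ij}\neq H_{ji}$), whose ground space is two-dimensional with an explicitly computed basis; then, for two triples $(1,2,3)$ and $(4,5,6)$, the single combination $H_{16}-H_{15}$ restricted to this logical space equals $\frac{4}{3\sqrt{3}}\left(XX+ZZ\right)$ --- the subtraction disposing of the stray terms you rightly worry about --- and relabelling $Z$ to $Y$ gives the XY interaction, so Lemma~\ref{lem:xy} applies. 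Without this (or an equivalent) explicit gadget together with the eigenspace computation and the check that the projected coupling is exactly proportional to a usable symmetric interaction, the reduction remains a program rather than a proof.
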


See Sections~\ref{sec:heisenberg}, \ref{sec:xy} and~\ref{sec:skew}, respectively, for the proofs of these lemmas. Based on reductions from these models, we can prove $\qma$-completeness of more general cases.

\begin{lem}
\label{lem:xyz}
For any real $\beta$, $\gamma$ such that at least one of $\beta$ and $\gamma$ is non-zero, {\sc $\{XX+\beta YY+\gamma ZZ\}$-Hamiltonian} is $\qma$-complete.
\end{lem}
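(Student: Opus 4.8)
The plan is to establish $\qma$-hardness by reducing one of the already-hard models of Lemmas~\ref{lem:heisenberg} and~\ref{lem:xy} to {\sc $\{XX+\beta YY+\gamma ZZ\}$-Hamiltonian}; membership in $\qma$ is immediate, since the problem is a special case of {\sc Local Hamiltonian}. First I would cut down the number of cases using the normal-form machinery. Writing $H = XX + \beta YY + \gamma ZZ$, the correlation matrix is $M(H) = \diag(1,\beta,\gamma)$, and by Lemma~\ref{lem:rotate} conjugating every qubit by a single $U \in SU(2)$ implements $M \mapsto RMR^T$ for an arbitrary $R \in SO(3)$. Since such a conjugation does not change the eigenvalues of any Hamiltonian built only from applications of $H$, and since we are allowed arbitrary real (in particular negative) weights, I may freely permute the three entries of $(1,\beta,\gamma)$ and negate the whole triple without changing the complexity. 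In particular, if the three coefficients are equal the interaction is proportional to $XX+YY+ZZ$ and Lemma~\ref{lem:heisenberg} applies directly; if one coefficient vanishes and the other two are equal, then after relabelling the interaction is proportional to $XX+YY$ and Lemma~\ref{lem:xy} applies. The work therefore lies in the remaining cases, where the nonzero coefficients have unequal moduli or mixed signs.

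For these cases I would simulate a Heisenberg (or XY) interaction between \emph{logical} qubits using only copies of $H$, and then invoke Lemma~\ref{lem:heisenberg} (resp.\ Lemma~\ref{lem:xy}). Following the encoding method used to prove those lemmas, each logical qubit is represented by a small block of physical qubits: strong $H$-interactions within a block pin it into a two-dimensional subspace on which the logical Pauli operators are realised, up to the first-order perturbative error controlled by Corollary~\ref{cor:zerothorder}, and the physical $H$-interactions \emph{between} blocks descend, after projection onto the code space, to the desired logical two-qubit interaction. Because only first-order perturbation theory is used and each encoding step increases the norm by a polynomial factor, Corollary~\ref{cor:zerothorder} may be applied a constant number of times in series, and the reduction is the composition of this projection with the known hardness of the target model. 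As with the other gadget constructions in this paper, once the encoding and the pinning interactions are written down, correctness reduces to a finite eigenvalue/eigenvector computation that can be checked by hand.

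The main obstacle is to exhibit, for \emph{every} admissible pair $(\beta,\gamma)$, an encoding whose projected interaction is a nonzero multiple of Heisenberg or XY. Since we have no access to $1$-local terms, the pinning interactions and the code space must themselves be produced purely from $H$-terms, so the freedom in the construction is limited and the effective coupling depends delicately on $\beta$ and $\gamma$; finding a code that works (and deciding, for each region of parameter space, whether to target the Heisenberg or the XY model) is the crux, and is exactly the kind of place where the computer-assisted gadget search described in Section~\ref{sec:heisenberg} is needed. Additional care is required at the boundaries of the parameter space --- when one coefficient vanishes or when two coincide --- to ensure the simulated interaction does not degenerate into a diagonal ($\tim$-like) interaction that would fail to inherit $\qma$-hardness.
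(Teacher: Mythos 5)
Your scaffolding matches the paper's strategy at the top level (membership in $\qma$ is trivial; symmetric cases go to Lemmas~\ref{lem:heisenberg} and~\ref{lem:xy}; the rest is handled by encoding logical qubits in small blocks pinned by strong $H$-interactions and projecting via Corollary~\ref{cor:zerothorder}), but the proposal stops exactly where the content of the proof lies: you never exhibit the gadgets, and you explicitly defer the crux to a hypothetical computer search. As written this is a plan, not a proof. Moreover, the target you set yourself --- for each admissible $(\beta,\gamma)$, a single code whose projected inter-block interaction is directly a nonzero multiple of the Heisenberg or XY interaction --- is stronger than what the paper needs or achieves, and you give no evidence such a one-shot code exists. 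Your claim that the logical Pauli operators are realised on the code space is also both unnecessary and unavailable here: with no $1$-local physical terms and only $\{XX+\beta YY+\gamma ZZ\}$ at hand, the intra-block interactions give you no logical $1$-local operators, and the paper's proof of this lemma never constructs any (logical Paulis are built only inside the proofs of Lemmas~\ref{lem:heisenberg} and~\ref{lem:xy}, using structure specific to those models).

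The missing idea is the linear-combination mechanism of Sections~\ref{sec:xxayy} and~\ref{sec:xxayybzz}: one does not simulate the target model in one shot, but simulates \emph{two} independent effective interactions within the same three-parameter family and combines them to cancel Pauli components, iterating a constant number of times. Concretely, writing $H = XX+\alpha YY+\beta ZZ$, the paper pins a $3$-qubit line with $M = H_{12}-H_{23}$ (whose ground space is $2$-dimensional) and uses two different inter-block couplings, $(H_{24}+H_{35})/2$ and $H_{25}$, which project to $H_1 = \alpha\beta XX + \alpha^3\beta YY + \alpha\beta^3 ZZ$ and $H_2 = XX + \alpha^5 YY + \beta^5 ZZ$ after rescaling; then $\alpha\beta H_2 - H_1 = \alpha^3(\alpha^3-1)\beta YY + \alpha\beta^3(\beta^3-1) ZZ$ kills the $XX$ component, reducing (for $\alpha,\beta\notin\{0,1\}$, after relabelling Paulis) to the two-component case, while the residual case $\alpha=1$, $\beta\notin\{0,1\}$ gives $\beta^3 H_2 - H_1 \propto XX+YY$ directly. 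The two-component case $XX+\alpha YY$ with $\alpha\notin\{0,1\}$ is then handled the same way, pinning with $H_{12}+H_{23}$ and using the couplings $H_{14} \mapsto \frac{1}{1+\alpha^2}\left(XX + \alpha^3 YY\right)$ and $H_{24} \mapsto -\frac{1}{(1+\alpha^2)^{3/2}}\left(XX + \alpha^4 YY\right)$, which are linearly independent, so linear combinations produce $XX+YY$ and Lemma~\ref{lem:xy} applies. Without this component-cancellation chain, your parameter-dependent search for a single degeneracy-free code has no guarantee of success, and your worry about degenerating into a diagonal interaction is resolved for free: the combinations above are explicit polynomials in $(\alpha,\beta)$ whose relevant coefficients vanish only in the already-handled boundary cases.
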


\begin{lem}
\label{lem:extractlts}
For any $\beta$, $\gamma$ such that at least one of $\beta$ and $\gamma$ is non-zero, and any single-qubit Hermitian matrix $A$, {\sc $\{XX+\beta YY+\gamma ZZ + AI + IA\}$-Hamiltonian} is $\qma$-complete.
\end{lem}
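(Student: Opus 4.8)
The plan is to reduce from Lemma~\ref{lem:xyz}: since $\{XX + \beta YY + \gamma ZZ\}$-Hamiltonian is $\qma$-complete, it suffices to simulate the interaction $H_2 := XX + \beta YY + \gamma ZZ$ using only (weighted) applications of $H := H_2 + AI + IA$. Membership in $\qma$ is immediate, as this is a special case of {\sc Local Hamiltonian}. Write $A = \alpha_0 I + \vec\alpha\cdot\vec\sigma$ with $\vec\alpha = (\alpha_1,\alpha_2,\alpha_3)$; the identity part only shifts energies and may be ignored. Given a target Hamiltonian $H_{\mathrm{tgt}} = \sum_{(i,j)} \alpha_{ij} H_2(i,j)$ on $n$ logical qubits, applying $H$ with weight $\alpha_{ij}$ across each pair $(i,j)$ produces $H_{\mathrm{tgt}} + \sum_i c_i A_i$, where $c_i = \sum_j \alpha_{ij}$ is the weighted degree of qubit $i$. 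The entire task therefore reduces to cancelling the \emph{spurious} $1$-local terms $\sum_i c_i A_i$ using further applications of $H$, without otherwise disturbing the logical qubits.

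The cancellation gadget combines ancilla qubits with first-order perturbation theory (Corollary~\ref{cor:zerothorder}). For each logical qubit I attach one or more ancilla pairs and pin each pair into a definite state by applying a strong interaction $\pm\Delta H$ between its two members: after rescaling so that the relevant ground energy is $0$ and the gap is $1$, the pinned pair sits in the (assumed non-degenerate) ground state $\ket{\Phi}$ of $\pm H$, which has a definite reduced density matrix with some Bloch vector on the member $a$ that we expose. I then \emph{weakly} couple logical qubit $i$ to such an exposed ancilla $a$ by $t\,H(i,a)$. Taking the strong pinning terms as $H$ and all logical--logical and logical--ancilla couplings as the perturbation $V$ in Corollary~\ref{cor:zerothorder}, the induced low-energy operator is $\Pi_- V \Pi_-$. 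Its logical--logical part reproduces $H_{\mathrm{tgt}} + \sum_i c_i A_i$, while each logical--ancilla coupling contributes the effective $1$-local field
\[ t\big(A_i + \langle X_a\rangle_{\Phi} X_i + \beta \langle Y_a\rangle_{\Phi} Y_i + \gamma \langle Z_a\rangle_{\Phi} Z_i\big) = t\big(A_i + \vec b\cdot\vec\sigma_i\big) + \mathrm{const}, \]
where $\vec b = (\langle X_a\rangle_\Phi,\, \beta\langle Y_a\rangle_\Phi,\, \gamma\langle Z_a\rangle_\Phi)$ is controlled by the pinned state. Crucially, because the available interaction $H$ always carries the \emph{same} $A$, each gadget produces one copy of $A_i$ plus a tunable transverse part $\vec b$.

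To cancel $c_i A_i$ I superpose several such gadgets on qubit $i$, with weights $t_g$ and pinned Bloch vectors $\vec b_g$, and impose $\sum_g t_g = -c_i$ (killing the coefficient of $A_i$) together with $\sum_g t_g \vec b_g = 0$ (killing the residual transverse field). Writing the vectors $(\vec b_g,1)\in\R^4$, this is solvable exactly when the origin of $\R^3$ lies in the affine span of the accessible Bloch vectors $\{\vec b_g\}$. When the singlet is an extremal eigenstate of $H$ on a pair it is pinnable and gives $\vec b = 0$, so a single gadget with $t=-c_i$ already suffices; more generally two gadgets work whenever two accessible Bloch vectors are (anti)parallel. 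I would run all pinning gadgets in parallel, so the norm grows only polynomially and Corollary~\ref{cor:zerothorder} applies once with a single polynomially large $\delta$, giving $\|\widetilde H_{<\Delta/2} - H_{\mathrm{tgt}}\| = O(\delta^{-1})$ and thus preserving the inverse-polynomial promise gap.

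The main obstacle is precisely the geometric requirement that the accessible pinned Bloch vectors affinely span a set containing the origin, for \emph{every} admissible $\beta$, $\gamma$ and $A$. The set of reduced states realisable as ground states of Hamiltonians built only from $H$ is constrained: on a single pair only the extremal eigenstates of $\pm H$ are pinnable, giving just two directions, and in degenerate configurations these may be collinear but off-origin, or the pinned ground space may itself be degenerate, breaking the clean first-order reduction. I expect to resolve this by enlarging the gadgets, coupling three or more ancillas in frustrated clusters to rotate the exposed reduced Bloch vector and so generate a richer family of directions whose affine hull contains $0$; the finitely many genuinely degenerate spectra (for instance when the extremal eigenspace of the $2$-qubit matrix $H$ is not one-dimensional) would be handled by a direct argument using the explicit eigenstructure of $H$, in the spirit of the exactly-solvable special cases used elsewhere in the paper.
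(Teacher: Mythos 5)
Your high-level strategy is the same as the paper's---use $H$ itself for the 2-local couplings, then cancel the spurious 1-local terms $\sum_i c_i A_i$ via strongly pinned ancillas and first-order perturbation theory (Corollary~\ref{cor:zerothorder})---and your first-order computation of the induced field $t\bigl(A_i + \vec b\cdot\vec\sigma_i\bigr)$ is correct. But the proposal has a genuine gap, precisely where you flag it: the solvability of $\sum_g t_g = -c_i$ together with $\sum_g t_g \vec b_g = 0$. With a single pinned pair the only accessible states are the extremal eigenstates of $\pm H$, and since $H$ contains the arbitrary matrix $A$, these are generically symmetric states with non-zero and \emph{non-collinear} Bloch vectors (for instance, when $\|A\|$ is large the ground states of $\pm H$ approach products of the extremal eigenvectors of $A$, and the finite-$\|A\|$ corrections destroy exact antiparallelism), in which case linear independence of $\vec b_+$ and $\vec b_-$ forces $t_+=t_-=0$, contradicting $t_++t_-=-c_i\neq 0$. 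The singlet is indeed always an eigenstate of the symmetrised $H$, but it is extremal only in a restricted parameter range, so it cannot carry the argument uniformly in $\beta$, $\gamma$, $A$. Your proposed repair---frustrated multi-ancilla clusters to rotate the exposed Bloch vector, plus ad hoc handling of degenerate spectra---is exactly the mathematical content of the lemma, and it is left entirely unconstructed; as written this is a plan, not a proof.

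The paper closes this gap with one clean four-qubit gadget: $G := H_{ab} + H_{cd} - H_{ac} - H_{bd}$. The signed combination cancels the $A$ terms \emph{identically}, so $G = (X_a-X_d)(X_b-X_c) + \beta(Y_a-Y_d)(Y_b-Y_c) + \gamma(Z_a-Z_d)(Z_b-Z_c)$ has an $A$-independent spectrum $\{0,\pm 4,\pm 4\beta,\pm 4\gamma,\pm 4\sqrt{1+\beta^2+\gamma^2}\}$; because at least two coefficients are non-zero, the lowest eigenvalue $-4\sqrt{1+\beta^2+\gamma^2}$ is nondegenerate, and its eigenvector is \emph{maximally entangled} across the cut $(abc:d)$. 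Pinning with $\Delta G$ and weakly coupling $H_{de}$ to a fresh qubit $e$ then exposes a qubit $d$ whose reduced state is exactly $I/2$: every 2-local term and the $A_d$ term average to (a multiple of) the identity, leaving precisely the effective interaction $A$ on $e$---there is no residual transverse field to cancel at all, uniformly in $\beta$, $\gamma$, $A$. With arbitrarily signed $A$ terms available as a resource, the 1-local part of $H$ is deleted and the lemma follows from Lemma~\ref{lem:xyz}. If you wish to salvage your scheme, this is the missing ingredient: rather than trying to solve an affine-span condition for generic Bloch vectors, engineer the pinned state so that the exposed reduced state is maximally mixed.
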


\begin{lem}
\label{lem:extractltsskew}
For any single-qubit Hermitian matrix $A$, {\sc $\{XZ-ZX + AI - IA\}$-Hamiltonian} is $\qma$-complete.
\end{lem}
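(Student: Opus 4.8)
The plan is to reduce from {\sc $\{XZ-ZX\}$-Hamiltonian}, which is $\qma$-complete by Lemma~\ref{lem:xzskew}, by simulating the pure $XZ-ZX$ interaction with $H := XZ - ZX + AI - IA$ and then cancelling the unwanted $1$-local part $AI-IA$. Before building any gadget I would first exploit the normal-form freedom: conjugating every application of $H$ by a common $U^{\otimes n}$ with $U \in SU(2)$ does not change eigenvalues, and by Lemma~\ref{lem:rotate} the rotations fixing $XZ-ZX$ are exactly those about the $y$-axis. Such a rotation acts on $A$ by rotating its Bloch vector within the $XZ$-plane while fixing the $Y$-component, so (discarding the trace part, which is annihilated by $A \mapsto AI - IA$) I may assume $A = \lambda X + \mu Y$ for some reals $\lambda, \mu$, cutting down the sub-cases the gadget must handle.

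The core observation is that the leftover $1$-local part is \emph{antisymmetric}: applying $H$ across an ordered pair $(i,j)$ with weight $w_{ij}$ contributes $w_{ij}(A_i - A_j)$, so the total $1$-local part of $\sum_{i,j} w_{ij} H_{ij}$ is $\sum_i d_i A_i$ with $d_i = \sum_j (w_{ij} - w_{ji})$, i.e.\ a multiple of the \emph{single} matrix $A$ on each qubit, with $\sum_i d_i = 0$. Meanwhile the $2$-local part is $\sum_{i<j}(w_{ij}-w_{ji})(XZ-ZX)_{ij}$, using $(XZ-ZX)_{ji} = -(XZ-ZX)_{ij}$. Thus, to simulate a target $\sum_{i<j} c_{ij}(XZ-ZX)_{ij}$, I would choose the $w_{ij}$ to realise each $c_{ij}$, which leaves a residual field $d_i A_i$ on each data qubit that must be removed.

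To remove these fields I would use first-order perturbation theory (Corollary~\ref{cor:zerothorder}) as a field-creation gadget. Pinning an ancilla cluster to a ground state $\ket\phi$ by a strongly weighted interaction $\Delta H$ among ancillas, and then coupling a data qubit $i$ to an ancilla $c$ by a weak term $w H_{ic}$, induces (up to $O(\|V\|/\Delta)$ error and a constant energy shift) the effective field $w\big(A_i + \langle Z_c\rangle_\phi X_i - \langle X_c\rangle_\phi Z_i\big)$ on $i$. Taking $w = -d_i$ produces $-d_i A_i$, exactly cancelling the residual, at the price of a spurious $XZ$-plane field proportional to the marginal $(\langle X_c\rangle_\phi, \langle Z_c\rangle_\phi)$. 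I would clean up this residue with auxiliary gadgets using pairs of weak couplings of opposite weight to ancillas pinned to two \emph{different} marginals: the $A_i$ contributions then cancel between the two couplings, leaving a pure, tunable $XZ$-plane field that can be set to annihilate the residue. All these gadgets act on disjoint ancillas and can be applied in parallel, so the total norm grows only polynomially and the perturbative errors can be driven below the required $1/\poly(n)$ promise gap.

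The main obstacle is the field-creation gadget itself: verifying that the states to which an ancilla can be pinned using only $H$ supply marginals $(\langle X\rangle, \langle Z\rangle)$ spanning enough of the $XZ$-plane to cancel any residue, while the $A$-direction field is produced cleanly. Because the pinning interaction is forced to be (a multiple of) $H$, the available marginals are fixed by the eigenvectors of $H$ itself --- including its $A$-dependent part --- so this step requires an explicit eigenvector computation of the $4\times 4$ matrix $XZ - ZX + \lambda(XI-IX) + \mu(YI-IY)$ (and of $-H$, whose ground state supplies a second marginal), checking that the resulting Bloch vectors are linearly independent in the $XZ$-plane directions needed. This is precisely the kind of small but delicate eigenvalue/eigenvector calculation best done with computer-algebra assistance and then verified by hand, and it is where the real work of the proof lies.
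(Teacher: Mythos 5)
Your high-level strategy---reduce from Lemma~\ref{lem:xzskew} by using Corollary~\ref{cor:zerothorder} with strongly pinned ancillas to synthesise cancelling $1$-local fields---is the same as the paper's, and your bookkeeping is correct: the antisymmetry of $AI-IA$, the residuals $d_iA_i$ with $\sum_i d_i = 0$, the effective-field formula $w\big(A_i + \langle Z_c\rangle X_i - \langle X_c\rangle Z_i\big)$, and the parallel application of gadgets are all fine. But there is a genuine gap exactly where you place ``the real work'': you never establish that pinning ancillas with $\Delta H$ itself gives (i) a \emph{non-degenerate} ground state for every value of $(\lambda,\mu)$, and (ii) a supply of ancilla marginals rich enough to cancel the spurious $XZ$-plane field. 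On the second point your two proposed sources are weaker than they look: since $FHF = -H$, the ground state of $-H$ is just the swap of that of $H$, so the pair $\{H,-H\}$ on two ancillas yields only the two marginals of a \emph{single} state, and your opposite-weight cleanup gadget then produces fields along the one direction $\big(\langle Z\rangle_1-\langle Z\rangle_2,\,-(\langle X\rangle_1-\langle X\rangle_2)\big)$, which need not be parallel to the spurious direction $\big(\langle Z\rangle_c,\,-\langle X\rangle_c\big)$. As written, the construction can therefore fail for particular $(\lambda,\mu)$ unless you enlarge the gadget family and carry out a parameter-dependent eigen-analysis, with possible exceptional degenerate cases---none of which is done.

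The missing idea, which is the paper's actual proof, is latent in your own observation that $\sum_i d_i = 0$: apply $H$ around a \emph{directed four-cycle}, $G = H_{ab}+H_{bc}+H_{cd}+H_{da}$, so that the $AI-IA$ parts cancel identically and $G$ is independent of $A$ (no normal form for $A$ is needed; $A$ is traceless WLOG since its trace part already cancels in $AI-IA$). A single fixed computation, valid for all $A$, shows $G$ has unique lowest eigenvalue $-4\sqrt{2}$, with ground state maximally entangled across the split $(abc\,{:}\,d)$. Weakly coupling $H_{de}$ and projecting via Corollary~\ref{cor:zerothorder} then yields the clean effective interaction $-A_e$ with \emph{no} spurious field at all: the marginal on $d$ is $I/2$, so the traceless operators $X_d$, $Z_d$ and $A_d$ all vanish in expectation. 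Arbitrarily weighted $\pm A$ fields from disjoint copies of this gadget cancel the residuals $d_iA_i$ exactly, leaving a pure $\{XZ-ZX\}$ Hamiltonian, and Lemma~\ref{lem:xzskew} (with containment in $\qma$ immediate) finishes the proof. This one cycle-cancellation trick eliminates your normal-form step, the $(\lambda,\mu)$ case analysis, and the marginal-spanning question in one stroke.
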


These lemmas are proven in Sections~\ref{sec:xxayy}, \ref{sec:xxayybzz} and~\ref{sec:local}, respectively. We will also need some reductions from cases which are unlikely to be $\qma$-complete; these are also proven in Section~\ref{sec:local}.

\begin{lem}
\label{lem:extractlts2}
For any single-qubit Hermitian matrix $A$ such that $A$ does not commute with $Z$, {\sc $\{ZZ, X, Z\}$-Hamiltonian} reduces to {\sc $\{ZZ+ AI + IA\}$-Hamiltonian}.
\end{lem}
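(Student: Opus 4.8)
The plan is to exhibit a polynomial-time reduction that simulates each of the interactions $ZZ$, $X$ and $Z$ using only arbitrarily weighted copies of $M := ZZ + AI + IA$ together with a few ancilla qubits, controlling all errors via first-order perturbation theory (Corollary~\ref{cor:zerothorder}). First I would normalise $A$. Writing $A = a_0 I + a_1 X + a_2 Y + a_3 Z$, the hypothesis $[A,Z]\neq 0$ means $(a_1,a_2)\neq(0,0)$. There is a rotation about the $z$-axis, corresponding to some $U\in SU(2)$ with $UZU^\dagger = Z$, taking $(a_1,a_2,a_3)\mapsto (r,0,a_3)$ with $r = \sqrt{a_1^2+a_2^2}>0$; conjugating every copy of $M$ by $U^{\otimes n}$ changes no eigenvalues, so {\sc $\{M\}$-Hamiltonian} is literally the same problem with $M$ replaced by $M' := ZZ + r(XI+IX)+a_3(ZI+IZ)$ (the scalar $a_0 I$ is a free energy shift and is dropped). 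It therefore suffices to simulate $ZZ$, $X$ and $Z$ from $M'$.

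The workhorse is a \emph{field gadget}. Introduce an ancilla pair $(a,a')$ and apply $\Delta M'_{aa'}$ with $\Delta$ large; after an energy shift this pins $(a,a')$ to the ground state $\ket g$ of $M'$, which is non-degenerate for generic parameters. Coupling a work qubit $q$ to $a$ by a single weight-one $M'_{qa}$ and invoking Corollary~\ref{cor:zerothorder}, the induced low-energy operator on $q$ is $\Pi_- M'_{qa}\Pi_- = rX_q + (a_3 + \zeta_g)Z_q$ up to an additive constant and $O(\delta^{-1})$ error, where $\zeta_g := \bracket{g}{Z_a}{g}$. Crucially the coefficient of $X_q$ is always exactly $r$, independent of the ancilla state, while the coefficient of $Z_q$ is $a_3+\zeta_g$. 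Hence, to separate $X$ from $Z$ it suffices to realise two pinnable ancilla states whose coupling qubit has distinct values of $\zeta$: differencing the two induced fields cancels $rX_q$ and leaves a nonzero multiple of $Z_q$, yielding a clean $Z$; subtracting an appropriate multiple of this $Z$ from one induced field then yields a clean $X$; and finally $ZZ_{ij} = M'_{ij} - rX_i - rX_j - a_3 Z_i - a_3 Z_j$ recovers a clean $ZZ$.

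The behaviour of $\zeta$ splits into two cases. If $a_3 = 0$ then $M'$ commutes with $X^{\otimes 2}$, and more generally any ancilla network built from $M'$ commutes with the global $X$-flip; this forces $\zeta = 0$ on every non-degenerate eigenstate, so the gadget can only ever produce the field $rX$. That is enough: we obtain clean $X$ and hence clean $ZZ = M'_{ij} - rX_i - rX_j$, giving {\sc $\{ZZ,X\}$-Hamiltonian}, and the $Z$ interaction is then supplied by composing with the ancilla reduction in the proof of Lemma~\ref{lem:zzhamlf}, which shows that $\{ZZ,X,Z\}$ reduces to $\{ZZ,X\}$. If $a_3\neq 0$ I would instead produce a $Z$ field directly, by pinning one ancilla pair to the ground state $\ket g$ of $+\Delta M'$ and a second pair to the ground state $\ket t$ of $-\Delta M'$ (the top state of $M'$), coupling $q$ to the two pairs with weights $+1$ and $-1$; the net induced field is $(\zeta_g-\zeta_t)Z_q$. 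The main obstacle is to prove $\zeta_g\neq\zeta_t$ whenever $a_3\neq 0$, which I would settle from the explicit $2$-qubit spectrum of $M'$: the antisymmetric eigenvector $\tfrac1{\sqrt2}(\ket{01}-\ket{10})$ has eigenvalue $-1$ and $\zeta=0$, while a short computation shows that a symmetric eigenvector with eigenvalue $\mu$ has $\zeta$ equal to the sign of $(\mu-1)a_3$ times a positive quantity, vanishing only when $\mu = 1$. Since $\det(M'_{\mathrm{sym}}-I)=8a_3^2\neq 0$ and $\det(M'_{\mathrm{sym}}+I)=-8r^2\neq 0$, neither $\mu=1$ nor $\mu=-1$ occurs among the symmetric levels. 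A short case check on whether the extremal states are symmetric or antisymmetric (using that the three symmetric levels sum to $1$, so that if the ground level is symmetric it lies below $-1$ and the top level then exceeds $1$) shows $\zeta_g$ and $\zeta_t$ always have opposite sign or differ from $0$, hence are distinct.

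With a clean $Z$ in hand, the steps above also yield clean $X$ and clean $ZZ$, so we can simulate every term of an arbitrary {\sc $\{ZZ,X,Z\}$-Hamiltonian}. Each target term is replaced by a constant-size gadget using $O(1)$ ancillas and $\poly$-bounded weights, and the gadgets are applied in parallel on disjoint ancillas; the Oliveira--Terhal parallel-gadget analysis keeps the total norm polynomial and the total approximation error $O(\delta^{-1})$, so choosing $\delta$ a sufficiently large polynomial preserves the $1/\poly$ promise gap and completes the reduction. The one point needing extra care is the non-generic parameter values at which the ground or top state of $M'$ is degenerate: there the pinned subspace is two-dimensional and the induced operator is a genuine two-qubit interaction rather than a single-qubit field, so it must be handled separately, e.g.\ by enlarging the ancilla network to a short chain with unequal bond weights that lifts the degeneracy while still supplying two distinct values of $\zeta$. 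Establishing $\zeta_g\neq\zeta_t$ and managing these degeneracies is where essentially all the work lies; the rest is bookkeeping.
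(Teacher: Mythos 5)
Your reduction is correct, but it is not the paper's. The paper makes the same initial normalisation and the same case split (your $a_3\neq 0$ versus $a_3=0$), and in the $a_3=0$ subcase the two arguments essentially coincide: a single strong bond pins an ancilla pair, the induced field is purely transverse, and the missing longitudinal field is restored by the global spin-flip ancilla trick --- the paper reproves this trick inline, whereas you reuse it from the proof of Lemma~\ref{lem:zzhamlf}, which is legitimate since that step there is an exact, non-perturbative reduction. In the main case $a_3\neq 0$ the routes genuinely diverge. The paper uses a three-qubit chain $\Delta H_{ab}-\Delta H_{bc}$ whose (computer-assisted, explicitly stated) unique ground state has an end-qubit reduced density matrix with both $X$ and $Z$ components; coupling a fourth qubit then yields two linearly independent induced fields, spanning all of $\alpha X+\beta Z$ in one stroke. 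You instead pin single bonds to the ground \emph{and top} states of $M'$ and difference the induced fields, reducing everything to the single inequality $\zeta_g\neq\zeta_t$, which you settle by a swap-symmetry analysis of the two-qubit spectrum (singlet at $-1$ with $\zeta=0$; symmetric levels with $\zeta\propto a_3(\mu-1)$; $\det(M'_{\mathrm{sym}}\mp I)\neq 0$, which I have checked: the values $8a_3^2$ and $-8r^2$ are right, as is the trace argument placing $\mu_1<-1<1<\mu_3$ when the extremal states are symmetric). Your gadgets are smaller (two ancillas per field instead of three) and the spectral argument is more conceptual than the paper's explicit eigenvector formulas; the price is the extra ground-versus-top comparison. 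The one loose end you flag --- possible degeneracy of the extremal states --- is in fact vacuous and can be closed in two lines rather than by enlarging the ancilla network: a doubly repeated eigenvalue $\mu$ of the symmetric $3\times 3$ block would force $\rank(M'_{\mathrm{sym}}-\mu I)\le 1$, and with $r\neq 0$ the first two rows, $(1+2a_3-\mu,\,0,\,\sqrt{2}r)$ and $(0,\,1-2a_3-\mu,\,\sqrt{2}r)$, can only be proportional if $a_3=0$; combined with your determinant computations (no symmetric level at $\pm 1$, hence no crossing with the singlet), the ground and top states of $M'$ are non-degenerate whenever $r\neq 0$ and $a_3\neq 0$, so no separate handling is needed.
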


\begin{lem}
\label{lem:extractlts3}
For any single-qubit Hermitian matrix $A$ such that $A$ does not commute with $Z$, {\sc $\{ZZ, X, Z\}$-Hamiltonian} reduces to {\sc $\{ZZ, AI-IA\}$-Hamiltonian}.
\end{lem}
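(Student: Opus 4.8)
The plan is to simulate an arbitrary target Hamiltonian $H_{\mathrm{target}} = \sum_{i<j}\alpha_{ij} Z_iZ_j + \sum_k \beta_k X_k + \sum_k \gamma_k Z_k$ of $\{ZZ,X,Z\}$ type using only $ZZ$ and $AI-IA$ interactions, so that the ground-state energies agree up to $1/\poly(n)$. First I would put $A$ into a normal form. Since conjugating every qubit by the same single-qubit unitary changes neither the eigenvalues of any Hamiltonian built from $\{ZZ, AI-IA\}$ nor the problem being reduced from, and since a rotation $U=e^{-i\theta Z/2}$ about the $Z$ axis satisfies $UZU^\dagger = Z$ and so fixes every $ZZ$ term, I can use such a rotation to bring the traceless part of $A$ into the $XZ$-plane. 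Writing $A = a_0 I + a_1 X + a_2 Y + a_3 Z$, the hypothesis $[A,Z]\neq 0$ forces $(a_1,a_2)\neq(0,0)$, so after the rotation I may assume $A = a_0 I + a_1 X + a_3 Z$ with $a_1 \neq 0$. The identity part cancels, so the available antisymmetric interaction is $G := AI - IA = a_1(X_1-X_2) + a_3(Z_1 - Z_2) = A'_1 - A'_2$, where $A' := a_1 X + a_3 Z$ has eigenvalues $\pm r$, $r := \sqrt{a_1^2+a_3^2}$, with eigenvectors $\ket{v_\pm}$.

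The engine is a pinning gadget analysed by Corollary~\ref{cor:zerothorder}. Applying a strong term $\Delta\, G_{aa'} = \Delta(A'_a - A'_{a'})$ to a fresh pair of ancillas $(a,a')$ gives a non-degenerate ground state $\ket{v_-}_a\ket{v_+}_{a'}$ (minimising $A'_a$ and maximising $A'_{a'}$) separated by a gap $2r\Delta$, so after shifting and rescaling it is a projector in the sense of Corollary~\ref{cor:zerothorder} and pins $a'$ to $\ket{v_+}$. For each system qubit $k$ I introduce such a pair and couple $k$ weakly to the pinned ancilla $a'$: via $G$ the first-order effective operator is $\Pi_-(A'_k - A'_{a'})\Pi_- = A'_k - r = a_1 X_k + a_3 Z_k$ (up to an energy shift), while via $ZZ$ it is $\langle v_+|Z|v_+\rangle Z_k = (a_3/r)Z_k$. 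Taking the appropriate real-weighted combination of these two effective terms cancels the $Z_k$-component and leaves a clean effective field $\beta_k X_k$ with $\beta_k$ an arbitrary real number (this uses $a_1\neq 0$; when $a_3=0$ the $G$-coupling already yields $a_1 X_k$ directly). These pinnings act on disjoint ancilla pairs and can be run in parallel, with norms controlled as in the parallel-gadget analysis of Oliveira and Terhal~\cite{oliveira08}.

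It remains to produce the longitudinal fields $\gamma_k Z_k$. When $a_3\neq 0$ the $ZZ$-coupling to a pinned $\ket{v_+}$ already supplies a genuine effective $Z_k$; but when $a_3=0$ every interaction in $\{ZZ, G\} = \{ZZ, X_1-X_2\}$ commutes with the global spin-flip $P = \prod_\ell X_\ell$, so no gadget built from these terms can generate the symmetry-breaking operator $Z_k$. I therefore produce the $Z$ fields without a genuine local term, exactly as in the proof of Lemma~\ref{lem:zzhamlf}: I add one further free (unpinned) ancilla $c$ and replace each target field $\gamma_k Z_k$ by the weak coupling $\gamma_k Z_c Z_k$. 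Collecting everything, the weak perturbation is $V = \sum_{i<j}\alpha_{ij}Z_iZ_j + (\text{the weak pinning-extraction couplings}) + \sum_k \gamma_k Z_cZ_k$, and by Corollary~\ref{cor:zerothorder} the low-energy subspace of the full Hamiltonian is $O(\delta^{-1})$-close in operator norm to the effective Hamiltonian $H_{\mathrm{eff}} = \sum_{i<j}\alpha_{ij}Z_iZ_j + \sum_k\beta_k X_k + \sum_k\gamma_k Z_cZ_k$ on the system qubits together with $c$. Since $c$ carries no field and appears only through $Z_cZ_k$, the convexity argument of Lemma~\ref{lem:zzhamlf} shows $\lambda_{\min}(H_{\mathrm{eff}}) = \lambda_{\min}(H_{\mathrm{target}})$; choosing $\delta = \poly(n)$ makes the perturbative error smaller than the promise gap, completing the reduction.

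I expect the main obstacle to be conceptual rather than computational: recognising that when $A$ is purely transverse the available interactions possess the $\Z_2$ symmetry $P=\prod_\ell X_\ell$, which rules out generating a longitudinal field by any gadget, and that this is circumvented by encoding the $Z$ fields in the ground energy through a shared free ancilla rather than as genuine local operators. The remaining technical care lies in checking that the many parallel pinning gadgets do not interfere and that the combined first-order perturbative error stays below $1/\poly(n)$, both of which follow from Corollary~\ref{cor:zerothorder} together with the parallel-gadget bookkeeping of~\cite{oliveira08}.
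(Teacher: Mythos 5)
Your proposal is correct and follows essentially the same route as the paper: you pin an ancilla pair into the nondegenerate product ground state of a strong $\Delta(A'_1-A'_2)$ term, extract an effective single-qubit field by first-order perturbation theory (Corollary~\ref{cor:zerothorder}) from weak couplings to the pinned qubit, and then simulate the longitudinal $\gamma_k Z_k$ fields via the shared free ancilla and the global $X^{\otimes n}$ spin-flip convexity argument of Lemma~\ref{lem:zzhamlf}, which is precisely the paper's ``make $\{Z,XX\}$, then implement the remaining field by the previous argument'' (stated there in the variables before relabelling $X$ and $Z$). Your explicit refinements --- rotating $A$ into the $XZ$-plane by a $Z$-axis rotation that fixes all $ZZ$ terms, and combining the weak $G$- and $ZZ$-couplings so as to cancel the unwanted $Z_k$ component and leave a clean $\beta_k X_k$ --- make precise steps the paper leaves implicit or states only tersely.
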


Finally, we consider the purely classical case of diagonal matrices.

\begin{lem}
\label{lem:np}
Let $\mathcal{S}$ be a set of diagonal Hermitian matrices on at most 2 qubits. Then, if every matrix in $\mathcal{S}$ is 1\nobreakdash-local, {\sc $\mathcal{S}$-Hamiltonian} is in $\ptime$. Otherwise, {\sc $\mathcal{S}$-Hamiltonian} is $\nptime$-complete.
\end{lem}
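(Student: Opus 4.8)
The plan is to separate the two easy directions from the hardness reduction. Since every matrix in $\mathcal{S}$ is diagonal, any Hamiltonian we build is diagonal in the computational basis, so its lowest eigenvalue is simply $\min_{x \in \{0,1\}^n} \bracket{x}{H}{x}$, a classical minimisation. Membership in $\nptime$ is then immediate: a witness is an optimal assignment $x$, whose energy can be evaluated in polynomial time since each term touches at most two bits. For the first bullet, if every matrix in $\mathcal{S}$ is $1$-local then the Hamiltonian decouples across qubits and its ground energy is the sum of the per-qubit minima, computable in polynomial time exactly as in the first case of Proposition~\ref{prop:2shamlf}.

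For the hardness I would reduce from MAX-CUT, whose $\nptime$-hardness (as the ground-state problem of the Ising model) is classical~\cite{barahona82}. Expanding any matrix of $\mathcal{S}$ in the $\{I,Z\}$ basis, a $2$-qubit diagonal term has the form $c_0 II + c_1 ZI + c_2 IZ + c_3 ZZ$, and being genuinely $2$-local means exactly that its $2$-local ($ZZ$) part is nonzero, i.e.\ $c_3 \neq 0$. Writing $s_i = (-1)^{x_i} \in \{+1,-1\}$, applying this term with weight $w$ to an (ordered) pair contributes $w(c_1 s_i + c_2 s_j + c_3 s_i s_j)$ up to a constant. Because weights may be negative and an interaction may be applied in either orientation, the net $ZZ$ coupling $J_{ij}$ on each pair is a freely tunable real number, so I can realise the Ising coupling matrix of any MAX-CUT instance. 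The obstacle is the stray $1$-local field: summing the $c_1,c_2$ contributions produces an unwanted term $\sum_i h_i s_i$ whose coefficient $h_i$ is, after symmetrising over orientations, proportional to the total weighted degree $\sum_j J_{ij}$ of qubit $i$, and this field cannot in general be cancelled by redistributing weights.

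The key idea, and the main technical point, is to choose a MAX-CUT encoding in which every qubit already has zero weighted degree, so that the stray fields vanish identically for every $c_1,c_2,c_3$. Given a MAX-CUT instance on a graph $G=(V,E)$ with degree sequence $(d_i)$ and objective $g(s) = \sum_{(i,j) \in E} s_i s_j$, I would introduce a mirror copy $V'$ and realise the coupling matrix with $J_{ij} = 1$ on the edges of $G$ within $V$ and within $V'$, and $J_{ii'} = -d_i$ between each vertex and its mirror. Every weighted degree is then $d_i - d_i = 0$, so all induced fields cancel, and the resulting diagonal Hamiltonian equals $g(s) + g(s') - \sum_i d_i s_i s_i'$ (plus a computable constant coming from the $c_0$ terms and the coupling scale $c_3$). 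A one-line bound finishes the reduction: since $g(s),g(s') \ge \min g$ and $-\sum_i d_i s_i s_i' \ge -\sum_i d_i = -2|E|$ termwise, the minimum is exactly $2\min g - 2|E|$, attained by aligning the two copies at an optimiser of $g$. Thus an oracle for \sham\ recovers $\min g$, hence the MAX-CUT value, giving $\nptime$-hardness.

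I expect the verification that this particular gadget simultaneously zeroes all the induced fields and admits the clean closed-form minimum to be the only delicate part; once the zero-weighted-degree doubling construction is in hand, the membership arguments and the read-off of the MAX-CUT value are routine.
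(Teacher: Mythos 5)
Your proof is correct, but it reaches $\nptime$-hardness by a genuinely different route than the paper. The paper writes the non-1-local matrix as $H = \diag(\alpha,\beta,\gamma,\delta)$ and splits into cases according to whether the diagonal has a unique minimum (or, negating $H$, a unique maximum): there it pins ancilla qubits into $\ket{0}$ using heavily weighted copies of $H$, thereby gaining access to the four interactions $\diag(\alpha,\beta,\gamma,\delta)$, $\diag(\alpha,\beta,\alpha,\beta)$, $\diag(\alpha,\alpha,\gamma,\gamma)$ and $I$, and observes that the associated $4\times 4$ matrix is full rank precisely because $H$ is not 1-local and the extremum is unique --- so \emph{every} diagonal interaction, in particular $ZZ$, can be synthesised and MAX-CUT reduces; in the leftover case $H$ must equal $\diag(\alpha,\beta,\beta,\alpha)$ with $\alpha \neq \beta$, and $ZZ$ is a direct linear combination of $H$ and $I$. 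You avoid both the case analysis and the pinning gadget: symmetrising $H = c_0 II + c_1 ZI + c_2 IZ + c_3 ZZ$ (with $c_3 \neq 0$ exactly when $H$ is not 1-local, matching the paper's condition $\alpha+\delta \neq \beta+\gamma$) over the two orientations --- which the problem definition explicitly permits --- makes the stray field at each vertex equal to $\frac{c_1+c_2}{2c_3}$ times its weighted $ZZ$-degree, and your doubled instance with mirror couplings $J_{ii'} = -d_i$ zeroes every weighted degree, so all fields cancel identically for any $c_1, c_2$. Your closed-form minimum $2\min g - 2|E|$ (plus the computable $c_0$ constant) checks out, since $-\sum_i d_i s_i s_i' \ge -2|E|$ with equality at $s = s'$, which is compatible with taking $s$ to minimise $g$; the integrality of $g$ then comfortably meets the inverse-polynomial promise gap. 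Comparing the two: the paper's construction proves something stronger in passing --- a single non-1-local diagonal interaction simulates all diagonal 2-qubit interactions, in the same spirit as its gadget machinery elsewhere --- at the price of a three-way case split and polynomially heavy forcing weights; your argument is more elementary and uniform, with no case distinction, exact energies rather than ground-space forcing, and coupling weights bounded by the maximum degree, though it extracts only the one effective interaction needed for hardness. Your $\ptime$ case and $\nptime$-membership arguments coincide with the paper's.
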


This final lemma is a special case of a result of Jonsson~\cite{jonsson00}, who classifies the complexity of all maximisation variants of boolean constraint satisfaction problems where both positive and negative weights are allowed. We include a simple direct proof in Section~\ref{sec:diagonal}. Based on all the above lemmas, we are ready to prove Theorem~\ref{thm:sham}, which we restate as follows.

\begin{repthm}{thm:sham}
Let $\mathcal{S}$ be an arbitrary fixed subset of Hermitian matrices on at most 2 qubits. Then:
\begin{itemize}
\item If every matrix in $\mathcal{S}$ is 1\nobreakdash-local, \sham\ is in $\ptime$;
\item Otherwise, if there exists $U \in SU(2)$ such that $U$ locally diagonalises $\mathcal{S}$, then \sham\ is $\nptime$-complete;
\item Otherwise, if there exists $U \in SU(2)$ such that, for each 2-qubit matrix $H_i \in \mathcal{S}$, $U^{\otimes 2} H_i (U^{\dag})^{\otimes 2} = \alpha_i Z^{\otimes 2} + A_i\otimes I + I\otimes B_i$, where $\alpha_i \in \R$ and $A_i$, $B_i$ are arbitrary single-qubit Hermitian matrices, then \sham\ is $\stoqma$-complete;
\item Otherwise, \sham\ is $\qma$-complete.
\end{itemize}
\end{repthm}

Before we prove the theorem, we recall (from the discussion at the start of Section~\ref{sec:shamlf}) that we can assume that each 2-qubit matrix $H \in \mathcal{S}$ is either symmetric or antisymmetric with respect to swapping the two qubits on which it acts.

\begin{proof}
The first case is clear: any Hamiltonian that can be made from $\mathcal{S}$ is of the form $H = \sum_i H_i$ for 1\nobreakdash-local matrices $H_i$, so the lowest eigenvalue of $H$ is just the sum of the lowest eigenvalues of the individual matrices $H_i$, which can be calculated efficiently.

For the second case, if there exists such a $U$, we can apply it and we are left with a set of diagonal matrices where at least one is not 1\nobreakdash-local (or we would be in the first case). The claim then follows from Lemma~\ref{lem:np}.

For the third case, the problem is clearly no harder than {\sc $\{ZZ\}$-Hamiltonian with local terms}, so is contained within $\stoqma$ by Lemma~\ref{lem:zzhamlf}. To prove $\stoqma$-hardness, first note that after applying $U$, there must exist a matrix $H_i \in \mathcal{S}$ of the form $\alpha_i ZZ + A_i I + IB_i$ with $\alpha_i \neq 0$, or we would be in the first case. Symmetrising and rescaling, we can make a matrix of the form $ZZ + \beta (A I + IA)$ (where $\beta$ or $A$ might be zero). If $A$ does not commute with $Z$, Lemma~\ref{lem:extractlts2} implies that {\sc $\{ZZ, X, Z\}$-Hamiltonian} reduces to {\sc $\mathcal{S}$-Hamiltonian}, so by Theorem \ref{thm:bh} {\sc $\mathcal{S}$-Hamiltonian} is $\stoqma$-hard. Therefore, assume that $A$ commutes with $Z$. As $A$ can be taken to be traceless by adding an overall identity term, this is equivalent to $A$ being proportional to $Z$. As we are not in the second case, there must also either exist a 2-qubit matrix $H_j \in \mathcal{S}$ of the form $\alpha_j ZZ + A_j I + IB_j$, where either $A_j$ or $B_j$ does not commute with $Z$, or a 1-qubit matrix $H_k \in \mathcal{S}$ that does not commute with $Z$. If the latter of these possibilities is true, we can make $IH_k + H_kI$, so it suffices to assume the first possibility is true. Note that it could be the case that $i=j$ or $\alpha_j=0$ (but not both).

First assume that $A_j \neq -B_j$. Then by rescaling and symmetrising both matrices, we can assume we have access to matrices of the form
\[ H_i = ZZ + \alpha(ZI + IZ),\;\;\;\; H_j = \gamma ZZ + BI+IB, \]
where $B$ is a traceless Hermitian matrix that does not commute with $Z$, and $\alpha,\gamma \in \R$. By adding a suitable multiple of $H_i$ to $H_j$ and rescaling, we can produce a matrix $H'$ such that $H' = ZZ + AI + IA$ for some matrix $A$ which does not commute with $Z$. By Lemma~\ref{lem:extractlts2}, this implies that {\sc $\{ZZ, X, Z\}$-Hamiltonian} reduces to {\sc $\mathcal{S}$-Hamiltonian}, so once again {\sc $\mathcal{S}$-Hamiltonian} is $\stoqma$-hard.

On the other hand, if $A_j = -B_j$, we have
\[ H_i = ZZ + \alpha(ZI + IZ),\;\;\;\; H_j = \gamma ZZ + BI-IB, \]
where $B$ is a traceless Hermitian matrix that does not commute with $Z$, and $\alpha$, $\gamma \in \R$. By adding a suitable multiple of $H_i$ to $H_j$, antisymmetrising and rescaling, we can produce a matrix $ZZ + BI-IB$ for some $B$ that does not commute with $Z$. Combining Lemma~\ref{lem:extractlts3} and Theorem \ref{thm:bh} then implies that {\sc $\mathcal{S}$-Hamiltonian} is $\stoqma$-hard.

We finally address the fourth case (the $\qma$-hard case), which is split into two subcases. In the first subcase, assume there exists at least one 2-qubit matrix $H \in \mathcal{S}$ which has Pauli rank at least 2. $M(H)$ can be assumed to be either symmetric or skew-symmetric. If $M(H)$ is symmetric, by Lemma~\ref{lem:normalform} (and possibly relabelling Pauli matrices), using local unitaries $H$ can be mapped to $XX+\beta YY+\gamma ZZ + AI + IA$ for some $\beta$, $\gamma$ such that at least one of them is non-zero, and some single-qubit Hermitian matrix $A$. $\qma$-completeness follows from Lemma~\ref{lem:extractlts}. If $M(H)$ is skew-symmetric, we similarly get $\qma$-completeness from Lemma~\ref{lem:extractltsskew}.

In the second subcase, assume all 2-qubit matrices in $\mathcal{S}$ have Pauli rank 1. There does not exist $U$ such that $U^{\otimes 2} H_i (U^{\dag})^{\otimes 2} = \alpha_i ZZ + A_i I + I B_i$ for all $H_i \in \mathcal{S}$, otherwise we would be in the third case. So in this subcase there must exist a pair $i\neq j$ and a unitary $U$ such that $U^{\otimes 2} H_i^{(2)} (U^\dag)^{\otimes 2}$ is diagonal, but $U^{\otimes 2} H_j^{(2)} (U^\dag)^{\otimes 2}$ is not, where $H_i^{(2)}$ is the 2\nobreakdash-local part of $H_i$. By applying this $U$ and rescaling, we can assume that
\[ H_i = ZZ + \delta(CI + IC), \;\;\;\; H_j = (\alpha X + \beta Y + \gamma Z)^{\otimes 2} + \eta(DI + ID) \]
for some real $\alpha$, $\beta$, $\gamma$, $\delta$, $\eta$ where at least one of $\alpha$ or $\beta$ is non-zero. With the freedom to rescale $H_j$, we can assume that $\alpha^2 + \beta^2 = 1$. There exists an SO(3) rotation $R$ which maps $(\alpha,\beta,\gamma)$ to $(1,0,\gamma)$ while leaving $(0,0,1)$ unchanged. Therefore, by Lemma~\ref{lem:rotate}, there exists a unitary $V$ such that $V^{\otimes 2} H_j (V^\dag)^{\otimes 2} = (X + \gamma Z)^{\otimes 2}$ and also $V^{\otimes 2} H_i^{(2)} (V^\dag)^{\otimes 2} = ZZ$. If $\gamma=0$, we have something of the form
\[ H_i = ZZ + \delta(CI + IC),\;\;\;\; H_j = XX + \eta(DI + ID). \]
Adding these two matrices, relabelling Pauli matrices and using Lemma~\ref{lem:xy} and Lemma~\ref{lem:extractlts}, this case is also $\qma$-complete. If $\gamma \neq 0$, by rescaling and subtracting $H_i$ from $H_j$, we can make a matrix whose 2\nobreakdash-local part is unitarily equivalent to $XX + \gamma' ZZ$ for some $\gamma' \neq 0$, so this case is $\qma$-complete by Lemmas~\ref{lem:xyz} and~\ref{lem:extractlts}.
\end{proof}

It remains to prove all the required lemmas, which we now do in order.


\section{Special cases}
\label{sec:special}

We first prove $\qma$-hardness for the Heisenberg and XY interactions, and the interaction $XZ-ZX$, as needed for Lemmas \ref{lem:heisenberg}--\ref{lem:xzskew}. We will often need to use Corollary~\ref{cor:zerothorder} to effectively project the low-energy part of a Hamiltonian onto a smaller space, up to a small ($1/\poly(n)$) additive error. For readability, we will not include these additive errors in the description that follows. Many of the somewhat technical calculations we perform in this and subsequent sections were carried out using a computer algebra package.

\subsection{The Heisenberg model}
\label{sec:heisenberg}

The first special case we consider is the (general) Heisenberg model. The Heisenberg interaction is the 2-qubit matrix $XX + YY + ZZ$. By adding an irrelevant identity term and rescaling we get the swap gate
\[ F := \frac{1}{2}\left(I + XX + YY + ZZ\right). \]
This is therefore sometimes called the exchange interaction. The Heisenberg model describes Hamiltonians of the form
\[ H = \sum_{i < j} \alpha_{ij} (X_i X_j + Y_i Y_j + Z_i Z_j). \]
If $\alpha_{ij} \ge 0$ for all $i<j$, this is called the antiferromagnetic Heisenberg model; if $\alpha_{ij} \le 0$ for all $(i,j)$, it is called the ferromagnetic Heisenberg model. When considering ground-state properties alone, the ferromagnetic Heisenberg model is trivial as the state $\ket{0}^{\otimes n}$ is a ground state. Schuch and Verstraete proved that determining ground-state energies in the Heisenberg model is $\qma$-hard if one allows arbitrary additional 1\nobreakdash-local terms~\cite{schuch09}. Our task in this section is to prove this claim without any additional 1\nobreakdash-local terms. There has been a significant amount of work done to prove that universal quantum computation can be achieved using the exchange interaction alone (see e.g.~\cite{divincenzo00,kempe01,kempe02,hsieh03} and in particular~\cite{kempe00}). However, this does not seem to immediately imply $\qma$-hardness of the Heisenberg model without additional 1\nobreakdash-local terms.

As $F$ commutes with $U^{\otimes 2}$ for any $U$, $H$ is invariant under conjugation by $U^{\otimes n}$, implying that the projectors onto each of its eigenspaces are also invariant under conjugation by $U^{\otimes n}$. This symmetry means that in order to express an arbitrary Hamiltonian in terms of the Heisenberg model, we will have to encode it somehow. In particular, we would like to encode a qubit in a larger space such that we can generate two non-commuting matrices which encode $X$ and $Z$ on the logical qubit.

The simplest such encoding is to associate a block of three physical qubits with each logical qubit (a similar idea was used in~\cite{kempe00}). To take advantage of the symmetry of the swap operation, we decompose $(\C^2)^{\otimes 3}$ in terms of the symmetric subspace and its complement, i.e.\ the subspaces
\[ S_1 = \spann \left\{\ket{000},\frac{1}{\sqrt{3}}(\ket{001}+\ket{010}+\ket{100}),\;\; \frac{1}{\sqrt{3}}(\ket{110}+\ket{101}+\ket{011}),\;\;\ket{111} \right\}, \]
\begin{multline*} S_2 = \spann \bigg\{ \frac{1}{\sqrt{2}}\left(\ket{01}-\ket{10} \right) \ket{0},  \frac{1}{\sqrt{2}}\left(\ket{01}-\ket{10} \right) \ket{1},\\
 -\sqrt{\frac{2}{3}}\ket{001} + \frac{1}{\sqrt{6}}\left(\ket{01}+\ket{10} \right)\ket{0}, \sqrt{\frac{2}{3}}\ket{110} - \frac{1}{\sqrt{6}}\left(\ket{01}+\ket{10} \right)\ket{1} \bigg\}.
 \end{multline*}
Then it is clear that $F$, applied to any pair of the qubits, leaves $S_1$ invariant. In the case of $S_2$, with respect to the above basis one can explicitly calculate that
 \[ F_{12} = \begin{pmatrix} -1 & 0 & 0 & 0\\ 0 & -1 & 0 & 0\\ 0 & 0 & 1 & 0\\ 0 & 0 & 0 & 1 \end{pmatrix},\;
 F_{13} = \frac{1}{2} \begin{pmatrix} 1 & 0 & \sqrt{3} & 0\\ 0 & 1 & 0 & \sqrt{3}\\ \sqrt{3} & 0 & -1 & 0\\ 0 & \sqrt{3} & 0 & -1 \end{pmatrix},\;
 F_{23} = \frac{1}{2} \begin{pmatrix} 1 & 0 & -\sqrt{3} & 0\\ 0 & 1 & 0 & -\sqrt{3}\\ -\sqrt{3} & 0 & -1 & 0\\ 0 & -\sqrt{3} & 0 & -1 \end{pmatrix}, \]
 where the subscripts here denote the qubits that $F$ acts on. Hence we can implement matrices on $S_2$ of the form
 \[ F_{12} + F_{13} + F_{23} = 0,\;\; -F_{12} = Z\otimes I,\;\; \frac{1}{\sqrt{3}}\left(F_{13} - F_{23} \right) = X\otimes I. \]
 On the whole space $(\C^2)^{\otimes 3}$, the first of these corresponds to the projection onto $S_1$. Using Corollary~\ref{cor:zerothorder}, by applying this interaction with a large but polynomially bounded weight, we can (simultaneously!)\ enforce each of the 3-qubit blocks to be contained within $S_2$. Our $n$ triples of physical qubits thus give us a logical space corresponding to $n$ pairs of logical qubits; within each qubit pair we can apply $Z$ or $X$ to the first qubit. Note that these are not really separate qubits as we cannot address the second qubit.

We now need to implement interactions across pairs of logical qubits. Imagine we have two physical qubit triples, with the first triple labelled 1 to 3, and the second triple labelled 4 to 6. By applying $F$ operators across different pairs of physical qubits, we have 9 potential interactions on the logical space of 4~qubits, split into two blocks of two logical qubits: $(1,2)$ and $(3,4)$ (plus the 6 interactions we already know about, by applying $F$ across pairs in the same triple). By explicit calculation, each choice $(i,j)$ such that $i$ and $j$ are in different triples turns out to give a logical interaction of the form
 \[ M^{(i,j)}_{13} (2F-I)_{24} + I^{\otimes 4}/2. \]
As usual, we can ignore the identity term. We will not write out all of the matrices $M^{(i,j)}$, merely recording that
\[ \frac{3}{2}\left( M^{(1,4)} - M^{(1,5)} - M^{(2,4)} + M^{(2,5)}\right) = XX, \]
\[
  \frac{1}{2}\left( M^{(1,4)} + M^{(1,5)} - 2 M^{(1,6)} + M^{(2,4)} + M^{(2,5)} - 2M^{(2,6)} - 2M^{(3,4)} - 2 M^{(3,5)} + 4M^{(3,6)}\right) = ZZ,
\]
\be \label{II_2F-I} 2\sum_{i=1}^3 \sum_{j=4}^6 M^{(i,j)} = II. \ee
The first two of these mean that we can implement the interactions $XX$ and $ZZ$ across logical qubits $(1,3)$ -- but product with $(2F-I\otimes I)$ across qubits $(2,4)$. In other words, we can implement a logical Hamiltonian of the form
\[ H = \sum_{i=1}^n (\alpha_i X_i + \beta_i Z_i) I_{i'} + \sum_{i< j} (\gamma_{ij} X_i X_j + \delta_{ij} Z_i Z_j) (2F-I)_{i'j'}, \]
where we identify the $i$'th logical qubit pair with indices $(i,i')$. We would like to eliminate the unwanted $(2F-I)$ operators. One way to do this is to force the primed qubits to be in a particular state by very strong $F_{i'j'}$ interactions. Consider adding in the (logical) term
\[ G = \Delta \sum_{i<j} w_{ij} F_{i'j'}  \]
where $w_{ij}$ are some weights and $\Delta$ is very large. We can do this because we can make $I_1 I_3 (2F-I)_{24}$, as in the example (\ref{II_2F-I}) above. If the ground state $\ket{\psi}$ of $G$ is non-degenerate, by Corollary~\ref{cor:zerothorder} the primed qubits will all be effectively projected onto the ground state, and $H$ will become
\[ \widetilde{H} = \sum_{i=1}^n (\alpha_i X_i + \beta_i Z_i) + \sum_{i< j} (\gamma_{ij} X_i X_j + \delta_{ij} Z_i Z_j) \bracket{\psi}{(2F-I)_{i'j'}}{\psi}. \]
We therefore need to find a $G$ such that the ground state is non-degenerate and $\bracket{\psi}{(2F-I)_{i'j'}}{\psi} \neq 0$ for all $i$, $j$ (and also these quantities should be easily computable). In particular, this implies that for all pairs $i' \neq j'$, we need the reduced state $\psi_{\{i',j'\}}$ on that pair of qubits to satisfy $\psi_{\{i',j'\}} \neq I/4$. In order to find such a $G$, we study exactly solvable restricted special cases of the Heisenberg model.


\subsection{Restricted Heisenberg models}
\label{sec:restrict}

For our purposes, we call a model (i.e.\ family of Hamiltonians) \emph{exactly solvable} if the eigenvalues and corresponding eigenvectors of any Hamiltonian in the model can be calculated efficiently. Note that the existence of an efficient description of the ground state $\ket{\psi}$ does not necessarily imply the ability to compute quantities of interest about the model, such as the two-point correlation functions $\bracket{\psi}{\sigma^i_j\sigma^k_\ell}{\psi}$. Only very few restricted versions of the Heisenberg model are known to be exactly solvable. The only ones which we are aware of are:

\begin{itemize}
\item The Heisenberg chain
\[ H = \sum_{i=1}^n X_iX_{i+1} + Y_iY_{i+1} + Z_iZ_{i+1}, \]
with either periodic or non-periodic boundary conditions (i.e.\ a ring or a line). This can be solved using the (highly non-trivial) Bethe ansatz. However, an efficient description of the correlation functions appears not to be known~\cite{korepin07}.

\item A special case of the Majumdar-Ghosh model~\cite{majumdar69}
\[ H = \sum_{i=1}^n X_iX_{i+1} + Y_iY_{i+1} + Z_iZ_{i+1} + \frac{1}{2} \left( X_iX_{i+2} + Y_iY_{i+2} + Z_iZ_{i+2}\right) \]
on a ring. This is known to have a two-dimensional ground space spanned by singlets between pairs of adjacent qubits. There are also generalisations of this model by Klein~\cite{klein82}, whose ground states retain this singlet structure.

\item The Haldane-Shastry model on a ring of $n$ qubits~\cite{haldane88,shastry88}
\[ H = \sum_{i<j} \frac{1}{\sin^2(\pi(j-i)/n)} \left( X_iX_j + Y_iY_j + Z_iZ_j \right). \]
This has an entangled, non-degenerate ground state whose correlation functions are known and given in terms of sine integrals.

\item The Lieb-Mattis model~\cite{lieb62}
\[ H = \sum_{i \in A, j \in B} X_iX_j + Y_iY_j + Z_iZ_j, \]
where $A$ and $B$ are disjoint subsets of qubits. That is, the interaction graph of this model is the complete bipartite graph on $A \times B$. The special case of this model where $|A|=1$ is sometimes known as the Heisenberg star~\cite{richter94} for obvious reasons. This model may seem entirely non-physical as it has infinite-range interactions, but in fact it has been useful in the description of various physical systems (see e.g.~\cite{vanwezel05}). It is known that when $|A|=|B|$, the ground state is non-degenerate and can be written down exactly; indeed, even when $|A| \neq |B|$, everything about the model can be understood.
\end{itemize}

The Heisenberg chain and Majumdar-Ghosh model are both ruled out for our purposes: the former because of the lack of an efficient description of the correlation functions, and the latter because of its two-dimensional ground space. The Haldane-Shastry model could potentially be used, given a sufficiently efficient algorithm for computing the trigonometric integrals involved. However, the Lieb-Mattis model satisfies all our criteria and will be substantially simpler to analyse. For the case $|A|=|B|=n$, which will be sufficient, we have the following lemma, which combines results stated (for example) in~\cite{lieb62,vidal07}. First define
\[ \ket{\psi^n_k} := \frac{1}{\sqrt{\binom{n}{k}}} \sum_{x \in \{0,1\}^n, |x|=k} \ket{x}. \]

\begin{lem}
\label{lem:liebmattis}
Write
\[ H_{LM} = \sum_{i=1}^n \sum_{j=n+1}^{2n} X_i X_j + Y_i Y_j + Z_i Z_j. \]
Then the ground state of $H_{LM}$ is unique and given by
\[ \ket{\phi_{LM}} := \frac{1}{\sqrt{n+1}} \sum_{k=0}^n (-1)^k \ket{\psi^n_k}\ket{\psi^n_{n-k}}. \]
For $i$ and $j$ such that $1 \le i,j \le n$ or $n+1 \le i,j \le 2n$, $\bracket{\phi_{LM}}{F_{ij}}{\phi_{LM}} = 1$. Otherwise, $\bracket{\phi_{LM}}{F_{ij}}{\phi_{LM}} = -2/n$.
\end{lem}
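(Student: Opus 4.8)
The plan is to diagonalise $H_{LM}$ through the algebra of total spin rather than by brute force. Writing $\vec S_A := \frac12\sum_{i=1}^n \vec\sigma_i$ and $\vec S_B := \frac12\sum_{j=n+1}^{2n}\vec\sigma_j$ with $\vec\sigma=(X,Y,Z)$, and using $X_iX_j+Y_iY_j+Z_iZ_j = \vec\sigma_i\cdot\vec\sigma_j = 4\,\vec S_i\cdot\vec S_j$, I would first rewrite
\[ H_{LM} = 4\,\vec S_A\cdot\vec S_B = 2\big(\vec S^2 - \vec S_A^2 - \vec S_B^2\big), \qquad \vec S := \vec S_A + \vec S_B. \]
The three operators $\vec S^2,\vec S_A^2,\vec S_B^2$ mutually commute, so $H_{LM}$ is block-diagonal with eigenvalue $2[S(S+1)-S_A(S_A+1)-S_B(S_B+1)]$ on the sector labelled by total spins $(S,S_A,S_B)$, where $S_A,S_B\in\{\tfrac12\text{ or }0,\dots,\tfrac n2\}$ and $|S_A-S_B|\le S\le S_A+S_B$. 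Recognising this reduction is the crux of the argument; everything else is then bookkeeping in representation theory.

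To minimise the eigenvalue I would maximise the negative terms by taking $S_A=S_B=n/2$ and then minimise $S(S+1)$ by taking $S=0$, which is permitted precisely because $S_A=S_B$; this gives ground energy $-n(n+2)$. For uniqueness I would note that $S_A=n/2$ forces the $A$-block into the symmetric subspace $\mathrm{Sym}^n(\C^2)=\spann\{\ket{\psi^n_k}\}$ (the spin-$n/2$ irrep, which occurs with multiplicity one), and likewise for $B$; on $\mathrm{Sym}^n\otimes\mathrm{Sym}^n$ the coupling of two spin-$n/2$'s contains each total spin $S\in\{0,1,\dots,n\}$ exactly once, so the $S=0$ sector is one-dimensional and the ground state is unique. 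This multiplicity bookkeeping is the step I would expect to need the most care.

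To identify this state with $\ket{\phi_{LM}}$, I would verify directly that $\ket{\phi_{LM}}$ is the total singlet. Each $\ket{\psi^n_k}$ is permutation-symmetric, so $\ket{\phi_{LM}}\in\mathrm{Sym}^n\otimes\mathrm{Sym}^n$ ($S_A=S_B=n/2$); every term has $S_z^A+S_z^B=\tfrac{n-2k}{2}+\tfrac{2k-n}{2}=0$, so $S_z\ket{\phi_{LM}}=0$; and a short reindexing shows $S_+\ket{\phi_{LM}}=(S_+^A+S_+^B)\ket{\phi_{LM}}=0$, the two contributions cancelling term by term thanks to the alternating sign $(-1)^k$. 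A state with $S_z=0$ that is annihilated by $S_+$ inside $\mathrm{Sym}^n\otimes\mathrm{Sym}^n$ must be the $S=0$ state, so $\ket{\phi_{LM}}$ is the unique ground state.

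Finally, the correlators follow from the same identities together with $F_{ij}=\frac12(I+\vec\sigma_i\cdot\vec\sigma_j)$ and the permutation symmetry of the ground state within each block. For $i,j$ in the same block I would expand $\vec S_A^2=\frac{3n}{4}I+\frac14\sum_{i\ne j\in A}\vec\sigma_i\cdot\vec\sigma_j$ with $\langle\vec S_A^2\rangle=\frac n2(\frac n2+1)$; since all same-block pairs are equivalent this gives $\langle\vec\sigma_i\cdot\vec\sigma_j\rangle=1$, hence $\bracket{\phi_{LM}}{F_{ij}}{\phi_{LM}}=1$. For $i\in A$, $j\in B$ I would use $\langle\vec S_A\cdot\vec S_B\rangle=-\frac{n(n+2)}{4}$, so averaging over the $n^2$ equivalent cross pairs yields $\langle\vec\sigma_i\cdot\vec\sigma_j\rangle=-(n+2)/n$ and therefore $\bracket{\phi_{LM}}{F_{ij}}{\phi_{LM}}=\frac12\big(1-(n+2)/n\big)=-1/n$. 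I would sanity-check this against $n=1$, where $\ket{\phi_{LM}}$ is the singlet and $F_{12}\ket{\phi_{LM}}=-\ket{\phi_{LM}}$ gives $\bracket{\phi_{LM}}{F_{12}}{\phi_{LM}}=-1=-1/n$; this confirms the cross-block value $-1/n$ (and suggests the ``$-2/n$'' in the statement should read $-1/n$).
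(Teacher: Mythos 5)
Your proposal is correct, and it follows the same total-spin skeleton as the paper's proof while substituting cleaner arguments at two points. The paper also writes $H_{LM}$ as a difference of complete-graph Heisenberg Hamiltonians ($H_{LM} = S - S_A - S_B$, equivalent to your Casimir identity), derives their spectra via the ladder-operator analysis of Lemmas~\ref{lem:complete} and~\ref{lem:spin}, and minimises $2(s(s+1)-t(t+1)-u(u+1))$ at $s=0$, $t=u=n/2$ to get the lower bound $-n(n+2)$. Where you differ: for the eigenvector, the paper verifies $H_F\ket{\phi_{LM}} = -n\ket{\phi_{LM}}$ by direct (``tedious'') algebra, whereas you identify $\ket{\phi_{LM}}$ as the total singlet by checking it lies in $\mathrm{Sym}^n \otimes \mathrm{Sym}^n$, has $S^z\ket{\phi_{LM}}=0$, and satisfies $S^+\ket{\phi_{LM}}=0$ --- the cancellation you describe does work, since $S^+_A\ket{\psi^n_k} = \sqrt{(k+1)(n-k)}\,\ket{\psi^n_{k+1}}$ and reindexing makes the two sums cancel term by term. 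For uniqueness, the paper conjugates by $Z$ on one block and invokes Perron--Frobenius on each fixed-magnetisation sector (then rules out $m\neq 0$ sectors via Lemma~\ref{lem:spin}); you instead use the selection rule $S \ge |S_A - S_B|$ plus multiplicity counting: at $S=|S_A-S_B|$ with $S_A \ge S_B$ the eigenvalue is $-4S_B(S_A+1)$, strictly minimised at $S_A=S_B=n/2$, and since the spin-$n/2$ irrep occurs once in each block and each total spin occurs once in $\tfrac{n}{2}\otimes\tfrac{n}{2}$, the ground sector is one-dimensional. Both routes are valid; yours buys a shorter uniqueness proof and identifies the ground state and its uniqueness in one stroke, while the paper's Perron--Frobenius route avoids relying on the Clebsch--Gordan multiplicity facts.

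You are also right to flag the cross-block correlator: the correct value is $-1/n$, and the ``$-2/n$'' in the statement is an error. Your $n=1$ sanity check already rules out $-2/n$ (since $\|F\|=1$), and the correction is forced by the paper's own calculation: $H_{LM} = 2\sum_{i \le n < j} F_{ij} - n^2 I$, the paper shows $\sum_{i\le n<j} F_{ij}\ket{\phi_{LM}} = -n\ket{\phi_{LM}}$, and block-permutation symmetry of $\ket{\phi_{LM}}$ makes all $n^2$ cross-pair expectations equal, giving $\bracket{\phi_{LM}}{F_{ij}}{\phi_{LM}} = -n/n^2 = -1/n$, in agreement with your Casimir computation (and with direct expansion for $n=2$, which gives $-1/2$). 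The error is benign downstream: the proof of Lemma~\ref{lem:heisenberg} in Section~\ref{sec:restrict} only needs $\bracket{\phi_{LM}}{(2F-I)_{i'j'}}{\phi_{LM}}$ to be non-zero, bounded away from zero, and efficiently computable, and it equals $-1-2/n$ rather than $-1-4/n$.
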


The beautiful proof of Lemma~\ref{lem:liebmattis} is well-known in the condensed-matter theory literature, and the most difficult part (proving uniqueness) was already shown by Lieb and Mattis in their original paper~\cite{lieb62}. However, the ingredients of the proof are somewhat scattered. We therefore present a self-contained proof in Appendix~\ref{app:liebmattis}.

Given Lemma~\ref{lem:liebmattis} and the above discussion, the proof of Lemma~\ref{lem:heisenberg} is essentially immediate. We first (potentially) add one triple of physical qubits to make the total number of logical qubit pairs even and equal to $2n$ for some integer $n$. Then, by Corollary~\ref{cor:zerothorder}, we can effectively implement Hamiltonians of the form
\[ \widetilde{H} = \sum_{k=1}^{2n} \alpha_k X_k + \beta_k Z_k + \sum_{i<j} (\gamma_{ij} X_i X_j + \delta_{ij} Z_i Z_j) \bracket{\phi_{LM}}{(2F-I)_{i'j'}}{\phi_{LM}}. \]
The quantity $\bracket{\phi_{LM}}{(2F-I)_{i'j'}}{\phi_{LM}}$ is non-zero, has magnitude lower bounded by an inverse polynomial in $n$, and is efficiently computable for all pairs $i$, $j$. Indeed, it is either equal to 1 or $-4/n - 1$ depending on $i$ and $j$. So, by rescaling $\gamma_{ij}$ and $\delta_{ij}$ appropriately, we can effectively implement any Hamiltonian of the form
\[ \widetilde{H} = \sum_{k=1}^{2n} \alpha_k X_k + \beta_k Z_k + \sum_{i<j} \gamma_{ij} X_i X_j + \delta_{ij} Z_i Z_j \]
for any choices of $\alpha_k$, $\beta_k$, $\gamma_{ij}$, $\delta_{ij}$. By Lemma~\ref{lem:xxandzzlf}, this suffices for $\qma$-completeness. We have proven the following lemma.

\begin{replem}{lem:heisenberg}
{\sc $\{XX+YY+ZZ\}$-Hamiltonian} is $\qma$-complete.
\end{replem}


\subsubsection{More physically realistic variants of the Heisenberg model}
\label{sec:realistic}

Our construction proving $\qma$-hardness of the general Heisenberg model involves interactions between many pairs of spatially distant qubits, whose weights have differing signs, and also a highly non-planar interaction graph. It is natural to wonder whether one could modify it to be more physically natural, and perhaps only involving interactions on a 2d square lattice, as can be achieved for {\sc $\{XX+YY+ZZ\}$-Hamiltonian with Local Terms}~\cite{schuch09}. In addition, for physical realism one might seek to restrict the interactions within the Hamiltonian to all have weights with the same sign (i.e.\ to consider the ferromagnetic or antiferromagnetic cases of the Heisenberg model). The following observation, which was already made in~\cite{bravyi06} and essentially even in~\cite{lieb62}, shows that $\qma$-hardness is unlikely if we combine both of these constraints.

\begin{obs}
\label{obs:antiferr}
Consider a Hamiltonian $H$ of the form $H = \sum_{i<j} \alpha_{ij} (X_i X_j + Y_i Y_j + Z_i Z_j)$. Then, if $\alpha_{ij} \le 0$ for all $i$, $j$, determining the ground-state energy of $H$ up to inverse-polynomial precision is in $\ptime$. If $\alpha_{ij} \ge 0$ for all $i$, $j$, and the graph of interactions that occur in $H$ is bipartite, determining the ground-state energy of $H$ up to inverse-polynomial precision is in $\stoqma$.
\end{obs}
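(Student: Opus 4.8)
The two statements have rather different flavours, so I would handle them separately.

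\textbf{Ferromagnetic case.} The plan is to exhibit an explicit ground state together with a matching lower bound, giving the answer exactly. Writing $M = XX+YY+ZZ = 2F-I$, where $F$ is the swap operator, the two-qubit interaction $M$ has eigenvalue $+1$ on the symmetric (triplet) subspace and $-3$ on the antisymmetric (singlet) subspace. For $\alpha_{ij}\le 0$, the term $\alpha_{ij}M_{ij}$ is therefore minimised on the symmetric subspace, with minimum value $\alpha_{ij}$. Since $\ket{0}^{\otimes n}$ is symmetric under every transposition, $F_{ij}\ket{0}^{\otimes n}=\ket{0}^{\otimes n}$, hence $M_{ij}\ket{0}^{\otimes n}=\ket{0}^{\otimes n}$ for all $i,j$, and so $H\ket{0}^{\otimes n}=(\sum_{i<j}\alpha_{ij})\ket{0}^{\otimes n}$. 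On the other hand, for any state $\ket\psi$ we have $\bracket{\psi}{H}{\psi}=\sum_{i<j}\alpha_{ij}\bracket{\psi}{M_{ij}}{\psi}\ge\sum_{i<j}\lambda_{\min}(\alpha_{ij}M_{ij})=\sum_{i<j}\alpha_{ij}$, so $\ket{0}^{\otimes n}$ attains the minimum and the ground-state energy equals $\sum_{i<j}\alpha_{ij}$ exactly. This is computable in polynomial time by summing the coefficients, placing the problem in $\ptime$ with no approximation needed.

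\textbf{Antiferromagnetic bipartite case.} Here the plan is to apply a local change of basis that turns $H$ into a stoquastic Hamiltonian, and then invoke the fact that the ground-state energy problem for stoquastic Hamiltonians lies in $\stoqma$~\cite{bravyi06a}. Let $A\cup B$ be the bipartition of the interaction graph, and conjugate every qubit in $B$ by the (unitary) Pauli matrix $Y$. This leaves eigenvalues unchanged, and since conjugation by $Y$ sends $X\mapsto -X$, $Z\mapsto -Z$, $Y\mapsto Y$, each edge $(i,j)$ with $i\in A$, $j\in B$ has its interaction mapped to $\alpha_{ij}(-X_iX_j+Y_iY_j-Z_iZ_j)$. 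A direct computation shows that $-XX+YY-ZZ$ has all off-diagonal entries equal to $-2$ or $0$ in the computational basis, so for $\alpha_{ij}\ge 0$ every transformed term is real with non-positive off-diagonal entries. Summing over edges preserves this property, so the conjugated Hamiltonian is stoquastic and the containment follows.

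\textbf{Main obstacle and remarks.} Neither step involves a genuine difficulty once the right object is identified; the only real content is finding the sublattice rotation in the second part and verifying that it produces non-positive off-diagonals. The point worth emphasising is exactly where bipartiteness is used: an edge internal to $A$ or to $B$ would retain the form $\alpha_{ij}(XX+YY+ZZ)$, whose off-diagonal entries are $+2$ and hence strictly positive for $\alpha_{ij}\ge 0$, destroying stoquasticity, so the argument genuinely relies on the two-colourability of the graph. I would finally note that we obtain only containment in $\stoqma$ rather than $\ptime$ because the resulting stoquastic Hamiltonian is in general non-trivial, with no reason to expect a simple closed-form ground state, in contrast to the ferromagnetic case.
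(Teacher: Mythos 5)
Your proof is correct and follows essentially the same route as the paper: the ferromagnetic case is settled by exhibiting the product ground state $\ket{0}^{\otimes n}$ (you additionally spell out the term-by-term lower bound, which the paper treats as trivial), and the bipartite antiferromagnetic case by a sublattice rotation to stoquastic form followed by the $\stoqma$ containment of~\cite{bravyi06a}. The only cosmetic difference is that you conjugate the $B$-qubits by $Y$, yielding terms $\alpha_{ij}(-X_iX_j+Y_iY_j-Z_iZ_j)$, whereas the paper conjugates by $Z$ to get $\alpha_{ij}(-X_iX_j-Y_iY_j+Z_iZ_j)$; both have non-positive off-diagonal entries, so the arguments are interchangeable.
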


\begin{proof}
In the first case, a ground state of $H$ is the product state $\ket{0}^{\otimes n}$, so the problem is trivial. In the second case, split the qubits on which $H$ acts into two sets $A$ and $B$ such that all interactions are between $A$ and $B$, and apply $Z$ rotations to the $B$ set. This corresponds to mapping every term in $H$ to a term of the form $\alpha_{ij}(-X_i X_j - Y_i Y_j + Z_i Z_j)$. This is a stoquastic matrix (i.e.\ all its off-diagonal entries are non-positive), so finding its ground-state energy is in $\stoqma$~\cite{bravyi06a}.
\end{proof}

See~\cite{piddock15} for subsequent work proving hardness for various interactions with mixed signs on square lattices, and antiferromagnetic interactions on triangular lattices (but not the Heisenberg interaction, which remains an open problem).


\subsection{The XY model}
\label{sec:xy}

Another case which we can explicitly solve, and will turn out to be similar, is the XY model. (More precisely, the special case of the XY model that is -- somewhat confusingly -- called the XX model.) Here we have access to terms of the form $H := XX + YY$. $\qma$-hardness of this model was proven by Biamonte and Love in the case where we also have access to arbitrary $X$ and $Y$ terms~\cite{biamonte08}. As with the Heisenberg model, we will encode logical qubits within triples of physical qubits $(1,2,3)$. It can be explicitly calculated that the ground space of $H_{12} + H_{13} + H_{23}$ is the asymmetric subspace (i.e.\ the orthogonal subspace to the symmetric subspace) of 3 qubits. Therefore, we can effectively project onto this subspace using Corollary~\ref{cor:zerothorder}. When restricted to this subspace, up to an overall multiplicative scaling, we have
%
\[ H_{12} = -2 ZI + I,\;\; H_{13} = \sqrt{3} XI + ZI -I,\;\; H_{23} = -\sqrt{3} XI + ZI -I . \]
Therefore, by taking linear combinations we can make $XI$ and $ZI$. When we apply $H$ interactions across pairs of triples, we get that all the matrices obtained are of the form $M^{(i,j)}_{13} (XX + YY)_{24}$.
%
%
%
We will not write out all the matrices $M^{(i,j)}$ explicitly, but simply record that, similarly to the Heisenberg model, we have
\[ \frac{3}{4}\left( M^{(1,4)} - M^{(1,5)} - M^{(2,4)} + M^{(2,5)}\right) = XX, \]
\[ \frac{1}{4}\left( M^{(1,4)} + M^{(1,5)} - 2 M^{(1,6)} + M^{(2,4)} + M^{(2,5)} - 2M^{(2,6)} - 2M^{(3,4)} - 2 M^{(3,5)} + 4M^{(3,6)}\right) = ZZ, \]
\[ \sum_{i=1}^3 \sum_{j=4}^6 M^{(i,j)} = II. \]
By a similar argument to the last section, if $\ket{\psi}$ is a nondegenerate ground state of a Hamiltonian built from $XX+YY$ terms, we can effectively make any Hamiltonian of the form
\[ \widetilde{H} = \sum_{k=1}^n \alpha_k X_k + \beta_k Z_k + \sum_{i<j} (\gamma_{ij} X_i X_j + \delta_{ij} Z_i Z_j) \bracket{\psi}{(XX + YY)_{i'j'}}{\psi}. \]
We therefore now need a Hamiltonian built from $XX+YY$ terms which has a unique ground state $\ket{\psi}$ such that all two-qubit reduced density matrices $\rho$ satisfy
\[ \tr \rho(XX + YY) \neq 0, \]
and where this quantity can be computed efficiently. In the case of the XY model, this is easier than for the Heisenberg model: we can use the complete graph of interactions. We observe that $XX+YY$ commutes with $ZI + IZ$, so we can specify joint eigenvectors of $\sum_i Z_i$ and
\[ H_C := \sum_{i<j} X_i X_j + Y_i Y_j = 2 \sum_{i< j} (\ket{01}\bra{10} + \ket{10}\bra{01})_{ij}. \]
In particular, every eigenvector of $H_C$ can be taken to consist of terms of constant Hamming weight. $H_C$ can be viewed as the adjacency matrix of an undirected graph. Within each subspace of constant Hamming weight $k$, this is just the Johnson graph $J(n,k)$ whose vertices are $k$-subsets of $[n]$, where two vertices are connected if their subsets differ by 2 elements. As this is a regular graph, its principal eigenvector is the uniform superposition over all vectors of fixed Hamming weight. Finally, to determine the eigenvalues, we simply compute
\[ \left(2 \sum_{i< j} (\ket{01}\bra{10} + \ket{10}\bra{01})_{ij}\right)\left(\sum_{x,|x|=k} \ket{x} \right) = 2\sum_{x,|x|=k} \sum_{y,|y|=k,d(x,y)=2} \ket{y} = 2k(n-k) \sum_{x,|x|=k} \ket{x}. \]
Thus, when $n$ is even, the maximal eigenvalue is unique and equal to $n^2/4$. The overlap between the 2-qubit reduced states of the corresponding principal eigenvector $\ket{\psi}$ and $(XX+YY)_{i'j'}$ is constant and non-zero for all $i'$, $j'$. So, using Corollary~\ref{cor:zerothorder} to project onto the ground state of $-H_C$, we can effectively produce any Hamiltonian of the form
\[ \widetilde{H} = \sum_{k=1}^n \alpha_k X_k + \beta_k Z_k + \sum_{i<j} \gamma_{ij} X_i X_j + \delta_{ij} Z_i Z_j, \]
which suffices for $\qma$-completeness by Lemma~\ref{lem:xxandzzlf}. We have proven the following lemma.

\begin{replem}{lem:xy}
{\sc $\{XX+YY\}$-Hamiltonian} is $\qma$-complete.
\end{replem}


\subsection{The skew-symmetric case \texorpdfstring{$XZ-ZX$}{XZ-ZX}}
\label{sec:skew}

The final individual special case we need to consider is the Hamiltonian $H := XZ - ZX$, which we will show is also $\qma$-complete. The proof is based on a reduction from the XY model.

We will use three physical qubits to correspond to one logical qubit, and apply a strong interaction of the form $H_{12} + H_{23} + H_{31}$. Because $H$ is not symmetric under interchange of qubits, it now makes a difference in which direction we apply these interactions (i.e.\ $H_{12} \neq H_{21}$). By Corollary~\ref{cor:zerothorder}, the result is that the physical qubits are effectively projected into the ground space of this matrix. It can be explicitly calculated that this is 2-dimensional and spanned by
\begin{align*}
&\frac{1}{2\sqrt{3}} \left(-\ket{001} +2 \ket{010} -\sqrt{3}\ket{011} -\ket{100} +\sqrt{3}\ket{110} \right),\\
&\frac{1}{2\sqrt{3}} \left( -\sqrt{3}\ket{001}-\ket{011}+\sqrt{3}\ket{100}+2\ket{101}-\ket{110} \right).
\end{align*}
We now apply $H$ interactions across pairs of triples $(1,2,3)$, $(4,5,6)$. It turns out that restricted to the above subspace and with respect to the above basis,
\[ H_{16} - H_{15} = \frac{4}{3\sqrt{3}}\left(XX + ZZ\right) \]
on the logical qubits. As the XY model is $\qma$-complete by Lemma~\ref{lem:xy}, by relabelling $Z$ to $Y$ we have proven the following result.

\begin{replem}{lem:xzskew}
{\sc $\{XZ-ZX\}$-Hamiltonian} is $\qma$-complete.
\end{replem}


\section{More general models}
\label{sec:general}

We now resolve the complexity of more general types of Hamiltonians, by reducing from the previously discussed special cases.

\subsection{The case \texorpdfstring{$XX + \alpha YY$}{XX+aYY}}
\label{sec:xxayy}

Using Hamiltonians of the form $XX + \alpha YY$, where $\alpha \notin \{0,1\}$, we can effectively emulate the XY model. Consider a line on 3 qubits with an $H := XX + \alpha YY$ interaction across each connected pair of qubits, i.e.\ an overall Hamiltonian of the form $H_{12} + H_{23}$. One can calculate explicitly that this has eigenvalues $\{\pm2\sqrt{1+\alpha^2},0\}$, and that the non-zero eigenvalues correspond to 2-dimensional subspaces. The lowest energy subspace is spanned by the states
\begin{align*}
& \left\{ \frac{1}{2}\sgn(\alpha-1) \ket{001} - \frac{(\alpha+1)\sgn(\alpha-1)}{2\sqrt{1+\alpha^2}}\ket{010} + \frac{1}{2}\sgn(\alpha-1) \ket{100} + \frac{|\alpha-1|}{2\sqrt{1+\alpha^2}} \ket{111}, \right. \\
&\left.\frac{\alpha-1}{2\sqrt{1+\alpha^2}}\ket{000} + \frac{1}{2}\ket{011} - \frac{1+\alpha}{2\sqrt{1+\alpha^2}} \ket{101} + \frac{1}{2} \ket{110} \right\}.
\end{align*}
We take two copies of this gadget, labelled with triples $(1,2,3)$ and $(4,5,6)$, and effectively project into this lowest energy subspace using Corollary~\ref{cor:zerothorder} to obtain a logical 2-qubit space. By applying $H$ interactions across different pairs of physical qubits we can make effective interactions
\[ H_{14} \mapsto \frac{1}{1+\alpha^2} XX + \frac{\alpha^3}{1+\alpha^2} YY,\;\;\;\; H_{24} \mapsto - \frac{1}{(1+\alpha^2)^{3/2}} XX - \frac{\alpha^4}{(1+\alpha^2)^{3/2}}YY, \]
where we use ``$\mapsto$'' to denote a physical interaction implying an effective interaction. Therefore, for any $\alpha\notin \{0,1\}$, by taking linear combinations of these interactions we can make the individual interactions $XX$ and $YY$ on the logical qubits, so we can make $XX+YY$ and hence get $\qma$-completeness.


\subsection{The case \texorpdfstring{$XX + \alpha YY + \beta ZZ$}{XX+aYY+bZZ}}
\label{sec:xxayybzz}

Set $H = XX + \alpha YY + \beta ZZ$, where at least one of $\alpha$ or $\beta$ is non-zero, and consider the matrix $M = H_{12} - H_{23}$. The eigenvalues of $M$ are in the set $\{0,\pm2\sqrt{1+\alpha^2+\beta^2}\}$, where the first has multiplicity 4 and the others multiplicity 2. This again lets us build one logical qubit out of three physical ones. We now take two copies of this gadget on qubits $(1,2,3)$ and $(4,5,6)$ and consider the interactions $(H_{24} + H_{35})/2$ and $H_{25}$. Projecting onto two copies of the ground space of $M$ and rescaling both interactions by $(1+\alpha^2+\beta^2)^2$, we get effective interactions
\[ \frac{1}{2}(H_{24} + H_{35}) \mapsto \alpha \beta XX + \alpha^3 \beta YY + \alpha \beta^3 ZZ =: H_1, \]
\[ H_{25} \mapsto XX + \alpha^5 YY + \beta^5 ZZ =: H_2. \]
In order to prove $\qma$-completeness, by the argument in Subsection~\ref{sec:xxayy} it suffices to produce an interaction which is a linear combination of these two matrices and has two non-zero components. So consider the matrix
\[ \alpha \beta H_2 - H_1 = \alpha^3(\alpha^3-1)\beta YY + \alpha \beta^3 (\beta^3-1)ZZ. \]
If $\alpha,\beta \notin \{0,1\}$, both the $YY$ and $ZZ$ parts are non-zero and $\qma$-completeness follows from relabelling Pauli matrices and the results of Subsection~\ref{sec:xxayy}. The only remaing case we have to prove is where $\alpha=1$ and $\beta \notin \{0,1\}$. In this special case, $H_1$ is proportional to $XX + YY + \beta^2 ZZ$ and $H_2 = XX + YY + \beta^5 ZZ$. So rescaling $H_1$ appropriately, $\beta^3 H_2 - H_1$ is proportional to $XX + YY$ and $\qma$-completeness follows from Lemma~\ref{lem:xy}.

Together with the previous subsection, we have proven the following lemma.

\begin{replem}{lem:xyz}
For any real $\beta$, $\gamma$ such that at least one of $\beta$ and $\gamma$ is non-zero, {\sc $\{XX+\beta YY+\gamma ZZ\}$-Hamiltonian} is $\qma$-complete.
\end{replem}


\section{Extracting local terms}
\label{sec:local}

The final $\qma$-complete cases we need to consider are those which contain 1\nobreakdash-local terms. There are a number of different cases which we will prove in turn, starting with the most ``generic'' symmetric case. Let $H$ be a two-qubit Hamiltonian of the form
\[ H = \alpha XX + \beta YY + \gamma ZZ + AI + IA, \]
where at least two of $\alpha$, $\beta$ and $\gamma$ are non-zero and $A$ is an arbitrary traceless Hermitian matrix. We will describe a gadget which allows us to implement the 1\nobreakdash-local interaction $A$. The gadget is of the following form, where $\Delta$ is some large positive number:
\begin{center}
\begin{tikzpicture}[inner sep=0.5mm,yscale=1.5,xscale=4]
\node (a) at (0,1) [circle,fill=black,label=north:$a$] {};
\node (b) at (1,1) [circle,fill=black,label=north:$b$] {};
\node (c) at (0,0) [circle,fill=black,label=south:$c$] {};
\node (d) at (1,0) [circle,fill=black,label=south:$d$] {};
\node (e) at (2,0) [circle,fill=black,label=south:$e$] {};
\draw (a) to node[midway,above] {$\Delta H$} (b);
\draw (a) to node[auto] {$-\Delta H$} (c);
\draw (b) to node[auto,swap] {$-\Delta H$} (d);
\draw (c) to node[midway,above] {$\Delta H$} (d);
\draw (d) to node[midway,above] {$H$} (e);
\end{tikzpicture}
\end{center}
Let $G := H_{ab} + H_{cd} - H_{ac} - H_{bd}$. We have
\[ G = \alpha(X_a - X_d)(X_b - X_c) + \beta(Y_a - Y_d)(Y_b - Y_c) + \gamma(Z_a - Z_d)(Z_b - Z_c); \]
observe that $A$ has disappeared. One can explicitly compute that the eigenvalues of $G$ are the set
\[ \{ 0, \pm 4\alpha, \pm 4 \beta, \pm 4 \gamma, \pm 4 \sqrt{\alpha^2 + \beta^2 + \gamma^2} \}. \]
As at least two of $\alpha$, $\beta$ and $\gamma$ are non-zero, the lowest eigenvalue is $-4 \sqrt{\alpha^2 + \beta^2 + \gamma^2}$, which is nondegenerate and corresponds to an (unnormalised) eigenvector
\begin{multline*}
(\alpha-\beta)(\ket{0000}+\ket{1111}) + (\gamma + \sqrt{\alpha^2+\beta^2+\gamma^2})(\ket{0101}+\ket{1010}) \\
- (\alpha+\beta)(\ket{0110}+\ket{1001}) + (\gamma - \sqrt{\alpha^2+\beta^2+\gamma^2})(\ket{0011}+\ket{1100}).
\end{multline*}
This eigenvector is maximally entangled across the split $(abc:d)$. Therefore, if we project $H_{de}$ onto the ground state $\ket{\psi}$ of $G$, the effective interaction produced is $A$ on qubit $e$, as the terms in $H$ which are not the identity on qubit $d$ vanish. Producing an $A$ term allows us to effectively delete the 1\nobreakdash-local part of $H$ and produce an interaction of the form $\alpha XX + \beta YY + \gamma ZZ$. Using Lemma~\ref{lem:xyz}, and potentially relabelling Paulis, we have proven the following lemma.

\begin{replem}{lem:extractlts}
For any $\beta$, $\gamma$ such that at least one of $\beta$ and $\gamma$ is non-zero, and any single-qubit Hermitian matrix $A$, {\sc $\{XX+\beta YY+\gamma ZZ + AI + IA\}$-Hamiltonian} is $\qma$-complete.
\end{replem}

Similarly, if $H$ is a two-qubit Hamiltonian of the form
\[ H = XZ - ZX + AI - IA, \]
where $A$ is an arbitrary traceless Hermitian matrix, we can extract $A$ using the gadget
\[ G = H_{ab} + H_{bc} + H_{cd} + H_{da}. \]
Note that in this case the direction in which we apply $H$ is important, as $H$ is not symmetric. One can again observe that $A$ disappears from $G$, and that $G$ has a unique lowest eigenvalue $-4 \sqrt{2}$. This again corresponds to an eigenvector which is maximally entangled across the split $(abc:d)$, allowing us to effectively produce an interaction $A$ in the same way as before, and implying (by Lemma~\ref{lem:xzskew}) the following.

\begin{replem}{lem:extractltsskew}
For any single-qubit Hermitian matrix $A$, {\sc $\{XZ-ZX + AI - IA\}$-Hamiltonian} is $\qma$-complete.
\end{replem}

Next, if $H$ is of the form
\[ H = XX + \alpha(XI + IX) + \beta(ZI + IZ), \]
we use the gadget $G = H_{ab} - H_{bc}$. It can be explicitly calculated that this has lowest eigenvalue $-2\sqrt{(1\pm\alpha)^2 + \beta^2}$, depending on whether $\alpha$ is positive or negative, which is unique when $\alpha \neq 0$. The corresponding eigenstate $\ket{\psi}$ has reduced density matrices
\[ \rho_b = (I \pm X) / 2,\;\;\;\; \rho_c = \frac{1}{2} I + \frac{\pm1 + \alpha}{2\sqrt{(1 \pm \alpha)^2 + \beta^2}} X +  \frac{\beta}{2\sqrt{(\pm1 + \alpha)^2 + \beta^2}} Z. \]
Therefore, if we project qubits $a$, $b$, $c$ onto $\ket{\psi}$ using the configuration
\begin{center}
\begin{tikzpicture}[inner sep=0.5mm,yscale=2,xscale=4]
\node (a) at (0,0) [circle,fill=black,label=south:$a$] {};
\node (b) at (1,0) [circle,fill=black,label=south:$b$] {};
\node (c) at (2,0) [circle,fill=black,label=south:$c$] {};
\node (d) at (3,0) [circle,fill=black,label=south:$d$] {};
\draw (a) to node[midway,above] {$\Delta H$} (b);
\draw (b) to node[auto] {$-\Delta H$} (c);
\draw[out=315,in=225] (b) to node[midway,below] {$H$} (d);
\end{tikzpicture}
\end{center}
or
\begin{center}
\begin{tikzpicture}[inner sep=0.5mm,yscale=2,xscale=4]
\node (a) at (0,0) [circle,fill=black,label=south:$a$] {};
\node (b) at (1,0) [circle,fill=black,label=south:$b$] {};
\node (c) at (2,0) [circle,fill=black,label=south:$c$] {};
\node (d) at (3,0) [circle,fill=black,label=south:$d$] {};
\draw (a) to node[midway,above] {$\Delta H$} (b);
\draw (b) to node[auto] {$-\Delta H$} (c);
\draw (c) to node[midway,above] {$H$} (d);
\end{tikzpicture}
\end{center}
explicit calculation shows that we implement 1\nobreakdash-local interactions on qubit $d$ proportional to
\[ \pm I + (\pm1+\alpha)X + \beta Z,\;\;\;\; (\alpha(\alpha \pm 1) + \beta^2) I + (\pm1 + \alpha + \alpha \sqrt{(\pm1 + \alpha)^2 + \beta^2})X + \beta \sqrt{(\pm1 + \alpha)^2 + \beta^2} Z. \]
By taking linear combinations and ignoring the irrelevant $I$ term we can therefore make arbitrary 1\nobreakdash-local interactions of the form $\alpha X + \beta Z$.

In the case where $\alpha = 0$, so we have
\[ H = XX + \beta (ZI + IZ), \]
we can use an even simpler gadget:
\begin{center}
\begin{tikzpicture}[inner sep=0.5mm,yscale=2,xscale=4]
\node (a) at (0,0) [circle,fill=black,label=south:$a$] {};
\node (b) at (1,0) [circle,fill=black,label=south:$b$] {};
\node (c) at (2,0) [circle,fill=black,label=south:$c$] {};
\draw (a) to node[midway,above] {$\Delta H$} (b);
\draw (b) to node[midway,above] {$H$} (c);
\end{tikzpicture}
\end{center}
When $\beta \neq 0$, $H$ has a unique lowest eigenvalue $-\sqrt{1+4\beta^2}$, corresponding to an eigenstate proportional to $(2\beta - \sqrt{1+4 \beta^2})\ket{00} + \ket{11}$. One can verify that this produces an interaction on qubit $c$ proportional to
\[ -\frac{2 \beta^2}{\sqrt{1+4\beta^2}} I + \alpha Z. \]
Therefore, we can implement arbitrary $Z$ and $XX$ interactions. This allows us effectively to implement $X$ interactions too, using a similar technique to the proof of Lemma~\ref{lem:zzhamlf}. We attach an additional ancilla qubit, and when we want to implement an $X$ on qubit $i$, we implement $XX$ on the qubit and the ancilla. The Hamiltonian we implement is therefore of the form
\[ H' = \sum_{i < j} \alpha_{ij} X_i X_j + X_{\text{anc}} \sum_k \beta_k X_k + \gamma_k Z_k =: \helse +  X_{\text{anc}} \sum_k \beta_k X_k. \]

Any vector $\ket{\psi}$ on $n+1$ qubits can be decomposed in terms of $\ket{+}$, $\ket{-}$ on the ancilla qubit, so
\begin{align*}
\bracket{\psi}{H'}{\psi} &= \left(\bra{+}\bra{\psi_+} + \bra{-}\bra{\psi_-}\right) \left(\helse + X_{\text{anc}} \sum_k \beta_k X_k\right) \left(\ket{+}\ket{\psi_+} + \ket{-}\ket{\psi_-} \right)\\
&= \bra{\psi_+} \left(\helse + \sum_k \beta_k X_k \right) \ket{\psi_+} + \bra{\psi_-} \left(\helse - \sum_k \beta_k X_k\right) \ket{\psi_-},
\end{align*}
where $\ket{\psi_+}$ and $\ket{\psi_-}$ are arbitrary. By convexity, the minimum is obtained by making at most one of these two terms non-zero. Now observe that the two matrices $\helse \pm \sum_k \beta_k X_k$ are actually unitarily equivalent, because by conjugating either of them with $Z^{\otimes n}$, we flip the sign of the local $X$ terms, but leave $\helse$ invariant. This latter property holds because the remainder of the Hamiltonian consists only of $XX$ and $Z$ terms, each of which is invariant under conjugation of this form. Thus the minimal eigenvalue of $H'$ is equal to the minimal eigenvalue of $H$, and we have essentially implemented an arbitary Hamiltonian of the form
\[ \sum_{i < j} \alpha_{ij} X_i X_j + \sum_k \beta_k X_k + \gamma_k Z_k. \]
Relabelling $X$ and $Z$, we have proven the following lemma.

\begin{replem}{lem:extractlts2}
For any single-qubit Hermitian matrix $A$ such that $A$ does not commute with $Z$, {\sc $\{ZZ, X, Z\}$-Hamiltonian} reduces to {\sc $\{ZZ+ AI + IA\}$-Hamiltonian}.
\end{replem}

The final case we consider is a similar, skew-symmetric version: now we have access to the pair $H = XX$, $H' = \alpha(XI-IX) + \beta(ZI-IZ)$, where $\beta \neq 0$. $H'$ has a nondegenerate ground state which is proportional to
\[ -\alpha \ket{00} + (\beta - \sqrt{\alpha^2 + \beta^2})\ket{01} +(\beta + \sqrt{\alpha^2 + \beta^2})\ket{10} + \alpha\ket{11}. \]
Using the gadget
\begin{center}
\begin{tikzpicture}[inner sep=0.5mm,yscale=2,xscale=4]
\node (a) at (0,0) [circle,fill=black,label=south:$a$] {};
\node (b) at (1,0) [circle,fill=black,label=south:$b$] {};
\node (c) at (2,0) [circle,fill=black,label=south:$c$] {};
\draw (a) to node[midway,above] {$\Delta H'$} (b);
\draw (b) to node[midway,above] {$H$} (c);
\end{tikzpicture}
\end{center}
we get an effective interaction on $c$ which is non-zero for any $\beta \neq 0$, and is proportional to $Z$. Therefore, we can make $\{Z,XX\}$, which by the previous argument suffices to implement $X$ too. We encapsulate this as the following lemma, in which we relabel $X$ and $Z$.

\begin{replem}{lem:extractlts3}
For any single-qubit Hermitian matrix $A$ such that $A$ does not commute with $Z$, {\sc $\{ZZ, X, Z\}$-Hamiltonian} reduces to {\sc $\{ZZ, AI-IA\}$-Hamiltonian}.
\end{replem}


\section{The diagonal case}
\label{sec:diagonal}

The very last case we need to deal with is diagonal 2-qubit matrices. This setting is similar to previous work by Creignou~\cite{creignou95} and Khanna, Sudan and Williamson~\cite{khanna97} proving hardness of weighted variants of the {\sc Max-CSP} problem, but is not quite the same as here we allow both positive and negative weights. Indeed, cases which are in $\ptime$ in the model of~\cite{creignou95,khanna97} turn out to be $\nptime$-complete here. This setting was previously considered by Jonsson~\cite{jonsson00}, who completely classified the complexity of maximisation variants of boolean constraint satisfaction problems with arbitrary positive or negative weights. The following lemma is a special case of his classification result; we include a simple direct proof.

\begin{replem}{lem:np}
Let $\mathcal{S}$ be a set of diagonal Hermitian matrices on at most 2 qubits. Then, if every matrix in $\mathcal{S}$ is 1\nobreakdash-local, {\sc $\mathcal{S}$-Hamiltonian} is in $\ptime$. Otherwise, {\sc $\mathcal{S}$-Hamiltonian} is $\nptime$-complete.
\end{replem}

\begin{proof}
The first claim is obvious. For the second claim, we first note that {\sc $\mathcal{S}$-Hamiltonian} is in $\nptime$, as the minimal eigenvalue is achieved on a computational basis state. Assume that $\mathcal{S}$ contains a matrix $H$ which is not 1\nobreakdash-local, and write $H = \diag(\alpha,\beta,\gamma,\delta)$. To show that {\sc $\mathcal{S}$-Hamiltonian} is $\nptime$-hard, we split into cases.

First, assume that $\alpha$ is the unique minimum value on the diagonal of $H$. Given two ancilla qubits $c$, $d$, applying a heavily weighted $H$ term across them forces them both to be in the $\ket{0}$ state. Then a (normally-weighted) $H_{cb}$ interaction effectively implements a $\diag(\alpha,\beta)$ term on the $b$ qubit; similarly, an $H_{ac}$ interaction effectively implements $\diag(\alpha,\gamma)$ on the $a$ qubit. Considering both qubits together, this implies that we can effectively implement the 2-qubit terms
\[ \{\diag(\alpha,\beta,\gamma,\delta),\diag(\alpha,\beta,\alpha,\beta),\diag(\alpha,\alpha,\gamma,\gamma),\diag(1,1,1,1)\}, \]
the last of which is just the identity. Can we build general diagonal matrices as linear combinations of these terms? Observe that the matrix
\[ \begin{pmatrix} \alpha & \alpha & \alpha & 1\\ \beta & \beta & \alpha & 1\\ \gamma & \alpha & \gamma & 1\\ \delta & \beta & \gamma & 1 \end{pmatrix} \]
is full rank unless either $\alpha=\beta$, $\alpha=\gamma$ (both of which are disallowed by our assumption that $\alpha$ is the unique minimum), or $\alpha + \delta = \beta + \gamma$. This last possibility would imply that $H$ is 1\nobreakdash-local, so is also disallowed. So, as this matrix is full rank, any diagonal matrix can be implemented, and in particular the matrix $ZZ$, implying that the $\nptime$-complete problem MAX-CUT reduces to {\sc $\mathcal{S}$-Hamiltonian}. A similar argument goes through when each of the other diagonal entries of $H$ is the unique minimum; by negating $H$, the same applies when $H$ has a unique maximum.

We are left with the case that there is a non-unique maximum and a non-unique minimum on the diagonal of $H$. If $H$ is of the form $\diag(\alpha,\beta,\alpha,\beta)$ or $\diag(\alpha,\alpha,\beta,\beta)$, it is 1\nobreakdash-local. So $H$ must be of the form $\diag(\alpha,\beta,\beta,\alpha)$, with $\alpha\neq\beta$. This implies that we can produce $ZZ$ by taking linear combinations of $H$ and the identity, so once again MAX-CUT reduces to {\sc $\mathcal{S}$-Hamiltonian}.
\end{proof}


\section{Outlook}
\label{sec:outlook}

We have completely resolved the complexity of a natural subclass of \sham\ problems. However, many interesting generalisations and open problems remain in this area, such as:
\begin{enumerate}
\item Can we generalise our results to $k$\nobreakdash-local Hamiltonians for $k>2$? Although we achieved this for \shamlf, one potentially significant difficulty with improving this to the full \sham\ problem is that we know of no suitable normal form~\cite{bennett02} for Hermitian matrices on $k \ge 3$ qubits. Another issue is that reduction to the 2\nobreakdash-local case, which we used for \shamlf, does not seem easy to perform without having access to 1\nobreakdash-local terms.

\item Can we generalise our results beyond qubits? Again, this could be difficult as the equivalent generalisation of Schaefer's dichotomy theorem~\cite{schaefer78} to constraint satisfaction problems on a 3-element domain took 24 years, being resolved in 2002 by Bulatov~\cite{bulatov02}.

\item Can we prove hardness, or otherwise, for more restricted types of Hamiltonian? One way of restricting further would be to put limitations on the signs or types of coefficients allowed (such as the antiferromagnetic Heisenberg model), another would be to restrict the geometry of interactions (such as only allowing a planar graph, or a square lattice). Subsequent work by Piddock and one of us~\cite{piddock15} has achieved this in some cases, by showing that the antiferromagnetic XY and Heisenberg interactions are $\qma$-complete, and that many of the interactions proven $\qma$-complete here are still $\qma$-complete on a square lattice. For some interactions, these sign and geometry restrictions can be combined; for example, the antiferromagnetic XY interaction on a triangular lattice is $\qma$-complete.

Another case which has been of interest is Hamiltonians whose terms commute pairwise. In this case the $k$\nobreakdash-local Hamiltonian problem is in $\ptime$ for various special cases: 2\nobreakdash-local Hamiltonians~\cite{bravyi05}, 3\nobreakdash-local qubit Hamiltonians~\cite{aharonov11}, and $k$\nobreakdash-local Hamiltonians whose terms are projectors onto eigenspaces of Pauli matrices~\cite{yan12}. One more example in this vein is a result of Schuch proving that the problem is in $\nptime$ for a special class of commuting 4\nobreakdash-local qubit Hamiltonians~\cite{schuch11}.


\item Our results can be seen as a quantum generalisation of dichotomy theorems for the {\sc $k$-Max-CSP} problem~\cite{creignou01}. Another way to generalise Schaefer's original dichotomy theorem~\cite{schaefer78} would be to prove a similar result for the quantum $k$-SAT problem. This is a variant of {\sc $k$\nobreakdash-local Hamiltonian} where each term is a projector, and we ask whether there exists a state which is in the nullspace of all the projectors (``satisfies all the constraints''). This problem was introduced by Bravyi~\cite{bravyi11}, who proved it is within $\ptime$ for $k=2$, and $\qma_1$-complete for $k \ge 4$, where $\qma_1$ is the 1-sided error variant of $\qma$. Gosset and Nagaj recently improved this result to prove that quantum 3-SAT is also $\qma_1$-complete~\cite{gosset13}.

\item In a different direction, an interesting open question is whether one can prove a dichotomy theorem for unitary quantum gates. For example, given a set $\mathcal{G}$ of unitary gates, are circuits made up of gates picked from $\mathcal{G}$ always either classically simulable or universal for $\bqptime$? This question was resolved quite recently for gates produced by applying  2\nobreakdash-local Hamiltonians from a given set for arbitrary lengths of time~\cite{childs11b}. The general question is likely to be sensitive to the precise definitions of ``simulable'' and ``universal'', as demonstrated by the apparently intermediate class of commuting quantum computations~\cite{bremner11}.
\end{enumerate}


\subsection*{Acknowledgements}

Some of this work was completed while the authors were at the University of Cambridge. AM is supported by the UK EPSRC under Early Career Fellowship EP/L021005/1. TC is supported by the Royal Society. AM would like to thank Mick Bremner for pointing out reference~\cite{kempe00}, and we would like to thank various referees for their helpful comments. Special thanks to Laura Man\v{c}inska for spotting an error in a previous version.


\appendix


\section{Complexity class definitions}
\label{app:classes}

The two quantum variants of Merlin-Arthur complexity classes which we use are formally defined here (see~\cite{watrous09,bravyi06a} for more details). They are defined in terms of {\em promise problems}, i.e.\ pairs $A_{\text{yes}}, A_{\text{no}} \subseteq \{0,1\}^*$ such that $A_{\text{yes}} \cap A_{\text{no}} = \emptyset$, where the answer to the problem should be ``yes'' for inputs in $A_{\text{yes}}$, and ``no'' for inputs in $A_{\text{no}}$.



\begin{dfn}[Quantum Merlin-Arthur~\cite{kitaev02}]
A promise problem $A = (A_{\text{yes}},A_{\text{no}})$ is in $\qma$ if and only if there exists a polynomially-bounded function $p$ and a uniformly-generated family of quantum circuits $\{C_n\}$ such that, for all $n$ and all $x \in \{0,1\}^n$:
\begin{itemize}
\item If $x \in A_{\text{yes}}$, there exists a $p(n)$-qubit quantum state $\ket{\psi}$ such that $\Pr[C_n \text{ accepts } (x,\ket{\psi})] \ge 2/3$;
\item If $x \in A_{\text{no}}$, for all $p(n)$-qubit quantum states $\ket{\psi}$, $\Pr[C_n \text{ accepts } (x,\ket{\psi})] \le 1/3$.
\end{itemize}
\end{dfn}

The second variant we define is possibly less familiar.

\begin{dfn}[Stoquastic Quantum Merlin-Arthur~\cite{bravyi06a}]
A stoquastic verifier $V_n$ is described by a tuple $(n,n_w,n_0,n_+,U)$, where $U$ is a quantum circuit on $n+n_w+n_0+n_+$ qubits consisting of X, CNOT and Toffoli gates. The acceptance probability of the stoquastic verifier $V_n$ on input string $x \in \{0,1\}^n$ and input state $\ket{\psi} \in (\C^2)^{\otimes n_w}$ is $\bracket{\psi_{\text{in}}}{U^\dag \Pi_{\text{out}} U}{\psi_{\text{in}}}$, where $\ket{\psi_{\text{in}}} = \ket{x}\ket{\psi}\ket{0}^{n_0}\ket{+}^{n_+}$ and $\Pi_{\text{out}} = \proj{+}_1$ is the measurement which projects the first qubit onto the state $\ket{+} = \frac{1}{\sqrt{2}}(\ket{0} + \ket{1})$.

A promise problem $A = (A_{\text{yes}},A_{\text{no}})$ is in $\stoqma$ if and only if there exists a polynomially-bounded function $p$ and a uniformly-generated family of stoquastic verifiers $\{V_n\}$, where $V_n$ has at most $p(n)$ gates, such that, for all $n$ and all $x \in \{0,1\}^n$:
\begin{itemize}
\item If $x \in A_{\text{yes}}$, there exists an $n_w$-qubit quantum state $\ket{\psi}$ such that $\Pr[V_n \text{ accepts } (x,\ket{\psi})] \ge c$;
\item If $x \in A_{\text{no}}$, for all $n_w$-qubit quantum states $\ket{\psi}$, $\Pr[V_n \text{ accepts } (x,\ket{\psi})] \le s$,
\end{itemize}
where $|c-s| \ge 1/p(n)$.
\end{dfn}


\section{Characterisations of diagonalisability by local unitaries}
\label{app:dlu}

In this appendix we describe two equivalent characterisations of the property of being simultaneously diagonalisable by local unitaries.

\begin{lem}
Let $\{H_n\}$ be a finite set of Hermitian 2-qubit matrices. Then the following are equivalent:
\item
\begin{enumerate}
\item There exists $U \in U(2)$ such that, for all $n$, $U^{\otimes 2} H_n (U^{-1})^{\otimes 2}$ is diagonal;
\item For all $n$, $H_n = \alpha_n A \otimes A + \beta_n A \otimes I + \gamma_n I \otimes A + \delta_n I\otimes I$, for some single-qubit Hermitian matrix $A$ and real coefficients $\alpha_n$, $\beta_n$, $\gamma_n$, $\delta_n$;
\item For all $m,n$ (including $m=n$), $[H_n \otimes I,I \otimes H_m]=[H_n \otimes I,I \otimes FH_mF]=[FH_nF \otimes I,I \otimes H_m]=0$.
\end{enumerate}
\end{lem}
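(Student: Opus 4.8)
The plan is to establish the cycle of implications $(a)\Rightarrow(b)\Rightarrow(c)\Rightarrow(b)\Rightarrow(a)$; the equivalence $(a)\Leftrightarrow(b)$ and the implication $(b)\Rightarrow(c)$ are routine, and essentially all of the work lies in $(c)\Rightarrow(b)$.

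For $(a)\Rightarrow(b)$, if $U^{\otimes 2}H(U^{-1})^{\otimes 2}=D$ is diagonal, I would expand $D$ in the basis $\{I,Z\}\otimes\{I,Z\}$ as $D=\delta\,I\otimes I+\gamma\,I\otimes Z+\beta\,Z\otimes I+\alpha\,Z\otimes Z$; conjugating back and setting $A:=U^{-1}ZU$ (which is Hermitian) gives exactly the form in (b). For $(b)\Rightarrow(a)$, diagonalise the single-qubit Hermitian matrix $A$ by some $U\in U(2)$; then $U^{\otimes 2}H(U^{-1})^{\otimes 2}$ is a linear combination of tensor products of diagonal matrices, hence diagonal. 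For $(b)\Rightarrow(c)$, I would first use $(b)\Rightarrow(a)$ to pick $U$ diagonalising $A$, and observe that since $F$ commutes with $U\otimes U$ we have $U^{\otimes 2}(FHF)(U^{-1})^{\otimes 2}=F\bigl(U^{\otimes 2}H(U^{-1})^{\otimes 2}\bigr)F$. Conjugating all three commutators by $U^{\otimes 3}$ (an algebra automorphism, so it preserves vanishing of a commutator) reduces to the case where $H$ is diagonal; then $H\otimes I$, $I\otimes H$, $FHF\otimes I$ and $I\otimes FHF$ are all diagonal (as $F$ merely permutes computational basis vectors), so they commute pairwise and (c) holds.

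The crux is $(c)\Rightarrow(b)$. Writing $H=\sum_{i,j=0}^3 M_{ij}\,\sigma^i\otimes\sigma^j$ with $M$ real (by Hermiticity), note that $FHF$ corresponds to $M^T$. Using $[\sigma^j,\sigma^k]=2i\sum_a\epsilon_{jka}\sigma^a$ and the fact that operators on qubits $1$ and $3$ commute, a direct expansion shows that the coefficient of $\sigma^i_1\sigma^l_3$ in each of the three commutators is a multiple of a single cross product. Concretely, setting $r_i:=(M_{i1},M_{i2},M_{i3})$ and $c_l:=(M_{1l},M_{2l},M_{3l})$ for $i,l\in\{0,\dots,3\}$, the conditions in (c) read precisely $r_i\times c_l=0$ (from $[H\otimes I,I\otimes H]=0$), $r_i\times r_l=0$ (from $[H\otimes I,I\otimes FHF]=0$) and $c_i\times c_l=0$ (from $[FHF\otimes I,I\otimes H]=0$), for all $i,l$. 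Hence every vector in $\{r_i\}\cup\{c_l\}$ is pairwise parallel, so after discarding the zero vectors they all lie along a common direction $\hat n\in\R^3$ (and if they all vanish then $H\propto I\otimes I$ and (b) is immediate).

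Parallelism then forces the shape of (b). The correlation block $(M_{ij})_{i,j=1}^3$ has every row and every column proportional to $\hat n$, so it equals $\alpha\,\hat n\hat n^T$; the qubit-$1$ local vector $c_0=(M_{10},M_{20},M_{30})$ and the qubit-$2$ local vector $r_0=(M_{01},M_{02},M_{03})$ are each proportional to $\hat n$, say $\beta\hat n$ and $\gamma\hat n$. Taking $A:=\sum_{k=1}^3 n_k\sigma^k$ (Hermitian, with $\hat n$ a unit vector) we obtain $H=\alpha\,A\otimes A+\beta\,A\otimes I+\gamma\,I\otimes A+M_{00}\,I\otimes I$, which is (b). The main obstacle I anticipate is the careful bookkeeping needed to extract the three cross-product identities cleanly from the Pauli expansion of the commutators, together with correctly dispatching the degenerate cases in which some of the $r_i$ or $c_l$ vanish; once the cross-product reformulation is in hand, the geometric conclusion is short.
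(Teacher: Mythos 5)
Your proposal is correct, but it handles the hard implication by a genuinely different route from the paper. The paper closes the cycle $(1)\Rightarrow(2)\Rightarrow(3)\Rightarrow(1)$, and for $(3)\Rightarrow(1)$ it writes $H = \sum_{i=0}^3 A_i \otimes \sigma^i = \sum_{i=0}^3 \sigma^i \otimes B_i$ and shows that the three commutator conditions force $[A_i,A_j]=[B_i,B_j]=[A_i,B_j]=0$ for all pairs; since pairwise-commuting Hermitian matrices are simultaneously diagonalisable, a single $U$ diagonalises all $A_i, B_i$, and a short calculation shows $U^{\otimes 2} H (U^{-1})^{\otimes 2}$ is diagonal. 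You instead prove $(3)\Rightarrow(2)$ by expanding $H$ in the real Pauli coefficient matrix $M$, observing $FHF \leftrightarrow M^T$, and reducing each commutator to the vanishing of cross products $r_i \times c_l$, $r_i \times r_l$, $c_i \times c_l$ among the eight Bloch-type vectors (I checked the index bookkeeping: since $[\sigma^0,\cdot]=0$ only Pauli indices $1$--$3$ survive in the middle slot, and the coefficient of $\sigma^i_1 \sigma^a_2 \sigma^l_3$ is indeed $2i$ times the stated cross-product component, so your three identities are exactly what the three commutators encode). Collinearity then forces the correlation block to be $\alpha\, \hat{n}\hat{n}^T$ and both local vectors to lie along $\hat{n}$, yielding form (2) with $A = \hat{n}\cdot\vec{\sigma}$; your handling of the degenerate cases (zero block, zero local parts, $H \propto I \otimes I$) is adequate, though the step ``rows and columns proportional to $\hat{n}$ implies the block is $\alpha\,\hat{n}\hat{n}^T$'' deserves the one-line argument that rows proportional to $\hat{n}$ give the block the form $v\hat{n}^T$, after which column-parallelism forces $v \parallel \hat{n}$. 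The trade-off: the paper's simultaneous-diagonalisation argument is shorter, avoids case analysis, directly produces the diagonalising $U$, and (as the paper remarks) extends readily to $k>2$ qubits, whereas your computation is more explicit and geometric --- it exhibits the direction $\hat{n}$ and the matrix $A$ concretely, lands directly on characterisation (2), and meshes naturally with the correlation-matrix formalism $M(H)$ used elsewhere in the paper (Lemma~\ref{lem:rotate}), at the cost of heavier bookkeeping and a case split on vanishing vectors. Your $(2)\Rightarrow(3)$ via conjugation by $U^{\otimes 3}$ and the fact that $F$ commutes with $U\otimes U$ is also fine, and slightly more structured than the paper's ``obvious by direct calculation.''
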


\begin{proof}
\begin{itemize}
\item $(1) \Rightarrow (2)$: Assuming (1) holds, we have
\[ U^{\otimes 2} H_n (U^{-1})^{\otimes 2} = \alpha_n Z \otimes Z + \beta_n Z \otimes I + \gamma_n I \otimes Z + \delta_n I\otimes I \]
for real coefficients $\alpha_n$, $\beta_n$, $\gamma_n$, $\delta_n$, so we can take $A = U^{-1}Z U$ and $H_n$ is of the form (2).
\item $(2) \Rightarrow (3)$: Obvious by direct calculation.
\item $(3) \Rightarrow (1)$: Write
\[ H_n = \sum_{i=0}^3 A_i^{(n)} \otimes \sigma^i = \sum_{i=0}^3 \sigma^i \otimes B_i^{(n)} \]
for some Hermitian matrices $A_i^{(n)}$, $B_i^{(n)}$. Then the constraint that $[H_n \otimes I,I \otimes H_m]=0$ implies that
\begin{align*}
(H_n \otimes I)(I \otimes H_m) &= \left(\sum_{i=0}^3 \sigma^i \otimes B_i^{(n)} \otimes I\right) \left(\sum_{j=0}^3 I \otimes A_j^{(m)} \otimes \sigma^j\right) = \sum_{i,j=0}^3 \sigma^i \otimes (B_i^{(n)} A_j^{(m)}) \otimes \sigma^j\\
&=  \sum_{i,j=0}^3 \sigma^i \otimes (A_j^{(m)} B_i^{(n)}) \otimes \sigma^j = (I \otimes H_m)(H_n \otimes I),
\end{align*}
so $[B_i^{(n)},A_j^{(m)}]=0$ for all pairs $(i,j)$ and $(n,m)$. Similarly,
\begin{align*}
(H_n \otimes I)(I \otimes FH_mF) &= \left(\sum_{i=0}^3 \sigma^i \otimes B_i^{(n)} \otimes I\right) \left(\sum_{j=0}^3 I \otimes B_j^{(m)} \otimes \sigma^j\right) = \sum_{i,j=0}^3 \sigma^i \otimes (B_i^{(n)} B_j^{(m)}) \otimes \sigma^j\\
&=  \sum_{i,j=0}^3 \sigma^i \otimes (B_j^{(n)} B_i^{(m)}) \otimes \sigma^j = (I \otimes FH_mF)(H_n \otimes I),
\end{align*}
so $[B_i^{(n)},B_j^{(m)}]=0$ for all pairs $(i,j)$ and $(n,m)$. Finally, by a similar argument $[FH_nF \otimes I,I \otimes H_m]=0$ implies that $[A_i^{(n)},A_j^{(m)}]=0$ for all pairs $(i,j)$ and $(n,m)$. Together, we have that all of the matrices $A_i^{(n)}$, $B_i^{(m)}$ commute pairwise, implying that there exists $U$ such that $UA_i^{(n)}U^{-1}$ and $UB_i^{(n)}U^{-1}$ are diagonal for all $i$ and $n$. So $(I \otimes U) H_n (I \otimes U^{-1})$ only has $I$ and $Z$ terms appearing on the second subsystem in its Pauli expansion, and as
\begin{align*}
U^{\otimes 2} H_n (U^{-1})^{\otimes 2} &= (U \otimes I)(I \otimes U) H_n (I \otimes U^{-1}) (U^{-1} \otimes I)\\
&= (U \otimes I) \left( M_1^{(n)} \otimes I + M_2^{(n)} \otimes Z \right) (U^{-1} \otimes I)\\
&= U M_1^{(n)} U^{-1} \otimes I + U M_2^{(n)} U^{-1} \otimes Z,
\end{align*}
for some matrices $M_1^{(n)}$, $M_2^{(n)}$ which are linear combinations of the $A_i^{(n)}$ matrices, we have that $U^{\otimes 2} H_n (U^{-1})^{\otimes 2}$ is diagonal.
\end{itemize}
\end{proof}

The same idea works for arbitrary $k$, though in part (3) we need the matrices to commute for any choice of a single qubit where they hit each other; we omit the proof. Also observe that characterisation (3) gives an efficient test for whether $\{H_n\}$ is simultaneously diagonalisable by a local unitary. Note that the diagonalising unitary $U$ is never unique: it suffices for $U$ to diagonalise the matrix $A$ in characterisation (2), and it is easy to see that if $U$ diagonalises $A$, then so does the matrix $XU$.

Yet another equivalent characterisation of this property\footnote{We would like to thank an anonymous referee for pointing this out.} is that a 2-qubit matrix $H$ is locally diagonalisable if and only if there exists a single-qubit matrix $A$, not proportional to $I$, such that $[A \otimes I, H] = [H, I \otimes A] = 0$. As the existence of such a matrix can be tested by solving a system of linear equations, this characterisation gives an alternative efficient test for local diagonalisability.


\section{Normal form for Hermitian matrices}
\label{app:normal}

In this appendix we prove the lemmas stated in Section~\ref{sec:normal}.

\begin{replem}{lem:rotate}
Let $H$ be a traceless 2-qubit Hermitian matrix and write
\[ H = \sum_{i,j=1}^3 M_{ij} \sigma^i \otimes \sigma^j + \sum_{k=1}^3 v_k \sigma^k \otimes I + w_k I \otimes \sigma^k. \]
Then, for any orthogonal matrix $R \in SO(3)$, there exists $U \in SU(2)$ such that
\[ U^{\otimes 2} H (U^\dag)^{\otimes 2} = \sum_{i,j=1}^3 (RMR^T)_{ij} \sigma^i \otimes \sigma^j + \sum_{k=1}^3 (Rv)_k \sigma^k \otimes I + (Rw)_k I \otimes \sigma^k. \]
\end{replem}

\begin{proof}
For each $R \in SO(3)$ we use the homomorphism between $SU(2)$ and $SO(3)$ to associate $R$ with $U \in SU(2)$ such that $U \sigma^i U^{\dag} = \sum_j R _{ji} \sigma^j$. Then
\begin{align*}
U^{\otimes 2} &H (U^\dag)^{\otimes 2} \\
 &= \sum_{i,j=1}^3 M_{ij} ( U \sigma^i U^{\dag}) \otimes ( U \sigma^j U^{\dag}) + \sum_{k=1}^3 v_k (U \sigma^k U^{\dag}) \otimes I + w_k I \otimes (U \sigma^k U^{\dag})\\
&= \sum_{i,j=1}^3 M_{ij} \left(\sum_{\ell=1}^3 R_{\ell i} \sigma^{\ell}\right) \otimes \left(\sum_{m=1}^3 R_{mj} \sigma^m\right) + \sum_{k=1}^3 v_k \left(\sum_{\ell=1}^3 R_{\ell k} \sigma^{\ell}\right) \otimes I + w_k I \otimes \left(\sum_{\ell=1}^3 R_{\ell k} \sigma^{\ell}\right)\\
&= \sum_{\ell,m=1}^3 \left(\sum_{i,j=1}^3 R_{\ell i} M_{ij} R_{mj} \right) \sigma^{\ell} \otimes \sigma^m + \sum_{\ell=1}^3 \left(\sum_{k=1}^3 R_{\ell k} v_k \right)\sigma^\ell \otimes I + \left( \sum_{k=1}^3 R_{\ell k} w_k \right)I \otimes \sigma^\ell \\
&= \sum_{i,j=1}^3 (RMR^T)_{ij} \sigma^i \otimes \sigma^j + \sum_{k=1}^3 (Rv)_k \sigma^k \otimes I + (Rw)_k I \otimes \sigma^k.
\end{align*}
\end{proof}

\begin{replem}{lem:normalform}
Let $H$ be a traceless 2-qubit Hermitian matrix. If $H$ is symmetric under exchanging the two qubits on which it acts, there exists $U \in SU(2)$ such that
\[ U^{\otimes 2} H (U^\dag)^{\otimes 2} = \sum_{i=1}^3 \alpha_i \sigma^i \otimes \sigma^i + \sum_{j=1}^3 \beta_j (\sigma^j \otimes I + I \otimes \sigma^j), \]
for some real coefficients $\alpha_i$, $\beta_j$. If $H$ is antisymmetric under this exchange, there exists $U \in SU(2)$ and $i \neq j$ such that
\[ U^{\otimes 2} H (U^\dag)^{\otimes 2} = \alpha (\sigma^i \otimes \sigma^j - \sigma^j \otimes \sigma^i) + \sum_{k=1}^3 \beta_k (\sigma^k \otimes I - I \otimes \sigma^k), \]
for some real coefficients $\alpha$, $\beta_k$.
\end{replem}

\begin{proof}
In the first case, if $H$ has this symmetry then it must be of the form
\[ H = \sum_{i,j=1}^3 M_{ij} \sigma^i \otimes \sigma^j + \sum_{k=1}^3 \beta_k (\sigma^k \otimes I + I \otimes \sigma^k), \]
where $M$ is a symmetric matrix. Thus $M$ can be diagonalised by an orthogonal matrix $O$, which implies that it is diagonalisable by a special orthogonal matrix $S$. Indeed, the former condition is equivalent to saying that the columns of $O$ are a basis of eigenvectors of $M(H)$; if $\det O = -1$, then we can simply permute the columns of $O$ to make $\det O=1$. By Lemma~\ref{lem:rotate}, this corresponds to conjugating $H$ by two copies of some unitary $U \in SU(2)$. In the case where $H$ is antisymmetric under exchanging the two qubits, it must be of the form
\[ H = \sum_{i,j=1}^3 M_{ij} \sigma^i \otimes \sigma^j + \sum_{k=1}^3 \beta_k (\sigma^k \otimes I - I \otimes \sigma^k), \]
where $M$ is skew-symmetric. By conjugating by an orthogonal matrix $O$ we can map $M$ to a matrix of the form
\[
\begin{pmatrix}
0 & -\alpha & 0\\
\alpha & 0 & 0\\
0 & 0 & 0
\end{pmatrix}
\]
for some real $\alpha$ (see e.g.~\cite{thompson88} for a proof). By permuting rows and columns, we can assume that $O \in SO(3)$ as before.
\end{proof}


\section{\texorpdfstring{$\qma$}{QMA}-hardness of special cases of \texorpdfstring{\shamlf}{S\nobreakdash-local Hamiltonian with local terms}}
\label{sec:qmalf}

In this appendix we prove $\qma$-hardness of the various special cases of \shamlf\ which are required to complete the proof of Proposition~\ref{prop:2shamlf}. The second-order perturbation theory that we need to use can be encapsulated as the following lemma. For example, this lemma allows us to generate an effective interaction of the form $A\otimes D$ by using interactions of the form $A\otimes B$ and $C\otimes D$.

\begin{lem}
\label{lem:gadget}
Consider Hamiltonians $H^{(1)} = \sum_i A^{(i)} \otimes B^{(i)}$,  $H^{(2)} = \sum_i C^{(i)} \otimes D^{(i)}$, and two orthogonal states $\ket{\psi},\ket{\psi^\perp} \in B(\C^2)$. Take an overall Hamiltonian $\helse$ such that $\|\helse\| = \Omega(1)$ and pick two qubits $a$, $c$ from those on which $\helse$ acts. Add an extra qubit $b$, and set
\[ \widetilde{H} = \helse + \sqrt{\Delta} H^{(1)}_{ab} + \sqrt{\Delta} H^{(2)}_{bc} + \Delta \proj{\psi}_b + L_a + M_c, \]
where $L$ and $M$ are local terms to be determined, and $\Delta = \delta^2 \|\helse\|^2$, for arbitrary $\delta>1$. This is illustrated by the following diagram:
\begin{center}
\begin{tikzpicture}[inner sep=0.5mm,xscale=5]
\node (a) at (0,0) [circle,fill=black,label=south:$a$,label=north:$L$] {};
\node (b) at (1,0) [circle,fill=black,label=south:$b$,label=north:$\Delta \proj{\psi}$] {};
\node (c) at (2,0) [circle,fill=black,label=south:$c$,label=north:$M$] {};
\draw (a) to node[midway,above] {$\sqrt{\Delta} \sum_i A^{(i)} \otimes B^{(i)}$} (b); \draw (b) to node[midway,above] {$\sqrt{\Delta} \sum_i C^{(i)} \otimes D^{(i)}$} (c);
\end{tikzpicture}
\end{center}
Define
\bes
\heff' = H_{\operatorname{else}} - 2 \sum_{i,j} \Re \left(\bracket{\psi^\perp}{B^{(i)}}{\psi} \bracket{\psi}{C^{(j)}}{\psi^\perp} \right) A^{(i)}_a D^{(j)}_c.
\ees
Then there exist efficiently computable $L$ and $M$ such that
\[ \|\widetilde{H}_{<\Delta/2} - \heff' \proj{\psi^\perp}_b\| = O(\delta^{-1}). \]
\end{lem}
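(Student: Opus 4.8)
The plan is to apply the second-order perturbation-theory Corollary~\ref{cor:perturb}, with unperturbed Hamiltonian $H = \Delta\proj{\psi}_b$ and perturbation $V = \helse + \sqrt{\Delta}\,H^{(1)}_{ab} + \sqrt{\Delta}\,H^{(2)}_{bc}$. Here $H = \Delta\Pi_+$ with $\Pi_+ = \proj{\psi}_b$ and $\Pi_- = \proj{\psi^\perp}_b$, so the low-energy subspace is exactly $\{$qubit $b$ in state $\ket{\psi^\perp}\}$, we have $H_- = 0$, and the self-energy is given by the simplified series~(\ref{eq:series2}). I take the cutoff $\lambda_\ast = \Delta/2$, separating the eigenvalue $0$ from $\Delta$ with a gap $\Theta(\Delta)$ (note that the symbol ``$\Delta$'' appearing in Corollary~\ref{cor:perturb} is a gap parameter, which here should be read as, say, $\Delta/4$, not the coefficient $\Delta$), and set $\heff$ to be the truncation of $\Sigma_-(0)$ after the second-order term, i.e.\ $\heff = V_- - V_{-+}V_{+-}/\Delta$. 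The goal is then to show $\heff = (\heff'+L)\Pi_-$ for an efficiently computable $1$-local $L$, and to bound the resulting error by $O(\delta^{-1})$.

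First I would compute the blocks of $V$. Since $\helse$ acts trivially on $b$ it commutes with $\Pi_\pm$, contributing $\helse$ to $V_-$ and nothing off-diagonal. Projecting $\sqrt{\Delta}\,H^{(1)}_{ab} = \sqrt{\Delta}\sum_i A^{(i)}_a\otimes B^{(i)}_b$ and $\sqrt{\Delta}\,H^{(2)}_{bc} = \sqrt{\Delta}\sum_i C^{(i)}_b\otimes D^{(i)}_c$ with $\Pi_\pm$ (which replace $B^{(i)}_b, C^{(i)}_b$ by their matrix elements between $\ket{\psi},\ket{\psi^\perp}$) gives
\[ V_- = \helse + \sqrt{\Delta}\sum_i \bracket{\psi^\perp}{B^{(i)}}{\psi^\perp} A^{(i)}_a + \sqrt{\Delta}\sum_i \bracket{\psi^\perp}{C^{(i)}}{\psi^\perp} D^{(i)}_c \]
on $\Pi_-$, whose two $\sqrt{\Delta}$-weighted pieces are $1$-local on $a$ and $c$; these go into $L$. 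Likewise $V_{-+} = \sqrt{\Delta}\sum_i\big(\bracket{\psi^\perp}{B^{(i)}}{\psi}A^{(i)}_a + \bracket{\psi^\perp}{C^{(i)}}{\psi}D^{(i)}_c\big)\otimes\ket{\psi^\perp}\bra{\psi}_b$ and $V_{+-} = V_{-+}^\dag$.

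Next I would form the second-order term. The $b$-factors collapse as $\ket{\psi^\perp}\bra{\psi}_b\,\ket{\psi}\bra{\psi^\perp}_b = \Pi_-$, so $-V_{-+}V_{+-}/\Delta$ splits into $1$-local ``self-energy'' pieces on $a$ (the $B$--$B$ products) and on $c$ (the $C$--$C$ products), which also go into $L$, plus the genuinely $2$-local $B$--$C$ and $C$--$B$ cross terms. Using that $A_a$ and $D_c$ commute and that $B^{(i)}, C^{(j)}$ are Hermitian (so that $\bracket{\psi}{B^{(i)}}{\psi^\perp} = \overline{\bracket{\psi^\perp}{B^{(i)}}{\psi}}$, and similarly for $C$), the two cross contributions combine into
\[ -2\sum_{i,j}\Re\big(\bracket{\psi^\perp}{B^{(i)}}{\psi}\bracket{\psi}{C^{(j)}}{\psi^\perp}\big)\,A^{(i)}_a D^{(j)}_c, \]
which is exactly the interaction appearing in $\heff'$. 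Collecting the $\helse$ part of $V_-$ with this cross term yields $\heff'$, and collecting the four $1$-local pieces yields $L$; all are efficiently computable from the data, establishing $\heff = (\heff'+L)\Pi_-$.

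Finally I would verify the hypotheses and bound the error. We have $\|V\|, \|V_+\|, \|V_{\pm\mp}\| = O(\sqrt{\Delta})$, hence $\|V_{-+}V_{+-}\| = O(\Delta)$ and the $n$-th order term of~(\ref{eq:series2}) is $O(\Delta^{1-n/2})$, so the third-order-and-higher tail is $O(\Delta^{-1/2})$; over the disc $|z|\le 2\|\heff\| = O(\sqrt{\Delta})$ the $z$-dependence of the second-order term, namely $V_{-+}V_{+-}\,z/(\Delta(z-\Delta))$ with $|z-\Delta| = \Omega(\Delta)$, is also $O(\Delta^{-1/2})$. Thus one may take $\epsilon = O(\Delta^{-1/2})$, and $\|V\|\le\Delta/2$, $\|\heff\| < \lambda_+/3 = \Theta(\Delta)$ and $\epsilon < \|\heff\|/2$ all hold for $\Delta$ large. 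Corollary~\ref{cor:perturb} then yields $\|\widetilde{H}_{<\Delta/2} - \heff\| \le 9\|V\|/\lambda_+ + 8\epsilon = O(\Delta^{-1/2})$, and substituting $\Delta = \delta^2\|\helse\|^{3/2}$ with $\|\helse\| = \Omega(1)$ converts this to $O(\delta^{-1})$. I expect the main obstacle to be the second-order block bookkeeping of the previous paragraph --- verifying that only the two advertised families of terms arise and that the cross-term coefficient collapses precisely to $-2\Re(\cdots)$ --- while the error estimate is routine, its only subtlety being that the $\sqrt{\Delta}$-weighted $1$-local terms inflate $\|\heff\|$ to $O(\sqrt{\Delta})$, forcing the perturbative bounds to be checked on a disc of radius $O(\sqrt{\Delta})$ rather than $O(1)$.
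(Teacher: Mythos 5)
Your proposal is correct and follows essentially the same route as the paper's proof: the same block decomposition of $V$ with respect to $\Pi_\pm = \proj{\psi^\perp}_b, \proj{\psi}_b$, the same collection of the $\sqrt{\Delta}$-weighted first-order and second-order self-energy pieces into $L$, the same collapse of the cross terms to $-2\sum_{i,j}\Re(\cdots)A^{(i)}_a D^{(j)}_c$, and the same application of Corollary~\ref{cor:perturb} with $\Delta = \delta^2\|\helse\|^{3/2}$ yielding the $O(\delta^{-1})$ error. If anything you are more explicit than the paper on two points it treats implicitly --- the overloading of the symbol $\Delta$ as both gadget coefficient and gap parameter in Corollary~\ref{cor:perturb}, and the need to absorb the $z$-dependence of the second-order term over a disc of radius $O(\sqrt{\Delta})$ into $\epsilon$ --- and your explicit Hermiticity assumption on the factors $B^{(i)}, C^{(j)}$ is likewise implicit in the paper and satisfied in all of its applications.
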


\begin{proof}
In the language of Corollary~\ref{cor:perturb}, we have $H = \Delta \proj{\psi}_b$, $V = H_{\operatorname{else}} + \sqrt{\Delta} H^{(1)}_{ab} + \sqrt{\Delta} H^{(2)}_{bc} + L_a + M_c$, so
\[ V_- = \left( H_{\operatorname{else}} + L_a + M_c + \sqrt{\Delta} \sum_i \bracket{\psi^\perp}{B^{(i)}}{\psi^\perp} A^{(i)}_a + \bracket{\psi^\perp}{C^{(i)}}{\psi^\perp} D^{(i)}_c \right) \proj{\psi^\perp}_b, \]
\[ V_{-+} = \left( \sqrt{\Delta} \sum_i \bracket{\psi^\perp}{B^{(i)}}{\psi} A^{(i)}_a + \bracket{\psi^\perp}{C^{(i)}}{\psi} D^{(i)}_c \right) \ket{\psi^\perp}\bra{\psi}_b, \]
\[ V_{+-} = \left( \sqrt{\Delta} \sum_i \bracket{\psi}{B^{(i)}}{\psi^\perp} A^{(i)}_a + \bracket{\psi}{C^{(i)}}{\psi^\perp} D^{(i)}_c \right) \ket{\psi}\bra{\psi^\perp}_b, \]
and hence
\begin{align*} V_{-+}V_{+-} = \Delta \Big(& \sum_{i,j} \bracket{\psi^\perp}{B^{(i)}}{\psi} \bracket{\psi}{B^{(j)}}{\psi^\perp} (A^{(i)} A^{(j)})_a \\
&+ \sum_{i,j} \bracket{\psi^\perp}{C^{(i)}}{\psi} \bracket{\psi}{C^{(j)}}{\psi^\perp} (D^{(i)} D^{(j)})_c\\
&+ 2 \sum_{i,j} \Re\left( \bracket{\psi^\perp}{B^{(i)}}{\psi} \bracket{\psi}{C^{(j)}}{\psi^\perp} \right) A^{(i)}_a D^{(j)}_c \Big) \proj{\psi^\perp}_b.
\end{align*}
%
Thus, by (\ref{eq:series2}), for some 1\nobreakdash-local term $L'$ on the $a$, $c$ systems (which depends on $\Delta$ and $z$),
\begin{align*} \Sigma_-(z) =& \Bigg( H_{\operatorname{else}} + L'_{ac} + L_a + M_c + \frac{2\Delta}{z-\Delta} \sum_{i,j} \Re\left( \bracket{\psi^\perp}{B^{(i)}}{\psi} \bracket{\psi}{C^{(j)}}{\psi^\perp} \right) A^{(i)}_a D^{(j)}_c \Bigg) \proj{\psi^\perp}_b\\
&+ O\left (\frac{\|V\|^3}{(z-\Delta)^2}\right).
\end{align*}
We now pick $L$ and $M$ such that $L_a + M_c = -L'_{ac}$ for $z=0$. Observe that $\|L\|, \|M\| = O(\sqrt{\Delta})$.  Identifying the first term with $\heff$ in Corollary~\ref{cor:perturb} and fixing $z=0$, we have $\lambda_+ = \Delta$, $\|\heff\| = O(\|\helse\|)$, $\|V\| = O(\|\helse\| + \sqrt{\Delta})$. Write
\[ \heff' = H_{\operatorname{else}} - 2 \sum_{i,j} \Re\left( \bracket{\psi^\perp}{B^{(i)}}{\psi} \bracket{\psi}{C^{(j)}}{\psi^\perp} \right) A^{(i)}_a D^{(j)}_c. \]
Then, by Corollary~\ref{cor:perturb},
%
%
\[ \|\widetilde{H}_{<\Delta/2} - \heff'\proj{\psi^\perp}_b\| = O\left( \frac{\|\helse\|(\|\helse\|+\sqrt{\Delta})}{\Delta} + \frac{(\|\helse\|+ \sqrt{\Delta})^3}{\Delta^2} \right). \]
Taking $\Delta = \delta^2 \|\helse\|^2$ for some $\delta>1$, we get $\|\widetilde{H}_{<\Delta/2} - \heff' \proj{\psi^\perp}_b\| = O(\delta^{-1})$.
\end{proof}

In particular, because of the simple product form of $\heff' \proj{\psi}_b^\perp$, the lowest eigenvalue of $\widetilde{H}$ is approximately equal to the lowest eigenvalue of $\heff'$. Observe that Lemma~\ref{lem:gadget} can be applied in series, but only a constant number of times, as each use of the lemma increases the norm of the Hamiltonian by a polynomial factor. However, the lemma can also be applied in parallel, i.e.\ different gadgets can be applied across an arbitrary number of distinct pairs of qubits, without changing the parameters at all.

Roughly speaking, this follows because the gadgets do not interfere with each other (to second order). This is justified more formally and generally in, for example,~\cite{oliveira08,bravyi08,cao14}; one can see it for the gadget used here as follows~\cite{piddock15}. Assume we are applying $k$ gadgets in parallel. Write the 1\nobreakdash-local term $\Delta \proj{\psi}$ within the $i$'th gadget as $H^{(i)}$ and the remaining terms in that gadget as $V^{(i)}$. Then the whole Hamiltonian can be written as $H+V$, where $H = \sum_i H^{(i)}$, $V = \sum_i V^{(i)}$. The effective Hamiltonian acts only on the ground space of $H$. Any state in this ground space can be written in the form $\ket{\psi_1}\dots\ket{\psi_k}\ket{\phi}$, where $\ket{\psi_i}$ is the ground state of $H^{(i)}$ (considered as a single-qubit matrix). To determine $\Sigma_-(z)$, in order to apply Corollary \ref{cor:perturb}, we need to compute $V_{-+} V_{+-}$ (see (\ref{eq:series})). As each $V^{(j)}$ only acts on the $j$'th mediator qubit and the non-mediator qubits, the $i$'th mediator qudit remains in the state $\ket{\psi_i}$ following the action of $V^{(j)}_{+-}$, implying that
\[ V_{-+} V_{+-} = \sum_{i,j} V^{(i)}_{-+} V^{(j)}_{+-} = \sum_i V^{(i)}_{-+} V^{(i)}_{+-}. \]
So the effective Hamiltonian simulated by $H+V$ is just the sum of the effective Hamiltonians simulated by $H^{(i)}$, $V^{(i)}$ separately.

Lemma~\ref{lem:gadget} allows us to prove $\qma$-hardness of a number of special cases of \shamlf. Given access to some set of interactions $\mathcal{S}$, we use the lemma to build a Hamiltonian containing additional interactions (up to an additive error $O(\delta^{-1})$, which we ignore for readability in what follows). If the new set of interactions $\mathcal{S}'$ corresponds to a $\qma$-complete problem {\sc $\mathcal{S}'$-Hamiltonian with local terms}, and we take $\delta = \poly(n)$, this implies that \shamlf\ is $\qma$-complete.

It was proven by Biamonte and Love~\cite{biamonte08} that {\sc 2\nobreakdash-local Hamiltonian} remains $\qma$-complete if the Hamiltonian only contains terms of the form $X$, $Z$, $XX$, $ZZ$. On the other hand, by giving a reduction from general 2\nobreakdash-local Hamiltonians, Oliveira and Terhal~\cite{oliveira08} showed that {\sc 2\nobreakdash-local Hamiltonian} is $\qma$-complete if the 2-body interactions are proportional to products of Pauli matrices and are restricted to the edges of a 2d square lattice. Their construction does not use any Y terms if they were not present already, so combining these results we get the following theorem.

\begin{thm}[Combination of Biamonte-Love~\cite{biamonte08} and Oliveira-Terhal~\cite{oliveira08}]
\label{thm:lattice}
{\sc 2\nobreakdash-local Hamiltonian} is $\qma$-complete, even if the Hamiltonian is of the form
\[ H = \sum_{(i,j) \in E} \alpha_{ij} A_i B_j + \sum_k C_k, \]
where $E$ is the set of edges of a 2-dimensional square lattice, $\alpha_{ij}$ are arbitrary real coefficients, each matrix $A_i$, $B_j$ is either $X$ or $Z$, and $C_k$ are arbitrary single-qubit Hermitian matrices.
\end{thm}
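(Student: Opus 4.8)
The plan is to obtain the theorem by \emph{composing} the two cited results in sequence, starting from the $\qma$-completeness of the general {\sc 2-local Hamiltonian} problem \cite{kempe06} and successively restricting first the Pauli \emph{type} of the two-body interactions, and then their \emph{geometry}. Since membership in $\qma$ is immediate (this is a special case of {\sc Local Hamiltonian}), the whole content is the $\qma$-hardness, which I would establish by chaining two reductions.

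First I would invoke the result of Biamonte and Love \cite{biamonte08}: {\sc 2-local Hamiltonian} remains $\qma$-complete when every term of $H$ is built only from $X$ and $Z$. Concretely, this reduces a general instance to one of the form
\[ H = \sum_{i<j} \sum_{A,B \in \{X,Z\}} \beta^{AB}_{ij}\, A_i B_j + \sum_k (\gamma_k X_k + \delta_k Z_k), \]
so that all two-body terms are products of $X$ and $Z$ and no $Y$ appears anywhere; the interaction graph is still completely arbitrary at this stage. Next I would apply the construction of Oliveira and Terhal \cite{oliveira08}, which takes any {\sc 2-local Hamiltonian} whose two-body terms are proportional to products of Pauli matrices and produces an equivalent instance in which the two-body interactions are supported only on the edges of a 2d square lattice, via their perturbative subdivision, fork and crossing gadgets (each analysed by the second-order perturbation theory encapsulated in Corollary~\ref{cor:perturb}). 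Feeding in the $X$--$Z$ Hamiltonian above, the output is exactly of the claimed form: its two-body terms are each $\alpha_{ij} A_i B_j$ with $A_i,B_j \in \{X,Z\}$ on lattice edges, and its one-body terms $C_k$ are arbitrary single-qubit Hermitian matrices. Allowing arbitrary $C_k$ is in fact what makes the composition clean, since the $1$-local self-energy corrections produced by the gadgets (and the mediator fields) need not be tracked in detail — only the two-body structure matters, and every residual single-qubit term can simply be absorbed into $C_k$.

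The one point that genuinely needs checking — and which I expect to be the main obstacle — is the \emph{compatibility} of the two steps: that the Oliveira--Terhal gadgets do not reintroduce $Y$ terms when the input contains none. I would verify this by inspecting each gadget. In the subdivision gadget a mediator qubit $w$ carries a heavy on-site field of $Z$-type (i.e.\ proportional to $\proj{1}_w = (I-Z_w)/2$) and couples to the two endpoints through operators of the form $(\kappa A_i - B_j)\otimes X_w$; since $A_i,B_j$ are inherited from the input and lie in $\{X,Z\}$, and the mediator coupling uses only $X_w$, no $Y$ is generated on any qubit. The fork and crossing gadgets are assembled from subdivision gadgets and so inherit the same $X$--$Z$ closure property.

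The remaining work is routine bookkeeping, handled exactly as in the original analyses of \cite{biamonte08,oliveira08}: tracking the weights $\alpha_{ij}$ and the accumulated single-qubit corrections, and controlling the $1/\poly(n)$ errors from each perturbative layer. Here one uses that each second-order gadget increases the norm by only a polynomial factor, so a constant number of gadget \emph{rounds} applied in series is affordable, while the lattice embedding itself is carried out in parallel across polynomially many mediator qubits without the gadgets interfering to leading order. Composing the three reductions — base $\qma$-completeness, the $X$--$Z$ restriction, and the lattice embedding — then yields $\qma$-completeness of the stated lattice form.
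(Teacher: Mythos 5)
Your proposal is correct and takes essentially the same route as the paper: Theorem~\ref{thm:lattice} is obtained there precisely by composing the Biamonte--Love $X$--$Z$ restriction with the Oliveira--Terhal lattice reduction, absorbing all residual $1$-local terms into the arbitrary $C_k$. Your gadget-by-gadget verification of $X$--$Z$ closure is exactly the compatibility point the paper disposes of in one line (``their construction does not require the use of $Y$ terms'').
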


Following the strategy of Schuch and Verstraete~\cite{schuch09}, we will apply a sequence of perturbative gadgets to produce arbitrary interactions in the set $\{XX, XZ, ZX, ZZ\}$ from our allowed interaction terms. This will eventually give a Hamiltonian where all 2\nobreakdash-local interactions are of the form $\alpha XX + \beta YY + \gamma ZZ$ and take place on a sparse 2d square lattice, i.e.\ a square lattice with some vertices missing. If desired, heavily weighted local terms can then be used to effectively produce these gaps from a full lattice with equally weighted interactions across each connected pair of qubits (a similar idea was used in~\cite{schuch09}). To see this, observe that by Corollary \ref{cor:zerothorder}, using the gadget
\begin{center}
\begin{tikzpicture}[inner sep=0.5mm,xscale=6]
\node (a) at (0,0) [circle,fill=black,label=south:$a$] {};
\node (b) at (1,0) [circle,fill=black,label=south:$b$,label=north:$\Delta\proj{1}$] {};
\draw (a) to node[midway,above] {$\alpha XX + \beta YY + \gamma ZZ$} (b);
\end{tikzpicture}
\end{center}
for large enough $\Delta$ forces qubit $b$ into the state $\ket{0}$, decoupling it from qubit $a$. A side-effect of the gadget is the application of a new effective term $\alpha \bracket{0}{X}{0} X + \beta \bracket{0}{Y}{0}Y + \gamma \bracket{0}{Z}{0}Z$ to qubit $a$, but this can then be corrected using our freedom to apply arbitrary 1\nobreakdash-local terms. We can apply this gadget simultaneously to all qubits which we would like to remove from the lattice, correcting their neighbours afterwards.

The following lemma is a useful starting point.

\begin{replem}{lem:xxandzzlf}
Let $\alpha$ and $\beta$ be arbitrary fixed non-zero real numbers. {\sc $\{XX,ZZ\}$-Hamiltonian with local terms} is $\qma$-complete, even if all 2-qubit interactions are restricted to the edges of a 2d square lattice, the $XX$ terms all have weight $\alpha$, and the $ZZ$ terms all have weight $\beta$.
\end{replem}

\begin{proof}
We will show that we can generate effective $XX$ and $XZ$ terms, with arbitrary weights, ultimately using only $XX$ and $ZZ$ terms with fixed weights (producing $ZZ$ and $ZX$ terms can be done in the same way by relabelling Paulis). In both cases, we perform the reductive steps carefully: first to ensure that in either case we end up with the same number of new edges in the interaction graph (so it is a subgraph of a 2d square lattice), and second so that all of the edges are equally weighted.

Note that we can rescale all the interactions by an arbitrary fixed (polynomially large) coefficient without changing the complexity of the problem. In what follows, $\Delta$ and $\Delta'$ denote coefficients of this form. This allows us to assume in the proof that $|\alpha|$ and $|\beta|$ can be chosen arbitrarily large (but still fixed); we can always rescale them at the very end to the desired values.

First, using the $XX$ and $ZZ$ terms with fixed weights we can make an effective $XZ$ term with an arbitrary weight via the following gadget, where $\ket{\psi} = \cos\theta\ket{0} + \sin\theta\ket{1}$:
\begin{center}
\begin{tikzpicture}[inner sep=0.5mm,xscale=4]
\node (a) at (0,0) [circle,fill=black,label=south:$a$] {};
\node (b) at (1,0) [circle,fill=black,label=south:$b$,label=north:$\Delta\proj{\psi}$] {};
\node (c) at (2,0) [circle,fill=black,label=south:$c$] {};
\draw (a) to node[midway,above] {$\alpha \sqrt{\Delta} XX$} (b); \draw (b) to node[midway,above] {$\beta \sqrt{\Delta} ZZ$} (c);
\end{tikzpicture}
\end{center}
According to Lemma~\ref{lem:gadget}, up to local terms (and inverse-polynomially small corrections), we get the effective Hamiltonian
\[ \heff = \helse - 4\alpha \beta \cos\theta \sin\theta(\sin^2 \theta - \cos^2\theta) X_a Z_c = \helse + \alpha \beta \sin(4\theta) X_a Z_c. \]
Thus, if $|\alpha|$ and $|\beta|$ are large enough, by tuning $\theta$, we can make an arbitrarily weighted $XZ$ term, even though the $XX$ terms all have weight $\alpha$ and the $ZZ$ terms all have weight $\beta$. On the other hand, using a gadget of the form
\begin{center}
\begin{tikzpicture}[inner sep=0.5mm,xscale=4]
\node (a) at (0,0) [circle,fill=black,label=south:$a$] {};
\node (b) at (1,0) [circle,fill=black,label=south:$b$,label=north:$\Delta\proj{0}$] {};
\node (c) at (2,0) [circle,fill=black,label=south:$c$] {};
\draw (a) to node[midway,above] {$\alpha \sqrt{\Delta}XX$} (b); \draw (b) to node[midway,above] {$\alpha \sqrt{\Delta} XX$} (c);
\end{tikzpicture}
\end{center}
we make the effective Hamiltonian
\[ \heff = \helse - \alpha^2 X_a X_c. \]
(This may seem unnecessary, as we already had access to the interaction $XX$ -- but we do it in order to make the number of reductive steps the same in all the cases we are considering, to produce a regular lattice.) Similarly, using a gadget with terms of the form $\beta \sqrt{\Delta} ZZ$ we make an effective Hamiltonian $\heff = \helse - \beta^2 Z_a Z_c$. We combine the effective interactions produced by these different gadgets in the following ways:
\begin{center}
\begin{tikzpicture}[inner sep=0.5mm,xscale=4]
\node (a) at (0,0) [circle,fill=black,label=south:$a$] {};
\node (b) at (1,0) [circle,fill=black,label=south:$b$,label=north:$\Delta'\proj{\psi}$] {};
\node (c) at (2,0) [circle,fill=black,label=south:$c$] {};
\draw (a) to node[midway,above] {$\gamma\sqrt{\Delta'} XZ$} (b); \draw (b) to node[midway,above] {$-\alpha^2\sqrt{\Delta'} XX$} (c);
\end{tikzpicture}
\end{center}
\begin{center}
\begin{tikzpicture}[inner sep=0.5mm,xscale=4]
\node (a) at (0,0) [circle,fill=black,label=south:$a$] {};
\node (b) at (1,0) [circle,fill=black,label=south:$b$,label=north:$\Delta'\proj{\psi}$] {};
\node (c) at (2,0) [circle,fill=black,label=south:$c$] {};
\draw (a) to node[midway,above] {$-\alpha^2\sqrt{\Delta'} XX$} (b); \draw (b) to node[midway,above] {$-\beta^2\sqrt{\Delta'} ZZ$} (c);
\end{tikzpicture}
\end{center}
where $\ket{\psi}$ is as before (but the angle $\theta$ can vary throughout) and $\gamma$ is an arbitrary weight produced by the initial gadget above. The first of these produces the effective Hamiltonian
\[ \heff = \helse - \alpha^2 \gamma \sin(4\theta)X_a X_c, \]
while the second produces the effective Hamiltonian
\[ \heff = \helse + \alpha^2\beta^2 \sin(4\theta) X_a Z_c. \]
As above, although we already have access to $XZ$, we do this to ensure the same number of reductive steps are used in both cases. Tuning $\gamma$ and $\theta$ appropriately allows us to produce effective arbitrarily weighted $XX$ and $XZ$ interactions; $ZZ$ interactions can be made in the same way as $XX$, by relabelling. The claim then follows from Theorem~\ref{thm:lattice}.
\end{proof}

\begin{replem}{lem:xxzzhamlf}
For any fixed $\gamma \neq 0$, {\sc $\{XX + \gamma ZZ\}$-Hamiltonian with local terms} is $\qma$-complete. This holds even if all 2-qubit interactions have the same weight and are restricted to the edges of a 2d square lattice.
\end{replem}

\begin{proof}
We use the following perturbative gadget:
\begin{center}
\begin{tikzpicture}[inner sep=0.5mm,xscale=5]
\node (a) at (0,0) [circle,fill=black,label=south:$a$] {};
\node (b) at (1,0) [circle,fill=black,label=south:$b$,label=north:$\Delta\proj{1}$] {};
\node (c) at (2,0) [circle,fill=black,label=south:$c$] {};
\draw (a) to node[midway,above] {$\sqrt{\Delta}(XX + \gamma ZZ)$} (b); \draw (b) to node[midway,above] {$\sqrt{\Delta}(XX+\gamma ZZ)$} (c);
\end{tikzpicture}
\end{center}
According to Lemma~\ref{lem:gadget}, up to local terms and small corrections,
\[ \heff = \helse - 2 X_a X_c. \]
Thus, given access to terms of the form $XX + \gamma ZZ$, we can make $XX$ terms. Similarly, using the gadget
\begin{center}
\begin{tikzpicture}[inner sep=0.5mm,xscale=5]
\node (a) at (0,0) [circle,fill=black,label=south:$a$] {};
\node (b) at (1,0) [circle,fill=black,label=south:$b$,label=north:$\Delta\proj{-}$] {};
\node (c) at (2,0) [circle,fill=black,label=south:$c$] {};
\draw (a) to node[midway,above] {$\sqrt{\Delta}(XX + \gamma ZZ)$} (b); \draw (b) to node[midway,above] {$\sqrt{\Delta}(XX+\gamma ZZ)$} (c);
\end{tikzpicture}
\end{center}
we can produce an effective Hamiltonian
\[ \heff = \helse - 2 \gamma^2 Z_a Z_c. \]
As $\gamma \neq 0$, and Lemma~\ref{lem:xxandzzlf} holds for \emph{arbitrary} non-zero $\alpha$ and $\beta$, the claim follows.
\end{proof}

\begin{replem}{lem:xxyyzzhamlf}
For any fixed $\beta,\gamma \neq 0$, {\sc $\{XX + \beta YY + \gamma ZZ\}$-Hamiltonian with local terms} is $\qma$-complete. This holds even if all 2-qubit interactions have the same weight and are restricted to the edges of a 2d square lattice.
\end{replem}

\begin{proof}
Set $\ket{\psi} = \frac{1}{\sqrt{2}}(\ket{0} - i \ket{1})$ and use the following perturbative gadget:
\begin{center}
\begin{tikzpicture}[inner sep=0.5mm,xscale=6]
\node (a) at (0,0) [circle,fill=black,label=south:$a$] {};
\node (b) at (1,0) [circle,fill=black,label=south:$b$,label=north:$\Delta\proj{\psi}$] {};
\node (c) at (2,0) [circle,fill=black,label=south:$c$] {};
\draw (a) to node[midway,above] {$\sqrt{\Delta}(XX + \beta YY + \gamma ZZ)$} (b); \draw (b) to node[midway,above] {$\sqrt{\Delta}(XX+\beta YY +\gamma ZZ)$} (c);
\end{tikzpicture}
\end{center}
According to Lemma~\ref{lem:gadget}, up to local terms,
\[ \heff = \helse - 2 \left(XX + \gamma^2 ZZ\right). \]
We can therefore make terms of the form $XX + \gamma^2 ZZ$ for some $\gamma \neq 0$, so {\sc $\{XX + \gamma ZZ\}$-Hamiltonian with local terms} reduces to {\sc $\{XX + \beta YY + \gamma ZZ\}$-Hamiltonian with local terms}. By Lemma~\ref{lem:xxzzhamlf}, {\sc $\{XX + \beta YY + \gamma ZZ\}$-Hamiltonian with local terms} is $\qma$-complete.
\end{proof}

\begin{replem}{lem:xzskewlf}
{\sc $\{XZ - ZX\}$-Hamiltonian with local terms} is $\qma$-complete. This holds even if all 2-qubit interactions have the same weight and are restricted to the edges of a 2d square lattice.
\end{replem}

\begin{proof}
For this reduction, we use two perturbative gadgets. The first is
\begin{center}
\begin{tikzpicture}[inner sep=0.5mm,xscale=5]
\node (a) at (0,0) [circle,fill=black,label=south:$a$] {};
\node (b) at (1,0) [circle,fill=black,label=south:$b$,label=north:$\Delta\proj{1}$] {};
\node (c) at (2,0) [circle,fill=black,label=south:$c$] {};
\draw (a) to node[midway,above] {$\sqrt{\Delta}(XZ - ZX)$} (b); \draw (b) to node[midway,above] {$\sqrt{\Delta}(XZ-ZX)$} (c);
\end{tikzpicture}
\end{center}
By Lemma~\ref{lem:gadget}, up to local terms, this produces $\heff = \helse + 2 XX$. Similarly, if we set $\ket{+} = \frac{1}{\sqrt{2}}(\ket{0}+\ket{1})$ and use the gadget
\begin{center}
\begin{tikzpicture}[inner sep=0.5mm,xscale=5]
\node (a) at (0,0) [circle,fill=black,label=south:$a$] {};
\node (b) at (1,0) [circle,fill=black,label=south:$b$,label=north:$\Delta\proj{+}$] {};
\node (c) at (2,0) [circle,fill=black,label=south:$c$] {};
\draw (a) to node[midway,above] {$\sqrt{\Delta}(XZ - ZX)$} (b); \draw (b) to node[midway,above] {$\sqrt{\Delta}(XZ-ZX)$} (c);
\end{tikzpicture}
\end{center}
we can effectively produce $\heff = \helse + 2 ZZ$. The claim follows from Lemma~\ref{lem:xxandzzlf}.
\end{proof}


\section{The Lieb-Mattis model}
\label{app:liebmattis}

In this appendix we prove Lemma~\ref{lem:liebmattis}, which characterises the ground state of the Lieb-Mattis model. In order to do this, we first need to understand the Heisenberg model on the complete graph.

\begin{lem}
\label{lem:complete}
Let
\[ H_C := \sum_{i<j\in [n]} X_i X_j + Y_i Y_j + Z_i Z_j \]
be the Hamiltonian corresponding to the Heisenberg model on the complete graph on $n$ vertices. Then $H_C$ has eigenvalues 
$2s(s+1) - 3n/4$, where $s \in \{0,1,\dots,n/2\}$ if $n$ is even, and $s \in \{1/2,3/2,\dots,n/2\}$ if $n$ is odd.
\end{lem}

We prove Lemma~\ref{lem:complete} using the beautiful theory of spin. Lemma~\ref{lem:complete} is well-known in the condensed-matter literature, although often stated differently, and the proof technique is standard undergraduate quantum mechanics (see for example~\cite{ballentine98,sakurai11,hannabuss97}). However, it may not be familiar to computer scientists and we therefore present a completely self-contained (albeit also completely standard) proof.


For arbitrary $n$, define the matrices
\[ S^x := \frac{1}{2} \sum_{i=1}^n X_i,\;\; S^y := \frac{1}{2} \sum_{i=1}^n Y_i, \;\; S^z := \frac{1}{2} \sum_{i=1}^n Z_i,\;\; S^2 := (S^x)^2 + (S^y)^2 + (S^z)^2. \]
It can be checked that each of the operators $S^x$, $S^y$, $S^z$ commutes with $S^2$ (although not with each other), so we can find a basis consisting of simultaneous eigenvectors of $S^z$ and $S^2$. We will show the following.

\begin{lem}
\label{lem:spin}
Let the simultaneous eigenvalues of $S^2$, $S^z$ be indexed by pairs $(\lambda,m)$. Then:
\begin{enumerate}
\item $\lambda = s(s+1)$, where $s \in \{0,1,\dots,n/2\}$ if $n$ is even, and $s \in \{1/2,3/2,\dots,n/2\}$ if $n$ is odd;
\item $m$ is an integer multiple of $1/2$ satisfying $|m| \le s$.
\end{enumerate}
\end{lem}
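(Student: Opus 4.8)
The plan is to prove Lemma~\ref{lem:spin} by the standard highest-weight (ladder-operator) argument from the theory of angular momentum, using only the commutation relations already noted above. First I would record the relations $[S^x,S^y]=iS^z$ (and cyclic), which follow from $[X,Y]=2iZ$ together with the fact that operators on distinct qubits commute, and the fact (stated above) that $S^2$ commutes with each of $S^x,S^y,S^z$. Working in the promised simultaneous eigenbasis of $S^2$ and $S^z$, I would then introduce the raising and lowering operators $S^\pm := S^x \pm i S^y$, which satisfy $(S^\pm)^\dagger = S^\mp$, $[S^z,S^\pm]=\pm S^\pm$, and the key identity $S^\mp S^\pm = S^2 - (S^z)^2 \mp S^z$ (verified directly from $[S^x,S^y]=iS^z$).

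Next I would analyse how $S^\pm$ act on an eigenvector $\ket{\lambda,m}$. The relation $[S^z, S^\pm] = \pm S^\pm$ shows $S^z (S^\pm\ket{\lambda,m}) = (m\pm 1)(S^\pm\ket{\lambda,m})$, while $S^\pm$ commutes with $S^2$, so $S^\pm\ket{\lambda,m}$ is either zero or a simultaneous eigenvector with labels $(\lambda, m\pm 1)$: the operators move $m$ up or down in unit steps within a fixed $\lambda$. Since $S^2 - (S^z)^2 = (S^x)^2 + (S^y)^2$ is positive semidefinite, we get $\lambda - m^2 = \bracket{\lambda,m}{(S^x)^2 + (S^y)^2}{\lambda,m} \ge 0$, so for fixed $\lambda$ the value of $m$ is bounded and the ladder must terminate at a top state with $S^+\ket{\lambda,m_{\max}} = 0$ and a bottom state with $S^-\ket{\lambda,m_{\min}} = 0$. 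Applying $S^- S^+ = S^2 - (S^z)^2 - S^z$ to the top state gives $\lambda = m_{\max}(m_{\max}+1)$, and $S^+ S^- = S^2 - (S^z)^2 + S^z$ on the bottom gives $\lambda = m_{\min}(m_{\min}-1)$; writing $s := m_{\max}$ then yields $\lambda = s(s+1)$ and forces $m_{\min} = -s$.

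Finally, since the bottom is reached from the top by applying $S^-$ some integer number of times, $m_{\max} - m_{\min} = 2s$ is a non-negative integer, so every $m$ in the multiplet differs from $s$ by an integer and satisfies $|m| \le s$; this gives part~(2). To pin down which values of $s$ occur for $n$ qubits I would compute the spectrum of $S^z$ directly on computational basis states: $S^z\ket{x} = \tfrac{1}{2}(n - 2|x|)\ket{x}$, so the eigenvalues of $S^z$ are exactly $\{-n/2, -n/2+1, \dots, n/2\}$, which are integers when $n$ is even and half-odd-integers when $n$ is odd. Since $s = m_{\max}$ is itself an eigenvalue of $S^z$, it inherits this parity and satisfies $s \le n/2$, yielding $s \in \{0,1,\dots,n/2\}$ for $n$ even and $s \in \{1/2,3/2,\dots,n/2\}$ for $n$ odd, which is part~(1). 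I expect the only genuinely delicate point to be the termination-of-the-ladder step: one must argue, from the bound $m^2 \le \lambda$ together with finite-dimensionality of the space (only finitely many $m$ can occur), that highest- and lowest-weight states genuinely exist rather than simply assuming it; once that is secured, everything downstream is forced algebraically.
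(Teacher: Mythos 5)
Your proposal is correct and takes essentially the same route as the paper's proof: the standard ladder-operator argument via $[S^z,S^\pm]=\pm S^\pm$ and $S^\mp S^\pm = S^2-(S^z)^2\mp S^z$, termination of the ladder by boundedness, and extraction of $\lambda = s(s+1)$ from the highest- and lowest-weight states. The one small divergence is the final step: the paper obtains $s \le n/2$ by counting how many applications of $S^+$ annihilate states of fixed Hamming weight (leaving the parity of $s$ somewhat implicit), whereas you observe that $s = m_{\max}$ is itself an eigenvalue of $S^z$, whose explicit spectrum $\{(n-2k)/2 : 0 \le k \le n\}$ immediately yields both $s \le n/2$ and the correct integer/half-odd-integer parity --- a slightly cleaner route to part (1).
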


Before proving this lemma, we observe that it implies Lemma~\ref{lem:complete} via
\[ S^2 = \frac{1}{4} \sum_{i,j\in [n]} X_i X_j + Y_i Y_j + Z_i Z_j = \frac{3n}{4} I + \frac{1}{2} \sum_{i<j\in [n]} X_i X_j + Y_i Y_j + Z_i Z_j = \frac{1}{2} H_C + \frac{3n}{4} I. \]

\begin{proof}
Define the ladder operators
\[ S^+ := S^x + i S^y = \sum_{i=1}^n \ket{1}\bra{0}_i,\;\; S^- := S^x - i S^y = \sum_{i=1}^n \ket{0}\bra{1}_i. \]
We summarise several useful identities involving these operators:
\begin{enumerate}[(i)]
\item $[S^x,S^y] = i S^z$, $[S^y,S^z] = i S^x$, $[S^z,S^x] = i S^y$;
\item $S^{\mp}S^{\pm} = S^2 - (S^z)^2 \mp S^z$;
\item $[S^z,S^{\pm}] = \pm S^{\pm}$.
\end{enumerate}
In these identities, $[\cdot,\cdot]$ is the commutator, $[A,B]:=AB-BA$. The proofs: for (i), we have
\[ [S^x,S^y] = S^x S^y - S^y S^x = \frac{1}{4} \sum_{i,j=1}^n X_i Y_j - Y_j X_i = \frac{1}{4} \sum_{i=1}^n X_i Y_i - Y_i X_i = i S^z, \]
and the other two cases are similar; for (ii), we have
\begin{align*}
S^{\mp}S^{\pm} &= (S^x \mp i S^y)(S^x \pm i S^y)\\
&= (S^x)^2 + (S^y)^2 \pm i S^x S^y \mp i S^y S^x\\
&=  (S^x)^2 + (S^y)^2 \pm i [S^x,S^y] \\
&=  S^2 - (S^z)^2 \mp S^z,
\end{align*}
and for (iii),
\[ [S^z,S^{\pm}] = [S^z,S^x \pm i S^y] = [S^z,S^x] \pm i [S^z,S^y] = i S^y \pm S^x = \pm S^{\pm}. \]
Let $\ket{\psi}$ be a common eigenvector of $S^2$ and $S^z$ such that
\[ S^2 \ket{\psi} = \lambda\ket{\psi},\;\; S^z \ket{\psi} = m\ket{\psi} \]
for some $\lambda$ and $m$. Then we claim that
\[ \| S^{\pm} \ket{\psi} \|^2 = (\lambda-m(m\pm 1))\|\ket{\psi}\|^2. \]
Indeed, by identity (ii) above we have
\[ \| S^{\pm} \ket{\psi} \|^2 = \bracket{\psi}{S^{\mp}S^{\pm}}{\psi} = \bracket{\psi}{\left(S^2 - (S^z)^2 \mp S^z\right)}{\psi} = (\lambda-m(m\pm 1))\|\ket{\psi}\|^2. \]
%
As $\| S^{\pm} \ket{\psi} \|$ is always non-negative, this implies that $\lambda-m(m\pm 1) \ge 0$, with equality if and only if $S^{\pm} \ket{\psi} = 0$. If $m \ge 0$, this gives $\lambda \ge m(m+1)$, while if $m \le 0$ this gives $\lambda \ge m(m-1)$. We have
\[ S^z S^{\pm} \ket{\psi} = ([S^z, S^{\pm}] + S^{\pm} S^z)\ket{\psi} = (\pm S^{\pm} + S^{\pm}S^z) \ket{\psi} = (m\pm 1)S^{\pm}\ket{\psi} \]
by identity (iii) so, for any $k$, $(S^{\pm})^k \ket{\psi}$ is an eigenvector of $S^z$ with eigenvalue $m \pm k$. This implies that there exist integers $p,q \ge 0$ such that $(S^+)^{p+1}\ket{\psi} = 0$, but $(S^+)^p\ket{\psi} \neq 0$, and $(S^-)^{q+1}\ket{\psi} = 0$, but $(S^-)^q\ket{\psi} \neq 0$. By the analysis of the case of equality, we have $\lambda = (m+p)(m+p+1)$ and also $\lambda = (m-q)(m - q - 1)$. So
\[ (m+p)(m+p+1) = (m-q)(m-q-1), \]
implying $m = (q-p)/2$, i.e.\ is an integer multiple of $1/2$. If we set $s = (q+p)/2 = m+p$, then $\lambda = s(s+1)$, where $s$ is a positive integer multiple of $1/2$. By the inequalities relating $\lambda$ and $m$, this implies $|m| \le s$ (claim (2) of the lemma). Also, $p \le n/2-m$, because eigenvectors of $S^z$ with eigenvalue $m$ correspond to states with Hamming weight $n/2+m$ in the computational basis, which are all zeroed by at most $n/2-m$ applications of $S^+$. This implies $s \le n/2$, proving claim (1) of the lemma. Observe that we can construct eigenvectors of $S^z$ and $S^2$ with different values of $m$ by applying $S^{\pm}$ to $\ket{\psi}$ (note that this does not affect $\lambda$!) to obtain anything in the range $\{m-q,\dots,m+p\} = \{-s,\dots,s\}$. We know that there must exist some starting vector $\ket{\psi}$ from general arguments.
\end{proof}

We now use Lemma~\ref{lem:complete} to study the Lieb-Mattis model -- the Heisenberg model on a complete bipartite graph of size $2n$. Define the symmetric, Hamming-weight $k$ state
\[ \ket{\psi^n_k} := \frac{1}{\sqrt{\binom{n}{k}}} \sum_{x \in \{0,1\}^n, |x|=k} \ket{x}. \]
Then the following lemma combines results stated elsewhere in the literature (particularly~\cite{lieb62}, but also e.g.~\cite{vidal07}).

\begin{replem}{lem:liebmattis}
Write
\[ H_{LM} = \sum_{i=1}^n \sum_{j=n+1}^{2n} X_i X_j + Y_i Y_j + Z_i Z_j. \]
Then the ground state of $H_{LM}$ is unique and given by
\[ \ket{\phi_{LM}} := \frac{1}{\sqrt{n+1}} \sum_{k=0}^n (-1)^k \ket{\psi^n_k}\ket{\psi^n_{n-k}}. \]
For $i$ and $j$ such that $1 \le i,j \le n$ or $n+1 \le i,j \le 2n$, $\bracket{\phi_{LM}}{F_{ij}}{\phi_{LM}} = 1$. Otherwise, $\bracket{\phi_{LM}}{F_{ij}}{\phi_{LM}} = -2/n$.
\end{replem}

\begin{proof}
We first verify that $\ket{\phi_{LM}}$ is indeed an eigenvector of $H$. We use
\[ H_{LM} =  \sum_{i=1}^n \sum_{j=n+1}^{2n} (2F-I)_{ij} = -n^2 I + 2\sum_{i=1}^n \sum_{j=n+1}^{2n} F_{ij} =: -n^2 I + 2 H_F \]
and compute
\[ H_F \ket{\phi_{LM}} = \frac{1}{\sqrt{n+1}} \sum_{i,j=1}^n \sum_{k=0}^n (-1)^k F_{ij} \ket{\psi^n_k}\ket{\psi^n_{n-k}}. \]
After some tedious algebra, we get
\[ H_F \ket{\phi_{LM}} = -n \ket{\phi_{LM}}, \]
so $\ket{\phi_{LM}}$ is an eigenvector of $H_{LM}$ with eigenvalue $-n(n+2)$. We now give a matching lower bound, thus proving that every ground state has energy at least this large.
Define the following operators, where for conciseness we write $M := XX + YY + ZZ$:
\[ S = \sum_{i<j=1}^{2n} M_{ij},\;\;\;\; S_A = \sum_{i<j=1}^{n} M_{ij},\;\;\;\; S_B = \sum_{i< j=n+1}^{2n} M_{ij},\;\;\;\; S^z = \sum_{i=1}^{2n} Z_i. \]
Then the set of operators $\{H_{LM}, S, S_A, S_B, S^z\}$ commutes pairwise, and we have
\[ H_{LM} = S - S_A - S_B. \]
Because of these two facts, and as (by Lemma~\ref{lem:complete}) $S$ has eigenvalues in the set $2s(s+1) - 3n/2$, where $s \in \{0,1,\dots,n\}$,
and $S_A$, $S_B$ have eigenvalues in the set $2s(s+1) - 3n/4$, where $s \in \{0,1,\dots,n/2\}$ if $n$ is even, and $s \in \{1/2,3/2,\dots,n/2\}$ if $n$ is odd, the eigenvalues of $H_{LM}$ must be in the set
%
\[ \{ 2(s(s+1) - t(t+1) - u(u+1)): s \in \{0,1,\dots,n\}, t,u \in \{0,1,\dots,n/2\}\} \]
if $n$ is even, and
\[ \{ 2(s(s+1) - t(t+1) - u(u+1)): s \in \{0,1,\dots,n\}, t,u \in \{1/2,3/2,\dots,n/2\}\} \]
if $n$ is odd. In either case, this is clearly minimised by taking $s=0$, $t = u = n/2$, implying a lower bound on the smallest eigenvalue of $-n(n+2)$, which we have already seen can be achieved.


We still need to prove uniqueness of this ground state. By commutativity, we can find a common set of eigenvectors of $H_{LM}$ and $S^z$. Eigenvectors of $S^z$ are given by states $\ket{\psi} = \sum_{x \in \{0,1\}^{2n}} \alpha_x \ket{x}$ such that, for all $x$ such that $\alpha_x \neq 0$, $|x|=n-m$ for some fixed $m$ such that $-n \le m \le n$. (``$\ket{\psi}$ has total spin $m$ in the Z direction''.) $\ket{\phi_{LM}}$ is of this form with $m=0$; we now show that there are no other ground states with $m=0$.
%
 If we conjugate $H_{LM}$ by $Z$ matrices on the first $n$ qubits, we get
\[ H'_{LM} = \sum_{i=1}^n \sum_{j=n+1}^{2n} -X_i X_j - Y_i Y_j + Z_i Z_j = -2 \sum_{i=1}^n \sum_{j=n+1}^{2n} \left( (\ket{01}+\ket{10})(\bra{01}+\bra{10})_{ij} - I\right). \]
Ignoring the identity terms and rescaling, this is the negation of a matrix whose entries are all non-negative, and which is irreducible when restricted to a subspace of vectors with fixed Hamming weight. That is, thinking of $-H'_{LM}$ as the adjacency matrix of an undirected graph, there is a path from any vector of weight $k$ to any other vector of weight $k$, for all $k$. By the Perron-Frobenius theorem, this implies that the principal eigenvector of the matrix equal to $-H'_{LM}$, restricted to this subspace, has strictly positive entries everywhere. This in turn implies that there is only one such vector on this subspace (as two vectors of this form could not be orthogonal). Therefore, $H_{LM}$ can only have at most one ground state on each such subspace.

Finally, we need to show that if $m\neq 0$, there are no other ground states. It suffices to show that such states cannot be ground states of $S$. But this follows from Lemma~\ref{lem:spin}, because eigenvalues $\lambda$ of $S^2$ satisfy $\lambda = s(s+1)$, where $s \ge |m|$.

For the second part, it is immediate that $\ket{\phi_{LM}}$ is left unchanged by a flip of two indices which both belong either to the first or second block, so $\bracket{\phi_{LM}}{F_{ij}}{\phi_{LM}} = 1$. 
%
 Further tedious algebra suffices to compute $\bracket{\phi_{LM}}{F_{ij}}{\phi_{LM}} = -2/n$.
\end{proof}


\bibliographystyle{plain}
\bibliography{lham}

\end{document}